\newtheorem{lemma}{Lemma}
\newtheorem{remark}{Remark}
\title{Inference in Stochastic Epidemic Models via Multinomial Approximations}
\author{Nick Whiteley}
\author{Lorenzo Rimella}%
\affil{School of Mathematics, University of Bristol, and the Alan Turing Institute}
\begin{document}

\maketitle

\begin{abstract}
	We introduce a new method for inference in stochastic epidemic models which uses recursive multinomial approximations to integrate over unobserved variables and thus circumvent likelihood intractability. The method is applicable to a class of discrete-time, finite-population compartmental models with partial, randomly under-reported or missing count observations. In contrast to state-of-the-art alternatives such as Approximate Bayesian Computation techniques, no forward simulation of the model is required and there are no tuning parameters. Evaluating the approximate marginal likelihood of model parameters is achieved through a computationally simple filtering recursion. The accuracy of the approximation is demonstrated through analysis of real and simulated data using a model of the 1995 Ebola outbreak in the Democratic Republic of Congo. We show how the method can be embedded within a Sequential Monte Carlo approach to estimating the time-varying reproduction number of COVID-19 in Wuhan, China, recently published by \cite{kucharski2020early}.
\end{abstract}

\section{Introduction}\label{sec:intro}

Compartmental models are used for predicting the scale and duration of epidemics, estimating epidemiological parameters such as reproduction numbers, and guiding outbreak control measures  \citep{brauer2008compartmental, o2010introduction, kucharski2020early}. They are increasingly important because they allow joint modelling of disease dynamics and multimodal data, such as medical test results, cell phone and transport flow data \citep{rubrichi2018comparison, wu2020nowcasting}, census and demographic information \citep{prem2020effect}. However, statistical inference in stochastic variants of compartmental models is a major computational challenge \citep{breto2018modeling}. The likelihood function for model parameters is usually intractable because it involves summation over a prohibitively large number of configurations of latent variables representing counts of subpopulations in disease states which cannot be observed directly.

%Such joint modelling is conceptually simple and typically involves introducing compartments to represent the disease states of sub-populations, for example associated with countries, cities, households, different stages of exposure, infection or medical care, age groups, or even species and different types of disease \citep{worden2017products}. 

%Constructing such models involves specifying a number of compartments, each representing a possible state which members of a population may be in before, during and after infection, together with quantitative rules about the rates at which individuals transition between compartments as disease progresses and as sub-populations interact.

%Ordinary differential equation  models in which the transitions are represented deterministically, and Markov process models in which transitions are represented stochastically \citep{breto2018modeling}, are both fraught with inferential difficulties. A primary problem is that neither solutions of ODE compartmental models nor transition probabilities of Markov process compartmental models are usually available in closed form, rendering the likelihood function of model parameters intractable. 

This has lead to the recent development of sophisticated computational methods for approximate inference involving various forms of stochastic simulation \citep{funk2020choices}. Examples include Approximate Bayesian Computation (ABC) \citep{kypraios2017tutorial,mckinley2018approximate,brownABC,brown2016empirically}, Data Augmentation Markov Chain Monte Carlo (MCMC) \citep{lekone2006statistical}, Particle Filters \citep{murray2018delayed},  Iterated Filtering \citep{stocks2019iterated}, and Synthetic Likelihood \citep{fasiolo2016comparison}. These methods continue to have real public health impact, for example the ABSEIR R package of \cite{brownABC} features in the current UK COVID-19 surveillance protocols  \citep{de2020emergence}. However the intricacy of these methods, their substantial computational cost arising from use of stochastic simulation, and their dependence on tuning parameters are obstacles to their wider use and scalability. The present work addresses the challenge of finding alternative inference techniques which are computationally simple and easy to use.

%The simplest compartmental models, for example the Susceptible-Infective-Recovered (SIR) model attributed to \cite{kermack1927contribution}, have only a handful of compartments.  However, part of the appeal of the compartmental approach is that starting from a basic model structure as in SIR, compartments can be added to enhance the representation of disease states, such as being exposed but not yet infective or in different stages of medical care, and 

%\citep{test}

\paragraph{Contributions} We introduce a new approach to inference in  compartmental epidemic models which:
\begin{itemize}[leftmargin=*]
	\item applies to a class of finite population, partially observed, discrete-time, stochastic models. In contrast to ODE models, these models can account for statistical variability in disease dynamics;
	\item allows approximate evaluation of the likelihood function for model parameters, and filtering and smoothing for compartment occupation numbers, without any stochastic simulation or algorithm tuning parameters, in contrast to state-of-the-art techniques such as ABC; 
	\item revolves around a computationally simple filtering recursion. The resulting likelihood and smoothing approximations can be combined with e.g., MCMC or Expectation Maximization techniques for parameter estimation;
	\item  is shown to recover ground truth parameter values from synthetic data, and to compare favourably against Data Augmentation MCMC \citep{lekone2006statistical}, ABC using the ABSEIR R package \citep{brownABC} and ODE  \citep{chowell2004basic} alternatives analyzing real Ebola outbreak data under a model from \cite{lekone2006statistical};
	\item is used to extend a method of \cite{kucharski2020early} for estimating the time-varying reproduction number of COVID-19 in Wuhan, China, from an ODE compartmental model to a stochastic model.
\end{itemize}

\section{Preliminaries}

\subsection{Difficulties of inference in stochastic compartmental models}\label{subsec:determ_and_stoch_models}
We use the well-known Susceptible-Exposed-Infective-Recovered (SEIR) model as a simple running example. The new methods we propose are applied to more realistic and complex models in section \ref{sec:numerics}. 

\paragraph{SEIR example.} The discrete-time stochastic SEIR model is:
\begin{equation} \label{eq:SEIR_disc}
\begin{split}
& S_{t+1} = S_t - B_t,\\
& E_{t+1} = E_t + B_t - C_t,\\
& I_{t+1} = I_t +C_t - D_t , \\
& R_{t+1} = R_t + D_t,
\end{split}
\end{equation}
with conditionally independent, binomially-distributed random variables: 
\begin{equation} \label{eq:SEIR_BCD}
\begin{split}
& B_t \sim \mathrm{Bin} (S_t , 1-e^{-h\beta I_t/n})\\
& C_t \sim \mathrm{Bin} (E_t , 1-e^{-h\rho}),\\ 
& D_t \sim \mathrm{Bin} (I_t , 1-e^{-h\gamma}), 
\end{split}
\end{equation}
where $h>0$ is a time-step size, $\beta,\rho,\gamma$ are model parameters, and the process is initialized with nonnegative integers in each of the compartments $(S_0, E_0, I_0, R_0)$ such that $S_0+E_0+I_0+R_0=n$ and $n$ is the total population size.  The interpretation of $\beta$ is the rate at which an interaction between a susceptible individual and the infective proportion of the population results in the disease being passed to that individual. The mean exposure and infective periods are respectively $1/\rho$ and $1/\gamma$.   The sequence $(S_t,E_t,I_t,R_t)_{t\geq0}$ is a Markov chain.%,  computing its transition probabilities involves summing over configurations of the increment variables $C_t$, $D_t$, $E_t$ and the cost of doing so explodes with $m$ and $n$.

%Stochastic models are more flexible than ODE models in that they are able to explain data in terms of statistical variation of compartment occupation numbers. 

%In the continuous-time case, the recipe for the stochastic  SEIR model is very similar to the ODE model: in a population of size $n$, each susceptible individual transitions to being exposed at the ringing of an exponential clock with rate function $\beta I_t/n$ where $I_t$ is the number of infectives at time $t$, and otherwise remains susceptible. Similarly, each exposed individual transitions to being infective, and each infective transitions to being recovered, according to exponential clocks with constant respective rates $\rho$ and $\gamma$, and otherwise remain as they are.  If $(S_t,E_t,I_t,R_t)$ are the numbers of susceptible, exposed, infective and recovered individuals at time $t$,  $(S_t,E_t,I_t,R_t)_{t\geq0}$ is Markov process. Its transition probabilities are not available in closed form, as is true of continuous-time stochastic compartmental models in general. 

%This computational difficulty is shared by stochastic compartmental models in general and is at the heart of intractability of likelihood-based inference for the parameters and/or initial condition. 
In practice, one typically observes times series of count data associated with some subset of the compartments, perhaps subject to random error, or under-reporting. Given such data, evaluating the likelihood function of the model parameters and initial condition requires the variables associated with unobserved compartments to be marginalized out. In general this is infeasible for models with anything but a small population size $n$ and a small number of compartments.

Stochastic compartmental models also commonly arise in the form of continuous-time pure jump Markov processes, in which transitions of individuals between compartments occur in an asynchronous manner \citep{breto2018modeling}. Likelihood-based inference for such processes is similarly intractable in general. There are rigorous limit theorems which link continuous time Markov process compartmental models to deterministic ODE models in the large population limit, e.g., \cite{kurtz1970solutions, kurtz1971limit, roberts2015nine}. However the precise nature of the asymptotic is somewhat subtle and not always meaningful in practice: the supplementary materials includes a simple example in which a stochastic model exhibits substantial statistical variation even when the population size is $10^7$, and the corresponding ODE limit is pathological. Thus, ODE models are no substitute for stochastic models.

%Another more obvious drawback of ODE models is that once the initial condition and parameters are fixed, any and all discrepancy between observed data and the solution of the ODE has to be explained as observation error. Thus in general ODE models are no substitute for stochastic models, and the latter offer the chance to more accurately account for statistical variability in observed data in terms of random disease dynamics, rather than only in terms of observation error \citep{lekone2006statistical} . 

\subsection{Notation}\label{subsec:notation} In the remainder of the paper, bold upper-case and bold lower-case characters are respectively matrices and column vectors, e.g., $\mathbf{A}$ and $\mathbf{b}$, with generic elements $a^{(i,j)}$ and $b^{(i)}$. The length-$m$ column vector of 1's is denoted $\mathbf{1}_m$. A vector is called a probability vector if its elements are nonnegative and sum to $1$. A matrix is called row-stochastic if its elements are nonnegative and its row sums are all $1$. The indicator function is denoted $\mathbb{I}[\cdot]$. The element-wise product of matrices is denoted $\mathbf{A}\circ\mathbf{B}$ and the outer product of vectors is denoted $\mathbf{a}\otimes \mathbf{b}$. Element-wise natural logarithm and factorial are denoted $\log \mathbf{A} $ and $\mathbf{A}!$.  For positive integers $m$ and $n$, define $\mathcal{C}_m\coloneqq\{1,\ldots,m\}$ and $\mathcal{S}_{m,n} \coloneqq \{ \mathbf{x}=[x^{(1)}\,\cdots \,x^{(m)}]^{\mathrm{T}}:x^{(i)}\geq0, i=1,\ldots, m; \sum_{i=1}^m x^{(i)}=n\}$. For $\mathbf{x}\in\mathcal{S}_{m,n}$, define $\bm{\eta}(\mathbf{x})\coloneqq [x^{(1)}/n\,\cdots \,x^{(m)}/n]^{\mathrm{T}}$. For a matrix $\mathbf{P}$ (resp. a vector $\bm{\pi}$) with nonnegative elements summing to $1$, $\mathrm{Mult}(n,\mathbf{P})$ (resp. $\mathrm{Mult}(n,\bm{\pi})$) denotes the distribution of the random matrix (resp. vector)  whose elements are the incidence counts obtained from sampling $n$ times with replacement according to  $\mathbf{P}$  (resp. $\bm{\pi}$). This is the usual definition of a multinomial distribution.

\section{Model}\label{sec:model}

\subsection{A class of compartmental models specified by the transition probabilities of individuals}\label{sec:individ} 
The general model we consider is defined by:  $m$, the number of compartments; $n$, the total population size; a length-$m$ probability vector $\bm{\pi}_0$; and for each $t\geq1$, a mapping from length-$m$ probability vectors to $m\times m$ row-stochastic matrices, $\bm{\eta}\mapsto\mathbf{K}_{t,\bm{\eta}}$.  The population at time $t\geq0$ is a set of $n$ random variables $\{\xi_t^{(1)},\ldots,\xi_t^{(n)}\}$, each valued in $\mathcal{C}_{m}$. The counts of individuals in each of the $m$ compartments at time $t$ are collected in a vector $\mathbf{x}_t = [x_t^{(1)}\cdots x_t^{(m)}]^{\mathrm{T}}\in \mathcal{S}_{m,n}$, $x_t^{(i)} \coloneqq \sum_{j=1}^{n}\mathbb{I}[\xi_t^{(j)}=i]$. For $t\geq 1$ let $\mathbf{Z}_t$ be the $m\times m$ matrix  with elements $z_t^{(i,j)}\coloneqq\sum_{k=1}^n \mathbb{I}[\xi_{t-1}^{(k)}=i, \xi^{(k)}_{t}=j]$, which counts the individuals transitioning from compartment $i$ at $t-1$ to $j$ at $t$.

The sequence $\{\xi_t^{(1)},\ldots,\xi_t^{(n)}\}_{t\geq 0}$ is constructed to be a Markov chain: the  members of the initial population $\{\xi_0^{(1)},\ldots,\xi_0^{(n)}\}$ are i.i.d. with $p(\xi_0^{(i)}=j)=\bm{\pi}_{0}^{(j)}$,   and given $\{\xi_{t-1}^{(1)},\ldots,\xi_{t-1}^{(n)}\}$, $\{\xi_t^{(1)},\ldots,\xi_t^{(n)}\}$ are conditionally independent, with $\xi_{t}^{(i)}$ drawn from the $\xi_{t-1}^{(i)}$'th row of  $\mathbf{K}_{t,\bm{\eta}(\mathbf{x}_{t-1})}$.  It follows from this prescription that the sequence of matrices $(\mathbf{Z}_t)_{t\geq0}$ is also a Markov chain.  Indeed, conditional on $\mathbf{Z}_{t-1}$, and hence automatically on $\mathbf{x}_{t-1}$ since $ \mathbf{Z}_t\mathbf{1}_m = \mathbf{x}_{t-1}$, the rows of $\mathbf{Z}_t$ are independent, and the distribution of the $i$th row of $\mathbf{Z}_t$ is $\mathrm{Mult}(x_{t-1}^{(i)},\mathbf{K}_{t,\bm{\eta}(\mathbf{x}_{t-1})}^{(i,\cdot)})$, where $\mathbf{K}_{t,\bm{\eta}(\mathbf{x}_{t-1})}^{(i,\cdot)}$ is the $i$th row of $\mathbf{K}_{t,\bm{\eta}(\mathbf{x}_{t-1})}$.  Moreover, noting $\mathbf{1}_m^{\mathrm{T}} \mathbf{Z}_t = \mathbf{x}_{t}^{\mathrm{T}}$, we observe $(\mathbf{x}_t)_{t\geq0}$ is also a Markov chain, but we shall not need an explicit formula for its transition probabilities.

%For anything but trivially small values of $n$ and $m$, these conditional distributions, i.e. the transition probabilities of the Markov chain $(\mathbf{x}_t)_{t\geq0}$, are too costly to compute in practice. This is the source of intractability when one comes to likelihood-based methods of fitting this kind of model to data.

\paragraph{SEIR example} The SEIR model in (\ref{eq:SEIR_disc})-(\ref{eq:SEIR_BCD}) is equivalent to taking $m=4$,
\begin{equation}\label{eq:SEIR_as_ind}
\left (\mathbf{K}_{t,\bm{\eta}}\right )^{(i,j)}=
\begin{cases}
e^{-h\beta \eta^{(3)}}   & i=j=1 \\
1-e^{-h\beta \eta^{(3)}} & i=1 \text{ and } j=2 \\
e^{-h\rho}               & i=j=2 \\
1-e^{-h\rho} & i=2 \text{ and } j=3 \\
e^{-h\gamma}             & i=j=3 \\
1-e^{-h\gamma}           & i=3 \text{ and } j=4 \\
1                        & i =j= 4 \\
0                        & \text{otherwise}
\end{cases}
\end{equation}
for all $t\geq 1$, and identifying $ [x_t^{(1)}\,x_t^{(2)}\,x_t^{(3)}\, x_t^{(4)}]^{\mathrm{T}}$ with  respectively the counts of susceptible, exposed, infective and recovered individuals at time $t$.

We consider two observation models.

%We now present two observation models. In the first model one observes a vector $\mathbf{y}_t$ of counts related to $\mathbf{x}_t$, in the second one observes a matrix $\mathbf{Y}_t$ of counts related to $\mathbf{Z}_t$. As we shall clarify in the context of the SEIR example, the second model is perhaps more important in practice, but we include the first model for purposes of exposition - it will help us to explain our approximations in section \ref{sec:inference}.

\subsection{Observations derived from $(\mathbf{x}_t)_{t\geq1}$} \label{sec:obs_model_x}
In this scenario, the observation at time $t\geq1$ is a length-$m$ vector $\mathbf{y}_t $ with elements $y_t^{(i)}$ which are conditionally independent given $\mathbf{x}_t$, and:
\begin{equation}\label{eq:obs_model_comp}
y_t^{(i)}\sim\mathrm{Bin}(x_t^{(i)},q_t^{(i)}).
\end{equation}
We shall collect the parameters $q_t^{(i)}\in[0,1]$ in a length-$m$ vector $\mathbf{q}_t$. When conducting likelihood-based inference for $\mathbf{x}_t$ using this model, if $y_t^{(i)}$ is a missing observation, then in the likelihood function associated with  (\ref{eq:obs_model_comp}) one should take $y_t^{(i)}$ to be $0$,  set $q_t^{(i)}=0$.

\subsection{Observations derived from $(\mathbf{Z}_t)_{t\geq1}$} \label{sec:obs_model_z}
In this scenario, the observation at time $t\geq 1$ is a $m\times m$ matrix $\mathbf{Y}_t $ with elements $y_t^{(i,j)}$ which are conditionally independent given $\mathbf{Z}_t$, and:
\begin{equation}\label{eq:obs_model_inc}
y_t^{(i,j)}\sim\mathrm{Bin}(z_t^{(i,j)},q_t^{(i,j)}).
\end{equation}
The  parameters $q_t^{(i,j)}\in[0,1]$ from (\ref{eq:obs_model_inc}) are collected into a $m\times m$ matrix $\mathbf{Q}_t$. Missing data are handled by putting a $0$ in place of the missing $y_t^{(i,j)}$ and setting $q_t^{(i,j)}=0$.

\paragraph{SEIR example} In practice, one typically observes, at each time step, counts of \emph{new} infectives rather than the total number of infectives, subject to some random under-reporting or missing data. How can such data be represented in the model? Due to the definition of $\mathbf{Z}_t$, the number of new infectives at time $t$ is exactly $z_t^{(2,3)}$, since the only way an individual can transition to being infective (compartment 3) is by first being exposed (compartment 2). Therefore in this case $y_t^{(2,3)}$ following (\ref{eq:obs_model_inc}) is a count of newly infectives at time $t$, subject to binomial random under-reporting with parameter $q_t^{(2,3)}$, as required.

\section{Inference}\label{sec:inference}
We now introduce our methods for approximating the so-called filtering distributions and marginal likelihoods $p(\mathbf{x}_t| \mathbf{y}_{1:t})$,  $p(\mathbf{y}_{1:t})$  and $p(\mathbf{Z}_t| \mathbf{Y}_{1:t})$, $p(\mathbf{Y}_{1:t})$ under respectively the observation models of sections \ref{sec:obs_model_x} and \ref{sec:obs_model_z}. These quantities are at the core of smoothing and parameter estimation techniques demonstrated in section \ref{sec:numerics} and detailed in the supplementary materials.

For the observation model of section \ref{sec:obs_model_x}, note that $(\mathbf{x}_t,\mathbf{y}_t)_{t\geq1}$ is a hidden Markov model, and  in principle the filtering distributions can be computed through a two-step recursion:
\begin{equation}\label{eq:exact_recursion}
p(\mathbf{x}_{t-1}|\mathbf{y}_{1:t-1})  \stackrel{\mathrm{prediction}}{\longrightarrow} p(\mathbf{x}_t |\mathbf{y}_{1:t-1})\stackrel{\mathrm{update}}{ \longrightarrow} p(\mathbf{x}_t |\mathbf{y}_{1:t}). 
\end{equation}
However in practice, the summations involved in the `prediction' and `update' operations are prohibitively expensive since they involve summing over all possible values of $\mathbf{x}_{t-1}$ and $\mathbf{x}_{t}$.
\subsection{Approximating the prediction operation}\label{sec:approx_prediction}
For each $\mathbf{x}=[x^{(1)}\,\cdots\, x^{(m)}]^{\mathrm{T}}\in\mathcal{S}_{m,n}$ and length-$m$ probability vector $\bm{\eta}$, let $M_t(\mathbf{x},\bm{\eta},\cdot)$ be the probability mass function on $\mathcal{S}_{m,n}$ of $(\mathbf{1}_m^{\mathrm{T}} \mathbf{Z})^{\mathrm{T}}$, where $\mathbf{Z}$ is a  random $m\times m$ matrix whose rows are independent, and whose $i$th row has distribution $\mathrm{Mult}(x^{(i)},\mathbf{K}_{t,\bm{\eta}}^{(i,\cdot)})$. So by construction $M_t(\mathbf{x}_{t-1},\bm{\eta}(\mathbf{x}_{t-1}),\mathbf{x}_t)$ is the probability transition function for the Markov chain $(\mathbf{x}_t)_{t\geq0}$ defined in section \ref{sec:individ}. Thus the prediction operation in (\ref{eq:exact_recursion}) can be written in terms of $M_t$: 
\begin{equation}
\begin{split}
&p(\mathbf{x}_{t}|\mathbf{y}_{1:t-1})= \sum_{\mathbf{x} _{t-1}\in \mathcal{S}_{m,n}} p(\mathbf{x}_{t-1}|\mathbf{y}_{1:t-1}) p(\mathbf{x}_{t}|\mathbf{x}_{t-1}) \\
&=\sum_{\mathbf{x} _{t-1}\in \mathcal{S}_{m,n}} p(\mathbf{x}_{t-1}|\mathbf{y}_{1:t-1}) M_t(\mathbf{x}_{t-1},\bm{\eta}(\mathbf{x}_{t-1}),\mathbf{x}_t) .\label{eq:prediction}
\end{split}
\end{equation}
Our approximation to this operation is as follows: assuming we have already obtained a multinomial distribution approximation to $p(\mathbf{x}_{t-1}|\mathbf{y}_{1:t-1})$, then in (\ref{eq:prediction}) we replace $p(\mathbf{x}_{t-1}|\mathbf{y}_{1:t-1})$ by this multinomial distribution, and replace the vector $\bm{\eta}(\mathbf{x}_{t-1})$ by its expectation under this multinomial distribution. This results in a multinomial distribution approximation to $p(\mathbf{x}_{t}|\mathbf{y}_{1:t-1})$. The following lemma formalizes this recipe.
\begin{lemma}\label{lem:prediction}
	If for a given length-$m$ probability vector $\bm{\pi}$,  $\mu(\cdot)$ is the probability mass function on $\mathcal{S}_{m,n}$ associated with $\mathrm{Mult}(n,\bm{\pi})$ and $\mathbb{E}_{\mu}[\bm{\eta}(\mathbf{x})]$ is the expected value of $\bm{\eta}(\mathbf{x})$ when $\mathbf{x}\sim \mu$, then $\sum_{\mathbf{x} \in \mathcal{S}_{m,n}} \mu(\mathbf{x}) M_t(\mathbf{x},\mathbb{E}_{\mu}[\bm{\eta}(\mathbf{x})],\cdot)$ is the probability mass function associated with $\mathrm{Mult}(n,\bm{\pi}^{\mathrm{T}}\mathbf{K}_{t,\bm{\pi}})$.
\end{lemma}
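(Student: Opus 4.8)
The plan is to collapse the stated mixture into a single multinomial in two moves: first simplify the kernel that appears inside $M_t$ by computing the mean of the mixing law, and then recognise the remaining object as the law of the column sums of a two-stage multinomial experiment, which is again multinomial. The crucial preliminary observation is that if $\mathbf{x}\sim\mathrm{Mult}(n,\bm{\pi})$ then $\mathbb{E}_\mu[x^{(i)}]=n\pi^{(i)}$, so $\mathbb{E}_\mu[\bm{\eta}(\mathbf{x})]=\bm{\pi}$. Hence the kernel inside $M_t$ is the deterministic matrix $\mathbf{K}_{t,\bm{\pi}}$ (it no longer depends on the random $\mathbf{x}$), and the quantity to identify is $\sum_{\mathbf{x}\in\mathcal{S}_{m,n}}\mu(\mathbf{x})\,M_t(\mathbf{x},\bm{\pi},\cdot)$.

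Next I would give this mixture an explicit sampling interpretation at the level of $n$ individuals. Draw i.i.d.\ ``source'' labels $a_1,\ldots,a_n$ in $\mathcal{C}_m$ with $p(a_k=i)=\pi^{(i)}$; then the counts $x^{(i)}=\sum_k\mathbb{I}[a_k=i]$ are $\mathrm{Mult}(n,\bm{\pi})$-distributed, i.e.\ $\mathbf{x}\sim\mu$. Conditionally on the sources, draw independent ``destination'' labels $b_k$ with $p(b_k=j\mid a_k=i)=K_{t,\bm{\pi}}^{(i,j)}$, and set $z^{(i,j)}=\sum_k\mathbb{I}[a_k=i,\,b_k=j]$. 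By construction, conditional on $\mathbf{x}$ the rows of $\mathbf{Z}$ are independent with $i$th row $\sim\mathrm{Mult}(x^{(i)},\mathbf{K}_{t,\bm{\pi}}^{(i,\cdot)})$, so $(\mathbf{1}_m^{\mathrm{T}}\mathbf{Z})^{\mathrm{T}}$ has exactly the mixture law above. But $(\mathbf{1}_m^{\mathrm{T}}\mathbf{Z})^{(j)}=\sum_k\mathbb{I}[b_k=j]$, and the pairs $(a_k,b_k)$ are i.i.d., so the $b_k$ are i.i.d.\ with $p(b_k=j)=\sum_i\pi^{(i)}K_{t,\bm{\pi}}^{(i,j)}=(\bm{\pi}^{\mathrm{T}}\mathbf{K}_{t,\bm{\pi}})^{(j)}$. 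Therefore $(\mathbf{1}_m^{\mathrm{T}}\mathbf{Z})^{\mathrm{T}}\sim\mathrm{Mult}(n,\bm{\pi}^{\mathrm{T}}\mathbf{K}_{t,\bm{\pi}})$, which is the claim.

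If one prefers a purely analytic route, the same conclusion follows from probability generating functions: for $\mathbf{s}\in[0,1]^m$, conditional on $\mathbf{x}$ the pgf of the column sums of $\mathbf{Z}$ is $\prod_i(\mathbf{K}_{t,\bm{\pi}}^{(i,\cdot)}\mathbf{s})^{x^{(i)}}$, and averaging over $\mathbf{x}\sim\mathrm{Mult}(n,\bm{\pi})$ via the multinomial pgf identity gives $\left(\sum_i\pi^{(i)}\,\mathbf{K}_{t,\bm{\pi}}^{(i,\cdot)}\mathbf{s}\right)^n=(\bm{\pi}^{\mathrm{T}}\mathbf{K}_{t,\bm{\pi}}\mathbf{s})^n$, which is the pgf of $\mathrm{Mult}(n,\bm{\pi}^{\mathrm{T}}\mathbf{K}_{t,\bm{\pi}})$; since pgfs determine distributions on the finite set $\mathcal{S}_{m,n}$, this finishes the proof. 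There is no real obstacle here: the only step requiring a moment's care is the opening reduction $\mathbb{E}_\mu[\bm{\eta}(\mathbf{x})]=\bm{\pi}$, which removes the $\mathbf{x}$-dependence of the kernel, after which everything reduces to the standard fact that aggregating the sub-counts of a multinomial yields a multinomial.
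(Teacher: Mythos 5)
Your proof is correct, and your primary argument takes a genuinely different route from the paper. The paper proves the lemma analytically: it computes the moment generating function of the mixture by conditioning on $\mathbf{x}$, inserting the multinomial m.g.f.\ of each row of $\mathbf{Z}$, and collapsing the resulting sum with the multinomial theorem to obtain $\bigl(\sum_j(\bm{\pi}^{\mathrm{T}}\mathbf{K}_{t,\bm{\pi}})^{(j)}e^{b^{(j)}}\bigr)^{n}$. Your main argument instead builds an explicit individual-level coupling: i.i.d.\ source labels with law $\bm{\pi}$, conditionally independent destination labels drawn from the rows of $\mathbf{K}_{t,\bm{\pi}}$, so that the column sums of $\mathbf{Z}$ simply count $n$ i.i.d.\ destinations with marginal $\bm{\pi}^{\mathrm{T}}\mathbf{K}_{t,\bm{\pi}}$. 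This is more elementary (no generating-function computation, no multinomial theorem), and it nicely mirrors the individual-based construction of the model in section \ref{sec:individ}; the only step you pass over lightly is that the conditional law of $\mathbf{Z}$ given the count vector $\mathbf{x}$ coincides with its conditional law given the full label vector, a routine exchangeability observation. Your fallback p.g.f.\ argument is, up to the substitution $e^{b^{(j)}}\leftrightarrow s^{(j)}$, exactly the paper's proof, so you also recover their approach; what the paper's transform method buys in exchange for the extra computation is a template that transfers directly to the matrix-valued analogue (Lemma \ref{lem:Z_predict}), where the coupling picture would need to track the full matrix of counts rather than just its column sums.
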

The proof is given in the supplementary materials.

\subsection{Approximating the update operation}\label{sec:approx_update}
The update operation in (\ref{eq:exact_recursion}) is:
\begin{equation}\label{eq:update}
\begin{split}
&p(\mathbf{x}_t |\mathbf{y}_{1:t}) = \frac{p(\mathbf{y}_t |\mathbf{x}_t)p(\mathbf{x}_t |\mathbf{y}_{1:t-1})} {p(\mathbf{y}_t|\mathbf{y}_{1:t-1})} ,\\ &p(\mathbf{y}_t|\mathbf{y}_{1:t-1}) = \sum_{\mathbf{x}_t\in\mathcal{S}_{m,n}}p(\mathbf{y}_t |\mathbf{x}_t)p(\mathbf{x}_t |\mathbf{y}_{1:t-1}),
\end{split}
\end{equation}
which has the interpretation of a Bayes' rule update applied to $p(\mathbf{x}_t |\mathbf{y}_{1:t-1})$. Assuming we have already obtained a multinomial distribution approximation to $p(\mathbf{x}_t |\mathbf{y}_{1:t-1})$, our approximation to the update operation is to substitute this multinomial distribution  in place of $p(\mathbf{x}_t |\mathbf{y}_{1:t-1})$ in (\ref{eq:update}), resulting in a shifted-multinomial distribution whose mean vector is used to define a multinomial distribution approximation to $p(\mathbf{x}_t |\mathbf{y}_{1:t})$. The following lemma formalizes this recipe.
\begin{lemma}\label{lem:update}
	Suppose that  $\mathbf{x}\sim\mathrm{Mult}(n,\bm{\pi})$ for a given length-$m$ probability vector $\bm{\pi}$, and assume that given $\mathbf{x}$, $\mathbf{y}$ is a vector with conditionally independent elements distributed: $y^{(i)}\sim\mathrm{Bin}(x^{(i)},q^{(i)})$. Then the conditional distribution of $\mathbf{x}$ given $\mathbf{y}$ is equal to that of $\mathbf{y} + \mathbf{x}^\star$,   where
	\begin{equation}
	\mathbf{x}^\star\sim \mathrm{Mult}\left(n-\mathbf{1}_m^{\mathrm{T}} \mathbf{y},  \dfrac{\bm{\pi} \circ (\mathbf{1}_m-\mathbf{q})}{1 - \bm{\pi}^{\mathrm{T}} \mathbf{q}}\right)\label{eq:x_star_dist}
	\end{equation}
	with $\mathbf{q}=[q^{(1)}\,\cdots\,q^{(m)}]^\mathrm{T}$, and the conditional mean of  $\mathbf{x}$  given  $\mathbf{y}$ is:
	\begin{equation}
	\mathbb{E}[\mathbf{x}|\mathbf{y}]= y + (n-\mathbf{1}_m^{\mathrm{T}} \mathbf{y})\left(\dfrac{\bm{\pi} \circ (\mathbf{1}_m-\mathbf{q})}{1 - \bm{\pi}^{\mathrm{T}} \mathbf{q}}\right).\label{eq:cond_mean}
	\end{equation}
	Moreover, the marginal distribution of $\mathbf{y}$ has probability mass function given by:
	\begin{equation}
	\begin{split}
	\log p(\mathbf{y}) &= \log(n!) + \mathbf{y}^{\mathrm{T}} (\log \bm{\pi}+\log \mathbf{q} ) - \mathbf{1}_m^{\mathrm{T}} \log(\mathbf{y}!)  \\
	&+ (n-\mathbf{1}_m^{\mathrm{T}}\mathbf{y})\log (1-\bm{\pi} ^\mathrm{T}\mathbf{q}) - \log( (n-\mathbf{1}_m^{\mathrm{T}}\mathbf{y})! ),\label{eq:marg_y}
	\end{split}
	\end{equation}
	with the convention $0 \log 0 \equiv 0$.
	%Moreover, conditional on $\mathbf{y}$, the expected value of $\mathbf{y} + \mathbf{x}^\star$ is:
	%\begin{equation*}
	%\mathbf{y}  + (n-\mathbf{1}_m^{\mathrm{T}} \mathbf{y})\dfrac{\bm{\pi} \circ (\mathbf{1}_m-\mathbf{q})}{1 - \bm{\pi}^{\mathrm{T}} \mathbf{q}}.
	%\end{equation*}
\end{lemma}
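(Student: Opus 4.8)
\emph{Plan of proof.} The plan is to realize the pair $(\mathbf{x},\mathbf{y})$ as a deterministic function of a single multinomial experiment on $2m$ categories, after which every assertion reduces to standard facts about multinomial distributions: their marginals, the law obtained by conditioning on a sub-collection of the counts, and the mean vector. Concretely, I would imagine $n$ individuals drawn independently, each receiving a label $(i,r)$ with $i\in\{1,\ldots,m\}$ and $r\in\{0,1\}$, where $i$ is chosen with probability $\bm{\pi}^{(i)}$ and then $r=1$ (``reported'') with conditional probability $q^{(i)}$. The vector of the $2m$ label-counts is then $\mathrm{Mult}(n,\mathbf{p})$, with $\mathbf{p}$ listing the entries $\bm{\pi}^{(i)}q^{(i)}$ and $\bm{\pi}^{(i)}(1-q^{(i)})$. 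First I would check this construction is consistent with the hypotheses: summing out $r$ shows the category counts are $\mathrm{Mult}(n,\bm{\pi})$, and grouping the trials by category shows that, conditionally on those counts equalling $\mathbf{x}$, the ``reported'' counts are independent $\mathrm{Bin}(x^{(i)},q^{(i)})$. Hence $\mathbf{y}$ may be taken to be the vector of ``reported'' counts and $\mathbf{x}-\mathbf{y}$ that of ``not reported'' counts, with exactly the joint law in the statement.

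Given this representation the three claims follow in turn. Conditioning a $\mathrm{Mult}(n,\mathbf{p})$ vector on the values of the $m$ ``reported'' coordinates (which sum to $\mathbf{1}_m^{\mathrm{T}}\mathbf{y}$) leaves the remaining coordinates distributed as $\mathrm{Mult}(n-\mathbf{1}_m^{\mathrm{T}}\mathbf{y},\cdot)$ with the complementary probabilities renormalized; since $\sum_j\bm{\pi}^{(j)}(1-q^{(j)})=1-\bm{\pi}^{\mathrm{T}}\mathbf{q}$, this gives precisely the law \eqref{eq:x_star_dist} for $\mathbf{x}^\star:=\mathbf{x}-\mathbf{y}$, and $\mathbf{x}=\mathbf{y}+\mathbf{x}^\star$. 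The conditional mean \eqref{eq:cond_mean} is then immediate from the multinomial mean-vector formula. For the marginal of $\mathbf{y}$, lumping all ``not reported'' trials into a single category shows that $(\mathbf{y}^{\mathrm{T}},\,n-\mathbf{1}_m^{\mathrm{T}}\mathbf{y})^{\mathrm{T}}$ is $\mathrm{Mult}\bigl(n,(\bm{\pi}^{(1)}q^{(1)},\ldots,\bm{\pi}^{(m)}q^{(m)},1-\bm{\pi}^{\mathrm{T}}\mathbf{q})\bigr)$; writing out its pmf and taking logarithms yields \eqref{eq:marg_y}.

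As a cross-check I would also sketch the direct Bayes computation: $p(\mathbf{x}\mid\mathbf{y})\propto p(\mathbf{y}\mid\mathbf{x})\,p(\mathbf{x})$, and on substituting $\mathbf{x}=\mathbf{y}+\mathbf{s}$ the binomial coefficients collapse against the multinomial coefficient via $\binom{y^{(i)}+s^{(i)}}{y^{(i)}}/(y^{(i)}+s^{(i)})!=1/(y^{(i)}!\,s^{(i)}!)$, after which the $\mathbf{s}$-dependent factor is recognised (using the multinomial theorem to supply the normalising constant) as the pmf of \eqref{eq:x_star_dist}, and the leftover factor as $p(\mathbf{y})$.

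I do not anticipate a genuine obstacle here; the only real care is boundary bookkeeping. When $\bm{\pi}^{\mathrm{T}}\mathbf{q}=1$ we have $n-\mathbf{1}_m^{\mathrm{T}}\mathbf{y}=0$ almost surely, the distribution \eqref{eq:x_star_dist} degenerates to the point mass at $\mathbf{0}$, and the ratio $\bm{\pi}\circ(\mathbf{1}_m-\mathbf{q})/(1-\bm{\pi}^{\mathrm{T}}\mathbf{q})$ becomes a $0/0$ expression multiplied by $0$; similarly \eqref{eq:marg_y} may contain $\log 0$ terms when some $\bm{\pi}^{(i)}$ or $q^{(i)}$ vanishes. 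All of these are resolved by the stated convention $0\log 0\equiv0$ together with reading $\mathrm{Mult}(0,\cdot)$ as $\delta_{\mathbf{0}}$, and I would note this rather than belabour it. The main thing to get exactly right is the form of the renormalised probability vectors and the precise constants in the log-pmf.
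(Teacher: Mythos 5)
Your proposal is correct, and it takes a genuinely different route from the paper. The paper proves the lemma by brute-force computation: it writes out $p(\mathbf{x})$ and $p(\mathbf{y}\mid\mathbf{x})$, forms the joint pmf, sums over $\{\mathbf{x}: x^{(j)}\ge y^{(j)},\ \sum_j x^{(j)}=n\}$ via the multinomial theorem to obtain $p(\mathbf{y})$, and then divides to identify $p(\mathbf{x}\mid\mathbf{y})$ as the pmf of $\mathbf{y}+\mathbf{x}^\star$. You instead construct the joint law from a latent $2m$-category multinomial (label each of the $n$ individuals by compartment and reported/not-reported status), check that this reproduces the hypotheses, and then read off all three claims from standard multinomial facts: merging categories gives $\mathrm{Mult}(n,\bm{\pi})$ and the lumped $(m{+}1)$-category law of $(\mathbf{y},\,n-\mathbf{1}_m^{\mathrm{T}}\mathbf{y})$, while conditioning on the reported coordinates leaves the unreported ones $\mathrm{Mult}(n-\mathbf{1}_m^{\mathrm{T}}\mathbf{y},\cdot)$ with renormalized weights $\bm{\pi}\circ(\mathbf{1}_m-\mathbf{q})/(1-\bm{\pi}^{\mathrm{T}}\mathbf{q})$. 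Your route buys conceptual transparency: the multinomial-theorem summation is absorbed into the standard lumping/conditioning properties, the $(m{+}1)$-compartment interpretation of $p(\mathbf{y})$ (stated as a remark in the paper, with a typo there: it should be $m+1$, not $n+1$) falls out for free, and the same augmentation argument would transfer almost verbatim to the matrix-valued analogue (Lemma \ref{lem:Z_update}). The paper's computation, by contrast, is fully self-contained and produces the explicit pmfs (\ref{eq:p(x,y)})--(\ref{eq:p(y)}) along the way; your ``cross-check'' Bayes calculation is essentially that proof. The boundary conventions you flag ($\bm{\pi}^{\mathrm{T}}\mathbf{q}=1$, vanishing $\pi^{(i)}$ or $q^{(i)}$) are handled the same way in both treatments and pose no difficulty.
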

The proof is given in the supplementary materials.
\subsection{Multinomial filtering}
Putting together the results of lemma \ref{lem:prediction} and lemma \ref{lem:update} in a recursive fashion leads us to algorithm \ref{alg:filtering_comp}; line 3 is motivated by lemma \ref{lem:prediction}, line 4 is motivated by (\ref{eq:x_star_dist})-(\ref{eq:cond_mean}).

\begin{algorithm}[h]
	\caption{Multinomial filtering with observations derived from $(\mathbf{x}_t)_{t\geq1}$}\label{alg:filtering_comp}
	\begin{algorithmic}[1]
		%\Procedure{Euclid}{$a,b$}\Comment{The g.c.d. of a and b}
		\State \textbf{initialize} $\bm{\pi}_{0|0}\leftarrow\bm{\pi}_0$
		%\While{$r\not=0$}\Comment{We have the answer if r is 0}x
		% \State $a\gets b$
		%\State $b\gets r$
		%\State $r\gets a\bmod b$
		%\EndWhile\label{euclidendwhile}
		\For{$t\geq 1$}
		\State $\bm{\pi}_{t|t-1}\leftarrow (\bm{\pi}_{t-1|t-1}^{\mathrm{T}} \mathbf{K}_{t,\bm{\pi}_{t-1|t-1}} )^\mathrm{T}$ %
		
		% \Comment out: $\mathrm{Mult}(n,\bm{\pi}_{t|t-1})\approx p(\mathbf{x}_t|\mathbf{y}_{1:t-1})$
		\State $\bm{\pi}_{t|t}\leftarrow \dfrac{\mathbf{y}_{t}}{n}+\left( 1-\dfrac{\mathbf{1}_m^{\mathrm{T}} \mathbf{y}_{t} }{n}\right)
		\dfrac{\bm{\pi}_{t|t-1} \circ (\mathbf{1}_m-\mathbf{q}_{t})}{1 - \bm{\pi}_{t|t-1}^{\mathrm{T}} \mathbf{q}_{t}}$
		% \Comment out: $\mathrm{Mult}(n,\bm{\pi}_{t|t})\approx p(\mathbf{x}_t|\mathbf{y}_{1:t})$
		%\State $w_t \leftarrow n! \displaystyle{\left( \prod_{i=1}^m \dfrac{(\pi_{t|t-1}^{(j)} q_t^{(j)} )^{y_t^{(j)}}}{ y_t^{(j)} ! }\right)} \frac{(\bm{\pi}_{t|t-1}^{\mathrm{T}}(\mathbf{1}_m-\mathbf{q}_{t}))^{n-\mathbf{1}_m^{\mathrm{T}}\mathbf{y}_t}}{(n-\mathbf{1}_m^{\mathrm{T}}\mathbf{y}_t)!}$  \Comment out: $\prod_{s=1}^t w_s \approx p(\mathbf{y}_{1:t})$ 
		\State{$\log w_t \leftarrow \log(n!) + \mathbf{y}_t^{\mathrm{T}} (\log \bm{\pi}_{t|t-1}+\log \mathbf{q}_t )- \mathbf{1}_m^{\mathrm{T}} \log(\mathbf{y_t}!) + (n-\mathbf{1}_m^{\mathrm{T}}\mathbf{y}_t)\log (1-\bm{\pi}_{t|t-1} ^\mathrm{T}\mathbf{q}_t)- \log( (n-\mathbf{1}_m^{\mathrm{T}}\mathbf{y}_t)! )$}
		%\Comment out: $w_t \approx p(\mathbf{y}_{t}|\mathbf{y}_{1:t-1})$ 

		\EndFor
		%\State \textbf{return} $b$%\Comment{The gcd is b}
		%\EndProcedure
	\end{algorithmic}
\end{algorithm}

One may take as output from algorithm \ref{alg:filtering_comp} the approximations:
\begin{equation}\label{eq:alg1_approx_filt}
\begin{split}
&p(\mathbf{x}_t|\mathbf{y}_{1:t-1})  \approx \mathrm{Mult}(n,\bm{\pi}_{t|t-1}),\\ &p(\mathbf{x}_t|\mathbf{y}_{1:t}) \stackrel{d}{\approx} \mathbf{y}_{t} + \mathbf{x}_t^\star,
\end{split}
\end{equation}
where the $\stackrel{d}{\approx} $ term indicates approximation of $p(\mathbf{x}_t|\mathbf{y}_{1:t}) $ by the distribution of the sum of $\mathbf{y}_t$ (regarded as a constant) and a random variable $\mathbf{x}_t^\star$ which is defined to have distribution:  
\begin{equation}\label{eq:x_star}
\mathbf{x}_t^\star\sim \mathrm{Mult}\left(n-\mathbf{1}_m^{\mathrm{T}} \mathbf{y}_{t} ,  \dfrac{\bm{\pi}_{t|t-1} \circ (\mathbf{1}_m-\mathbf{q}_{t})}{1 - \bm{\pi}_{t|t-1}^{\mathrm{T}} \mathbf{q}_{t}}\right).
\end{equation}
%One may alternatively take $p(\mathbf{x}_t|\mathbf{y}_{1:t})  \approx \mathrm{Mult}(n,\bm{\pi}_{t|t})$, which has the same mean as the second approximation in (\ref{eq:alg1_approx_filt}), but may be less accurate in general, in particular  the support of $\mathrm{Mult}(n,\bm{\pi}_{t|t})$ may not match that of $p(\mathbf{x}_t|\mathbf{y}_{1:t}) $. 
In view of (\ref{eq:marg_y}), the quantities $w_t$ computed in algorithm \ref{alg:filtering_comp} can be used to approximate the marginal likelihood as follows:
\begin{equation}\label{eq:marg_like_approx}
p(\mathbf{y}_{1:t}) = p(\mathbf{y}_{1})\prod_{s=2}^{t} p(\mathbf{y}_{s}|\mathbf{y}_{1:s-1})\approx \prod_{s=1}^t w_s.
\end{equation}

Now turning to the observation model from section \ref{sec:obs_model_z} and noting that  $(\mathbf{Z}_t,\mathbf{Y}_t)_{t\geq1}$ is a hidden Markov model, we approximate the recursion:
\begin{equation}\label{eq:exact_recursion_Z}
p(\mathbf{Z}_{t-1}|\mathbf{Y}_{1:t-1})  \stackrel{\mathrm{prediction}}{\longrightarrow} p(\mathbf{Z}_t |\mathbf{Y}_{1:t-1})\stackrel{\mathrm{update}}{ \longrightarrow} p(\mathbf{Z}_t |\mathbf{Y}_{1:t}).  
\end{equation}
Many details are similar to those above so are given in the supplementary materials. The counterpart of algorithm \ref{alg:filtering_comp} is algorithm \ref{alg:filtering_inc}, from which one may take the approximations: 
\begin{equation}\label{eq:alg2_approx_filt}
\begin{split}
&p(\mathbf{Z}_t|\mathbf{Y}_{1:t-1})  \approx \mathrm{Mult}(n,\mathbf{P}_{t|t-1}),\\ &p(\mathbf{Z}_t|\mathbf{Y}_{1:t}) \stackrel{d}{\approx} \mathbf{Y}_{t} + \mathbf{Z}_t^\star,
\end{split}
\end{equation}
where
\begin{equation}
\mathbf{Z}_t^\star\sim \mathrm{Mult}\left(n-\mathbf{1}_m^{\mathrm{T}} \mathbf{Y}_t \mathbf{1}_m , \dfrac{\mathbf{P}_{t|t-1}\circ(\mathbf{1}_m\otimes \mathbf{1}_m-\mathbf{Q}_{t})}{1 - \mathbf{1}_m^{\mathrm{T}}(\mathbf{P}_{t|t-1}\circ \mathbf{Q}_{t})\mathbf{1}_m}\right).
\end{equation}
%One may alternatively take $p(\mathbf{Z}_t|\mathbf{Y}_{1:t})  \approx \mathrm{Mult}(n,\mathbf{P}_{t|t})$, but similarly to above, its support may not be correct. 
The marginal likelihood is approximated using the same formula as in (\ref{eq:marg_like_approx}) but with the $w_t$'s computed as per algorithm \ref{alg:filtering_inc}.
\begin{algorithm}
	\caption{Multinomial filtering with observations derived from $(\mathbf{Z}_t)_{t\geq1}$}\label{alg:filtering_inc}
	\begin{algorithmic}[1]
		%\Procedure{Euclid}{$a,b$}\Comment{The g.c.d. of a and b}
		\State \textbf{initialize} $\bm{\pi}_{0|0}\leftarrow\bm{\pi}_0$
		%\While{$r\not=0$}\Comment{We have the answer if r is 0}
		% \State $a\gets b$
		%\State $b\gets r$
		%\State $r\gets a\bmod b$
		%\EndWhile\label{euclidendwhile}
		\For{$t\geq 1$}
		\State $\mathbf{P}_{t|t-1} \leftarrow (\bm{\pi}_{t-1|t-1} \otimes \mathbf{1}_m ) \circ \mathbf{K}_{t,\bm{\pi}_{t-1|t-1}}$
		\State $\mathbf{P}_{t|t}\leftarrow\dfrac{\mathbf{Y}_{t}}{n}+\dfrac{\mathbf{P}_{t|t-1}\circ(\mathbf{1}_m\otimes \mathbf{1}_m-\mathbf{Q}_{t})}{1 - \mathbf{1}_m^{\mathrm{T}}(\mathbf{P}_{t|t-1}\circ \mathbf{Q}_{t})\mathbf{1}_m}-
		\left(\dfrac{\mathbf{1}_m^{\mathrm{T}}\mathbf{Y}_{t}\mathbf{1}_m}{n}\right)\dfrac{\mathbf{P}_{t|t-1}\circ(\mathbf{1}_m\otimes \mathbf{1}_m-\mathbf{Q}_{t})}{1 - \mathbf{1}_m^{\mathrm{T}}(\mathbf{P}_{t|t-1}\circ \mathbf{Q}_{t})\mathbf{1}_m}$
		\State $\log w_t \leftarrow \log(n!) + \mathbf{1}_m^{\mathrm{T}} (\mathbf{Y}_t \circ \log \mathbf{P}_{t|t-1} )\mathbf{1}_m + \mathbf{1}_m^{\mathrm{T}} (\mathbf{Y}_t \circ \log \mathbf{Q}_{t} )\mathbf{1}_m -  \mathbf{1}_m^{\mathrm{T}} \log(\mathbf{Y}_t !)\mathbf{1}_m+ (n-\mathbf{1}_m^{\mathrm{T}} \mathbf{Y}_t \mathbf{1}_m) \log (1 -\mathbf{1}_m^{\mathrm{T}} (\mathbf{P}_{t|t-1} \circ \mathbf{Q}_t) \mathbf{1}_m ) - \log ((n-\mathbf{1}_m^{\mathrm{T}} \mathbf{Y}_t \mathbf{1}_m)!)$
		\State $\bm{\pi}_{t|t}\leftarrow(\mathbf{1}_m^{\mathrm{T}}\mathbf{P}_{t|t})^{\mathrm{T}}$
		
		\EndFor
		%\State \textbf{return} $b$%\Comment{The gcd is b}
		%\EndProcedure
	\end{algorithmic}
\end{algorithm}

%\begin{equation}
%p(\mathbf{x}_{0:t}, \mathbf{y}_{1:t})=p(\mathbf{x}_0)\prod_{s=1}^t p(\mathbf{x}_s|\mathbf{x}_{s-1})  p(\mathbf{y}_s|\mathbf{x}_{s}) 
%\end{equation}
%
%\begin{equation}
%p(\mathbf{x}_0,\mathbf{Z}_{1:t}, \mathbf{Y}_{1:t})=p(\mathbf{x}_0)\prod_{s=1}^t p(\mathbf{Z}_s|\mathbf{Z}_{s-1})  p(\mathbf{Y}_s|\mathbf{Z}_{s}),
%\end{equation}
%with the convention $\mathbf{Z}_0\equiv \mathbf{x}_0$.

\subsection{Computational cost}

The computational cost of algorithms \ref{alg:filtering_comp} and \ref{alg:filtering_inc} is independent of the overall population size $n$,  except through factorial terms such as $\log(n!)$ and $\log((n-\mathbf{1}_m^\mathrm{T}\mathbf{y})!)$. However these terms do not depend on the model parameters $\mathbf{K}_{t,\bm{\eta}}$, $\mathbf{q}_t$ etc., so can be pre-computed or even not computed at all if the approximate marginal likelihood needs to be evaluated only up to a constant of proportionality independent of model parameters. Leaving these factorial terms out the worst case costs of algorithms \ref{alg:filtering_comp} and \ref{alg:filtering_inc} are therefore respectively $\mathcal{O}(t m^2)$ and $\mathcal{O}(t m^3)$.  Costs may be substantially lower in practice as $\mathbf{K}_{t,\bm{\eta}}$ and $\mathbf{q}_t$ are typically sparse. Similar observations hold for the smoothing algorithms.

This compares to $O(t m f(n))$ to simulate $(\mathbf{x}_t)_{t\geq0}$ from the model where $f(n)$ is the complexity of sampling from $\mathrm{Bin}(n,p)$, assuming no more than two non-zero entries in each row of $\mathbf{K}_{t,\bm{\eta}}$. A larger number of non-zero entries would imply a higher cost. Such a simulation is necessary (but usually not sufficient) to approximately evaluate the likelihood in ABSEIR \citep{brownABC}. The worst case is $f(n)=O(n)$, but modest improvements are available if one accepts `with high probability' performance measures \citep{farach2015exact}. The worst case time complexity of the Data Augmentation MCMC method \citep{lekone2006statistical} is also linear in $n$. Whilst the wall-clock time of any given algorithm is of course heavily dependent on exactly how it is implemented, these considerations suggest that the proposed methods will have attractive computational costs in many applications, where $m$ is often many orders of magnitude smaller than $n$

\section{Numerical results}\label{sec:numerics}

Additional details of models, data sources, algorithms, prior distributions, hyper-parameter settings, further numerical results and tutorials are given in the supplementary materials. 

\subsection{The 1995 Ebola outbreak in the Democratic Republic of Congo}\label{sec:ebola}
We analyzed simulated and real data under a discrete-time SEIR model used by \cite{lekone2006statistical} to investigate the impact of control interventions on the 1995 outbreak of Ebola in the Democratic Republic of Congo. Our experiments follow closely those in \cite{lekone2006statistical} to allow comparisons with their Data Augmentation MCMC method. We also include comparisons to the ABC method from the ABSEIR R package \citep{brownABC}, and results of least-squares fitting of an ODE model from \cite{chowell2004basic} which \cite{lekone2006statistical} used as a benchmark.

The model of \cite{lekone2006statistical} is the same as the SEIR model in (\ref{eq:SEIR_disc}) with $h=1$, except that $\beta$ is replaced by a time-varying parameter
$\beta_t = \beta$ for $t<t_\star$ and $\beta_t = \beta e^{-\lambda(t-t_\star)}$ for $t
\geq t_\star$ where $t_\star$ is the time at which control measures began. Thus $\mathbf{K}_{t,\bm{\eta}}$ is as in (\ref{eq:SEIR_as_ind}) but with $\beta$ replaced by this $\beta_t$. Also following \cite{lekone2006statistical},  the data consist of daily counts of new cases (i.e. new infectives) and new deaths (i.e. new removals). In \cite{lekone2006statistical} it was assumed these counts are observed directly, subject to known proportions of missing data. We consider a slightly more general observation model as per section \ref{sec:obs_model_z} with $q_t^{(i,j)}=0$ for all $(i,j)$ except $(2,3)$ and $(3,4)$, and where $q_t^{(2,3)}$ and $q_t^{(3,4)}$ are treated as constant-in-$t$ but otherwise unknown and to be estimated. 

\paragraph{Synthetic data}
Using the following settings from \cite{lekone2006statistical}:  $(\beta,\lambda,\rho,\gamma)=(0.2,0.2,0.2,0.143)$, $S_0=5,364,500$, $E_0=1$, $I_0=R_0=0$, $t_{\star}=130$ , plus $q_t^{(2,3)}=291/316$ and $q_t^{(3,4)}=236/316$ for all $t\geq1$ informed by realistic proportions of non-missing data \citep{lekone2006statistical}, we simulated the epidemic from the model until extinction, which took $175$ time steps. Table \ref{tab:ebola_synthetic} shows MLE's from an EM algorithm which uses our approximate filtering and smoothing methods, and marginal posterior means and standard deviations estimated using a Metropolis-within-Gibbs MCMC algorithm which incorporates our approximate marginal likelihood, under three sets of prior distributions over $(\beta,\lambda,\rho,\gamma)$ labelled `vague', `informative' and `noncentered'  by \cite{lekone2006statistical}. The basic reproduction number is $R_0=\beta/\gamma$. The results show accurate recovery of the true parameter values. 

\begin{table*}[httb!]
	\caption{Parameter estimation for synthetic data under the Ebola model using our EM and MCMC methods under three sets of prior distributions specified by \cite{lekone2006statistical}. For the MCMC results, the posterior means is reported as the point estimate and the numbers in parentheses are posterior standard deviations.\newline}
	\label{tab:ebola_synthetic}
	\centering
	\scriptsize
	\begin{tabular*}{1\linewidth}{llllllll}
		\hline
		\textbf{Parameter}     & $\mathbf{\beta}$ & $\mathbf{\lambda}$ & $\mathbf{\rho}$ & $\mathbf{\gamma} $ & $\mathbf{q^{(2,3)}}$ & $\mathbf{q^{(3,4)}}$ & $\mathbf{R_0}$  \\
		\hline \\
		True value        & 0.2 & 0.2 & 0.2 & 0.143 & 0.92 & 0.75 & 1.40 \\
		MLE (EM-alg.)       & 0.20 & 0.18 & 0.21 & 0.139 & 1.00 & 0.81 &  1.44   \\
		MCMC (vague)    & 0.23 (0.028)  & 0.21 (0.080) & 0.22 (0.076) & 0.173 (0.024) & 0.81 (0.140) & 0.66 (0.119) & 1.31 (0.088)  \\
		MCMC (infor.)   & 0.22 (0.020) &  0.22 (0.065) & 0.20 (0.035) & 0.162 (0.017) & 0.83 (0.130) & 0.67 (0.112) & 1.34 (0.082)   \\
		MCMC (noncent.) & 0.32 (0.048) &  0.35 (0.101) & 0.17 (0.031) & 0.256 (0.049) & 0.79 (0.147) & 0.64 (0.125) & 1.28 (0.084)   \\
		\hline
	\end{tabular*}
\end{table*} 

\begin{table*}[httb!]
	\caption{Parameter estimation for the real Ebola data. Numbers in parentheses in column 5 are standard errors, for all other columns they are posterior standard deviations. For columns 2,3,4,6 the parameter estimates are posterior means. For each of $\beta$, $\lambda$, and $1/\rho$ the pairs of estimates in column 1 were obtained from the respective bi-modal posteriors by applying $k$-means clustering, with $k=2$, to the MCMC output. \newline}
	\label{tab:ebola}
	\centering
	\tiny
	\begin{tabular*}{1\linewidth}{llllllll}
		\hline
		\textbf{Parameter}     & $\mathbf{\beta}$ & $\mathbf{\lambda}$ & $\mathbf{1 \slash \rho}$ & $\mathbf{1 \slash \gamma} $ & $\mathbf{q^{(2,3)}}$ & $\mathbf{q^{(3,4)}}$ & $\mathbf{R_0}$  \\       
		\hline       \\
		\makecell{Our MCMC method\\ vague prior} & \makecell{ 0.36 (0.049)\\0.22 (0.025) } & \makecell{ 0.32 (0.140)\\0.05 (0.008) } & 
		\makecell{ 10.39 (1.554) \\ 1.86 (0.487)}   & 6.17 (1.042)   & 0.44 (0.103)   & 0.36 (0.088) & \makecell{  2.18 (0.227) \\1.42 (0.102) }   \\            
		\makecell{Our MCMC method\\ informative prior} & 0.26 (0.033) & 0.12 (0.064) &   6.07 (1.919) & 6.86 (0.834)    &  0.50 (0.109)   & 0.41 (0.093) & \makecell{ 1.64 (0.696) }  \\ 
		\makecell{\citep{lekone2006statistical}\\ vague prior}      & 0.24 (0.020)              & 0.16 (0.009)              &  9.43 (0.620) & 5.71 (0.548) & \_            & \_            & 1.38 (0.127)   \\ 
		\makecell{\citep{lekone2006statistical}\\ informative prior } & 0.21 (0.017)              & 0.15 (0.010)              & 10.11 (0.713) & 6.52 (0.564)  & \_             & \_    & 1.36 (0.128)           \\                  	                          
		\makecell{ODE + least squares\\ \citep{chowell2004basic}}                     & 0.33 (0.006)              & 0.98 (unknown)            &  5.30 (0.230) & 5.61 (0.190) & \_           & \_             & 1.83 (0.060)  \\ 
		\makecell{ABC ABSEIR\\\citep{brownABC}}                                       & 0.3 (0.088)              & 0.36 (0.325)              &  7.91 (2.703) & 15.01 (32.863) & \_ &\_  & 3.66 (6.592)\\
		\hline
	\end{tabular*}
\end{table*}

\paragraph{Real data} We analyzed the same real Congo Ebola data as in \cite{lekone2006statistical}. Table \ref{tab:ebola} shows several interesting findings. 1) The results from our methods are generally closer to those from the Data Augmentation MCMC sampler of \cite{lekone2006statistical}  than those from the ABC method of \cite{brownABC}; the former targets the true posterior distribution whilst the latter does so only approximately. 2) Under the `vague' prior our method finds bi-modal posteriors for $\beta$, $\lambda$, and $1/\rho$. For $\beta$, one of the modes roughly matches the posterior mean obtained using \cite{lekone2006statistical} whilst the other is more similar to the least-squares estimate from \cite{chowell2004basic}; we conjecture that our MCMC sampler has better mixing than that of \cite{lekone2006statistical}, allowing it to find these two modes. 3) We can report estimates for $q^{(2,3)}$ and $q^{(3,4)}$, whilst the other methods do not. Figure \ref{fig:posteriors_ebola} shows posterior and posterior-predictive distributions for the counts of new infectives each day. The former estimates for the true numbers which gave rise to the under-reported data, whilst the latter shows coverage of the data hence a good model fit \citep{gelman1996posterior}.

\begin{figure*}[httb!]
	\centering
	\includegraphics[width=\columnwidth]{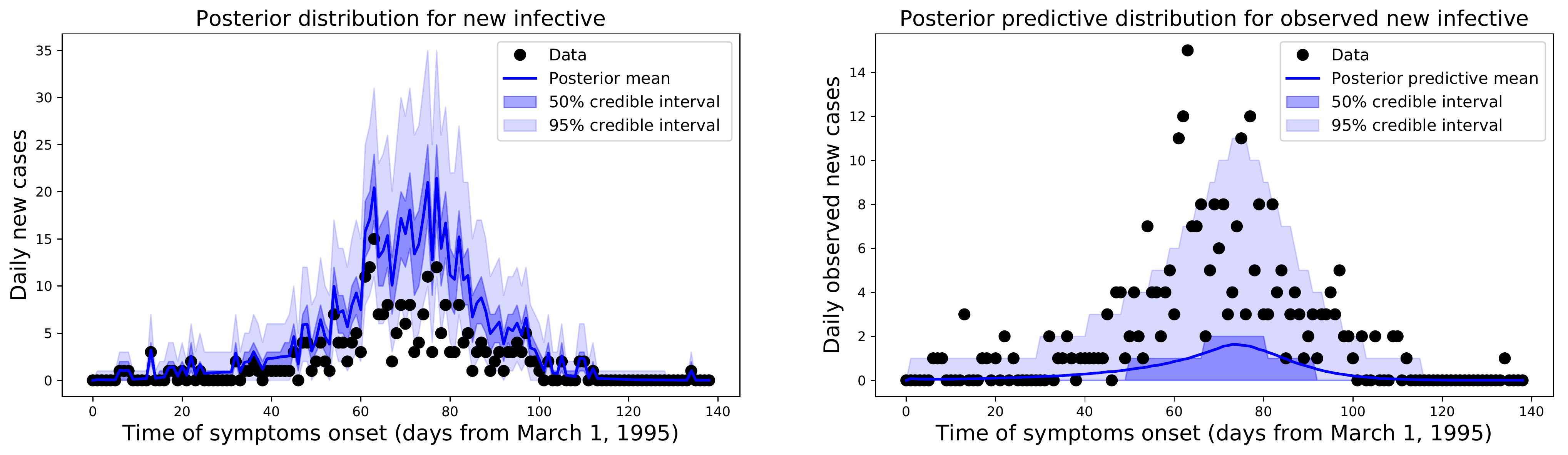}
	\caption{Analysis of real Ebola data with our method. Posterior smoothing distributions for the number of new infectives per day and posterior predictive distributions for the associated observations, i.e., subject to under-reporting. Control measures were introduced on day 70.} \label{fig:posteriors_ebola}
\end{figure*}

\begin{figure*}[httb!]
	\centering
	\includegraphics[width=\columnwidth]{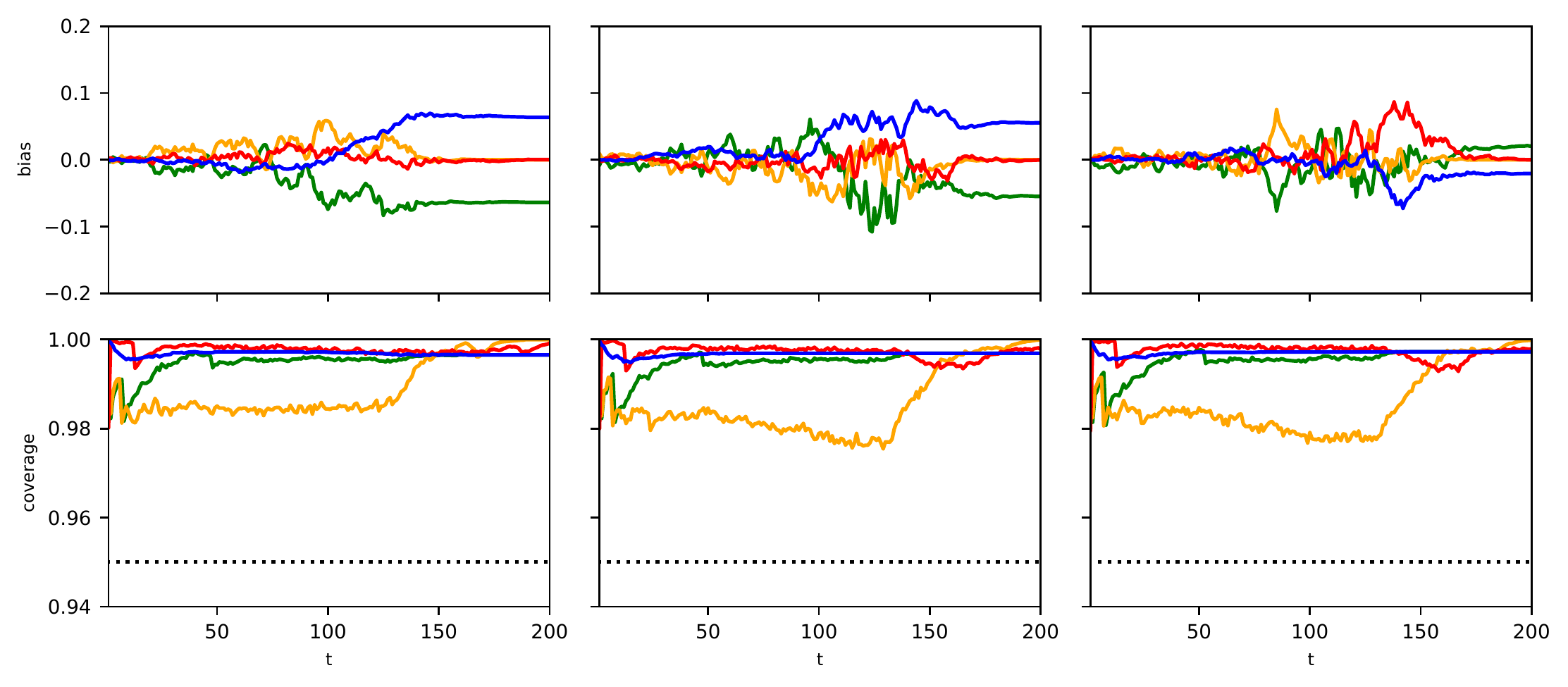}
	\caption{Empirical bias and empirical coverage of nominal $95\%$-credible intervals from $2\times10^4$  simulations over $200$ time steps of the Ebola model. Columns from left to right: $ n=5\times10^2, 5\times10^4, 5\times10^6 $. Top row: bias, bottom row: coverage. Red, yellow, blue, green correspond to $x_t^{(i)}$, $i=1,2,3,4$, i.e. susceptible, exposed, infective, recovered.}\label{fig:bias_and_cov}
\end{figure*}

\subsection{Accuracy: filtering bias and credible interval coverage}

The purpose of this subsection is to study the accuracy of the approximate filtering distributions obtained from algorithm \ref{alg:filtering_inc} when applied to the Ebola model described in subsection \ref{sec:ebola}. The ground truth parameter values $(\beta,\lambda,\rho,\gamma)$ in the synthetic data experiment were taken together with $q_t^{(2,3)}=291/316$, $q_t^{(3,4)}=236/316$. We considered three population sizes $n=5\times10^2, 5\times10^4, 5\times10^6 $, and in each case the initial distribution was $\bm{\pi}_0 = [1-1/n,1/n,0,0]^\mathrm{T}$. For each value of $n$, we simulated $2\times10^4$ data sets from the model, each over $200$ time steps.

To assess accuracy we considered bias and credible-interval coverage. For the former we calculated the empirical bias associated with the mean vector of the approximation to $p(\mathbf{x}_t|\mathbf{Y}_{1:t})$ obtained from algorithm \ref{alg:filtering_inc} as an estimator of $\mathbf{x}_t$. For the latter we calculated the empirical coverage of the nominal $95\%$-credible interval for the marginal over each $x_t^{(i)}$, $i=1,2,3,4$. For the true (i.e. approximation-free) filtering distributions, asymptotically in the number of simulated data sets the bias would be zero and the coverage would be $95\%$. 

Figure \ref{fig:bias_and_cov} shows that for all three values of $n$, the bias at every time step and for every compartment is less than $0.1$ in magnitude. This shows the approximation is very accurate: the true values of $x_t^{(i)}$, $i=1,2,3,4$ are always integers, and a bias less than $0.5$ in magnitude means that, on average, if the estimated number of individuals is rounded to the nearest integer, the true number of individuals is recovered.   The credible interval coverage reported in figure \ref{fig:bias_and_cov} shows that the approximate filtering distributions tend to over-represent uncertainty: the empirical coverage at all time steps for all compartments of the nominal $95\%$ interval is is between $97\%$ and $100\%$. The bias and coverage appear robust to population size.

\begin{figure*}[httb!]
	\centering
	\includegraphics[width=\columnwidth]{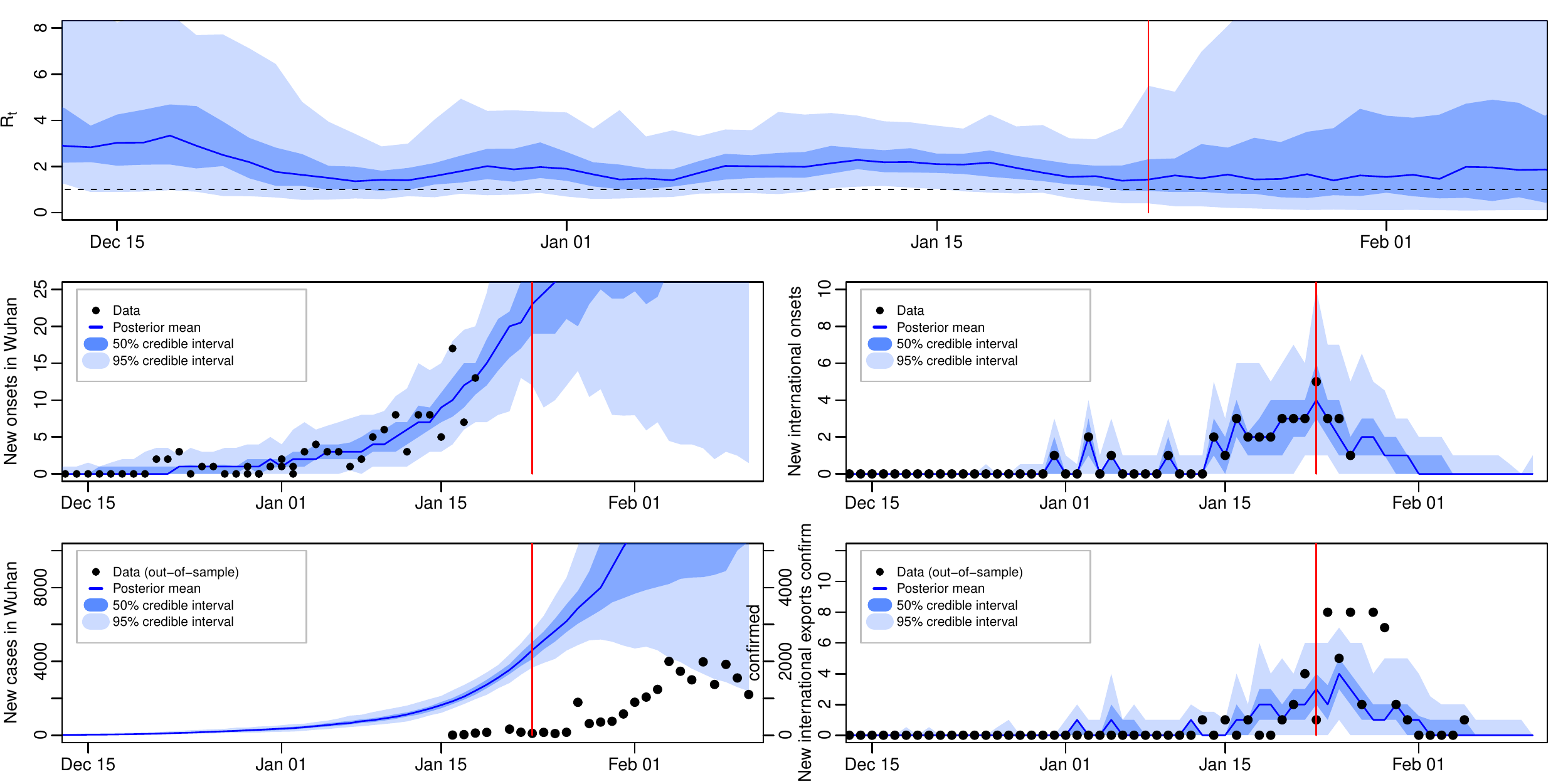}
	\caption{Results for the COVID-19 model using our methods.  Red line is date at which travel restrictions were introduced. Top: estimated reproduction number. Middle row:  estimated daily new confirmed cases in Wuhan (left) and internationally (right), both with in-sample data by date of symptom onset. Bottom row, left: estimated new symptomatic but possibly unconfirmed cases (left axis) and out-of-sample new confirmed cases data (right axis); right: estimated confirmed international cases by date of confirmation, and out-of-sample data.} \label{fig:covid}
\end{figure*}

\subsection{Estimating the time-varying reproduction number of COVID-19 in Wuhan, China}\label{sec:covid}
A compartmental model for estimating the time-varying reproduction number of COVID-19 in Wuhan, China, has recently been published in \cite{kucharski2020early}. The model has 15 compartments: susceptibles in Wuhan become exposed and either stay in Wuhan or depart internationally, then in either case pass through further stages being exposed, infective, symptomatic and confirmed.  The transmission rate is modelled as time-varying $(\beta_t)_{t\geq0}$,  a-priori by a geometric random walk, and $\beta_t$ is considered proportional to the reproductive number $R_t$. \cite{kucharski2020early} proposed a Sequential Monte Carlo (SMC) algorithm to estimate $(R_t)_{t\geq 0}$ which weights samples of $(\beta_t)_{t\geq0}$ by the likelihood of the associated ODE solution under a Poisson observation model. Our methods can be used to replace their ODE model with a discrete-time stochastic version of the compartmental model, and with their Poisson model replaced our binomial observation model from section \ref{sec:obs_model_z}. Our version of the SMC algorithm weights samples of $(\beta_t)_{t\geq0}$ by their approximate marginal likelihoods, computed using our multinomial filtering techniques. We jointly analyzed two of three data sets from \cite{kucharski2020early}: daily counts of new infectives by date of symptom onset in Wuhan, and internationally exported from Wuhan.

Figure \ref{fig:covid} shows our results in the format of \cite{kucharski2020early}. Compared to results obtained using their ODE model (see supplementary material), our estimates of $R_t$ are generally lower, and closer $1$ for the period after travel restrictions are introduced; and our credible intervals for the in-sample plots are generally wider, reflecting the stochastic nature of our compartmental model. In the bottom two plots our posteriors are mostly concentrated on lower values than those from the method of \cite{kucharski2020early}.

%\newpage

%switched this off for arxiv
%\section*{Broader Impact}
%Our methods have the potential to be applied to a very wide variety of stochastic epidemic models, for example to help predict the scale and duration of epidemics, estimate epidemiological parameters, and guide outbreak control measures. Researchers at the interface between epidemiology, machine learning and statistics may benefit from this research. As with all mathematical modelling techniques, inappropriate assumptions may lead to erroneous conclusions, and our approximate inference methods can only be as good as the compartmental models to which they apply. Compartmental models are important beyond epidemiology, for instance in ecology and in pharmacokinetics, so there is potential for impact on those fields too, but that is beyond the scope of the present work. 

\subsubsection*{Acknowledgements}
Nick Whiteley was supported by a Turing Fellowship from the Alan Turing Institute, GW4 and Jean Golding Institute seed-corn funding and thanks Lawrence Murray for discussions about epidemic models. Lorenzo Rimella was supported the Alan Turing Institute PhD Enrichment Scheme. 

\bibliography{SEIRbiblio}

\begin{thebibliography}{27}
\providecommand{\natexlab}[1]{#1}
\providecommand{\url}[1]{\texttt{#1}}
\expandafter\ifx\csname urlstyle\endcsname\relax
  \providecommand{\doi}[1]{doi: #1}\else
  \providecommand{\doi}{doi: \begingroup \urlstyle{rm}\Url}\fi

\bibitem[Andrieu et~al.(2010)Andrieu, Doucet, and
  Holenstein]{andrieu2010particle}
Christophe Andrieu, Arnaud Doucet, and Roman Holenstein.
\newblock Particle markov chain monte carlo methods.
\newblock \emph{Journal of the Royal Statistical Society: Series B (Statistical
  Methodology)}, 72\penalty0 (3):\penalty0 269--342, 2010.

\bibitem[Brauer(2008)]{brauer2008compartmental}
Fred Brauer.
\newblock \emph{Compartmental models in epidemiology}.
\newblock Springer, 2008.

\bibitem[Bret{\'o}(2018)]{breto2018modeling}
Carles Bret{\'o}.
\newblock Modeling and inference for infectious disease dynamics: a
  likelihood-based approach.
\newblock \emph{Statistical Science: a review journal of the Institute of
  Mathematical Statistics}, 33\penalty0 (1):\penalty0 57--69, 2018.

\bibitem[Briers et~al.(2010)Briers, Doucet, and Maskell]{briers2010smoothing}
Mark Briers, Arnaud Doucet, and Simon Maskell.
\newblock Smoothing algorithms for state--space models.
\newblock \emph{Annals of the Institute of Statistical Mathematics},
  62\penalty0 (1):\penalty0 61, 2010.

\bibitem[Brown et~al.(2016)Brown, Oleson, and Porter]{brown2016empirically}
Grant~D Brown, Jacob~J Oleson, and Aaron~T Porter.
\newblock An empirically adjusted approach to reproductive number estimation
  for stochastic compartmental models: A case study of two {Ebola} outbreaks.
\newblock \emph{Biometrics}, 72\penalty0 (2):\penalty0 335--343, 2016.

\bibitem[Brown et~al.(2018)Brown, Porter, Oleson, and Hinman]{brownABC}
Grant~D Brown, Aaron~T Porter, Jacob~J Oleson, and Jessica~A Hinman.
\newblock {Approximate Bayesian Computation} for spatial {SEIR(S)} epidemic
  models.
\newblock \emph{Spatial and Spatio-temporal Epidemiology}, 24:\penalty0 27--37,
  2018.

\bibitem[Capp{\'e} et~al.(2006)Capp{\'e}, Moulines, and
  Ryd{\'e}n]{cappe2006inference}
Olivier Capp{\'e}, Eric Moulines, and Tobias Ryd{\'e}n.
\newblock \emph{Inference in hidden Markov models}.
\newblock Springer Science \& Business Media, 2006.

\bibitem[Chowell et~al.(2004)Chowell, Hengartner, Castillo-Chavez, Fenimore,
  and Hyman]{chowell2004basic}
Gerardo Chowell, Nick~W Hengartner, Carlos Castillo-Chavez, Paul~W Fenimore,
  and Jim~Michael Hyman.
\newblock The basic reproductive number of {Ebola} and the effects of public
  health measures: the cases of {Congo} and {Uganda}.
\newblock \emph{Journal of Theoretical Biology}, 229\penalty0 (1):\penalty0
  119--126, 2004.

\bibitem[de~Lusignan et~al.(2020)de~Lusignan, Bernal, Zambon, Akinyemi,
  Amirthalingam, Andrews, Borrow, Byford, Charlett, Dabrera,
  et~al.]{de2020emergence}
Simon de~Lusignan, Jamie~Lopez Bernal, Maria Zambon, Oluwafunmi Akinyemi,
  Gayatri Amirthalingam, Nick Andrews, Ray Borrow, Rachel Byford, Andr{\'e}
  Charlett, Gavin Dabrera, et~al.
\newblock Emergence of a novel coronavirus {(COVID-19)}: protocol for extending
  surveillance used by the {Royal College} of general practitioners research
  and surveillance centre and public health {England}.
\newblock \emph{JMIR Public Health and Surveillance}, 6\penalty0 (2):\penalty0
  e18606, 2020.

\bibitem[Farach-Colton and Tsai(2015)]{farach2015exact}
Mart{\'\i}n Farach-Colton and Meng-Tsung Tsai.
\newblock Exact sublinear binomial sampling.
\newblock \emph{Algorithmica}, 73\penalty0 (4):\penalty0 637--651, 2015.

\bibitem[Fasiolo et~al.(2016)Fasiolo, Pya, and Wood]{fasiolo2016comparison}
Matteo Fasiolo, Natalya Pya, and Simon~N Wood.
\newblock A comparison of inferential methods for highly nonlinear state space
  models in ecology and epidemiology.
\newblock \emph{Statistical Science}, 31\penalty0 (1):\penalty0 96--118, 2016.

\bibitem[Funk and King(2020)]{funk2020choices}
Sebastian Funk and Aaron~A King.
\newblock Choices and trade-offs in inference with infectious disease models.
\newblock \emph{Epidemics}, 30:\penalty0 100383, 2020.

\bibitem[Gelman et~al.(1996)Gelman, Meng, and Stern]{gelman1996posterior}
Andrew Gelman, Xiao-Li Meng, and Hal Stern.
\newblock Posterior predictive assessment of model fitness via realized
  discrepancies.
\newblock \emph{Statistica Sinica}, 6\penalty0 (4):\penalty0 733--760, 1996.

\bibitem[Kucharski et~al.(2020)Kucharski, Russell, Diamond, Liu, Edmunds, Funk,
  Eggo, Sun, Jit, Munday, et~al.]{kucharski2020early}
Adam~J Kucharski, Timothy~W Russell, Charlie Diamond, Yang Liu, John Edmunds,
  Sebastian Funk, Rosalind~M Eggo, Fiona Sun, Mark Jit, James~D Munday, et~al.
\newblock Early dynamics of transmission and control of {COVID-19}: a
  mathematical modelling study.
\newblock \emph{The Lancet Infectious Diseases}, 20\penalty0 (5):\penalty0
  553--558, 2020.

\bibitem[Kurtz(1970)]{kurtz1970solutions}
Thomas~G Kurtz.
\newblock Solutions of ordinary differential equations as limits of pure jump
  {Markov} processes.
\newblock \emph{Journal of Applied Probability}, 7\penalty0 (1):\penalty0
  49--58, 1970.

\bibitem[Kurtz(1971)]{kurtz1971limit}
Thomas~G Kurtz.
\newblock Limit theorems for sequences of jump {Markov} processes approximating
  ordinary differential processes.
\newblock \emph{Journal of Applied Probability}, 8\penalty0 (2):\penalty0
  344--356, 1971.

\bibitem[Kypraios et~al.(2017)Kypraios, Neal, and
  Prangle]{kypraios2017tutorial}
Theodore Kypraios, Peter Neal, and Dennis Prangle.
\newblock A tutorial introduction to {Bayesian} inference for stochastic
  epidemic models using {Approximate Bayesian Computation}.
\newblock \emph{Mathematical Biosciences}, 287:\penalty0 42--53, 2017.

\bibitem[Lekone and Finkenst{\"a}dt(2006)]{lekone2006statistical}
Phenyo~E Lekone and B{\"a}rbel~F Finkenst{\"a}dt.
\newblock Statistical inference in a stochastic epidemic {SEIR} model with
  control intervention: {Ebola} as a case study.
\newblock \emph{Biometrics}, 62\penalty0 (4):\penalty0 1170--1177, 2006.

\bibitem[McKinley et~al.(2018)McKinley, Vernon, Andrianakis, McCreesh, Oakley,
  Nsubuga, Goldstein, and White]{mckinley2018approximate}
Trevelyan~J McKinley, Ian Vernon, Ioannis Andrianakis, Nicky McCreesh, Jeremy~E
  Oakley, Rebecca~N Nsubuga, Michael Goldstein, and Richard~G White.
\newblock {Approximate Bayesian computation and Simulation-Based Inference for
  Complex Stochastic Epidemic Models}.
\newblock \emph{Statistical Science}, 33\penalty0 (1):\penalty0 4--18, 2018.

\bibitem[Murray et~al.(2018)Murray, Lund{\'e}n, Kudlicka, Broman, and
  Sch{\"o}n]{murray2018delayed}
Lawrence Murray, Daniel Lund{\'e}n, Jan Kudlicka, David Broman, and Thomas
  Sch{\"o}n.
\newblock Delayed sampling and automatic {Rao-Blackwellization} of
  probabilistic programs.
\newblock In \emph{International Conference on Artificial Intelligence and
  Statistics (AISTATS)}. PMLR, 2018.

\bibitem[O'Neill(2010)]{o2010introduction}
Philip~D O'Neill.
\newblock Introduction and snapshot review: relating infectious disease
  transmission models to data.
\newblock \emph{Statistics in Medicine}, 29\penalty0 (20):\penalty0 2069--2077,
  2010.

\bibitem[Prem et~al.(2020)Prem, Liu, Russell, Kucharski, Eggo, Davies,
  et~al.]{prem2020effect}
Kiesha Prem, Yang Liu, Timothy~W Russell, Adam~J Kucharski, Rosalind~M Eggo,
  Nicholas Davies, et~al.
\newblock The effect of control strategies to reduce social mixing on outcomes
  of the {COVID-19} epidemic in {Wuhan, China}: a modelling study.
\newblock \emph{The Lancet Public Health}, 2020.

\bibitem[Roberts et~al.(2001)Roberts, Rosenthal, et~al.]{roberts2001optimal}
Gareth~O Roberts, Jeffrey~S Rosenthal, et~al.
\newblock Optimal scaling for various {Metropolis-Hastings} algorithms.
\newblock \emph{Statistical science}, 16\penalty0 (4):\penalty0 351--367, 2001.

\bibitem[Roberts et~al.(2015)Roberts, Andreasen, Lloyd, and
  Pellis]{roberts2015nine}
Mick Roberts, Viggo Andreasen, Alun Lloyd, and Lorenzo Pellis.
\newblock Nine challenges for deterministic epidemic models.
\newblock \emph{Epidemics}, 10:\penalty0 49--53, 2015.

\bibitem[Rubrichi et~al.(2018)Rubrichi, Smoreda, and
  Musolesi]{rubrichi2018comparison}
Stefania Rubrichi, Zbigniew Smoreda, and Mirco Musolesi.
\newblock A comparison of spatial-based targeted disease mitigation strategies
  using mobile phone data.
\newblock \emph{EPJ Data Science}, 7\penalty0 (1):\penalty0 1--15, 2018.

\bibitem[Stocks(2019)]{stocks2019iterated}
Theresa Stocks.
\newblock Iterated filtering methods for {Markov} process epidemic models.
\newblock In Leonhard Held, Niel Hens, Philip D~O'Neill, and Jacco Wallinga,
  editors, \emph{Handbook of Infectious Disease Data Analysis}, chapter~11,
  pages 199--220. CRC Press, 2019.

\bibitem[Wu et~al.(2020)Wu, Leung, and Leung]{wu2020nowcasting}
Joseph~T Wu, Kathy Leung, and Gabriel~M Leung.
\newblock Nowcasting and forecasting the potential domestic and international
  spread of the {2019-nCoV} outbreak originating in {Wuhan, China}: a modelling
  study.
\newblock \emph{The Lancet}, 395\penalty0 (10225):\penalty0 689--697, 2020.

\end{thebibliography}
\bibliographystyle{plainnat}

%\newpage

%switched this off for arxiv
%\section*{Broader Impact}
%Our methods have the potential to be applied to a very wide variety of stochastic epidemic models, for example to help predict the scale and duration of epidemics, estimate epidemiological parameters, and guide outbreak control measures. Researchers at the interface between epidemiology, machine learning and statistics may benefit from this research. As with all mathematical modelling techniques, inappropriate assumptions may lead to erroneous conclusions, and our approximate inference methods can only be as good as the compartmental models to which they apply. Compartmental models are important beyond epidemiology, for instance in ecology and in pharmacokinetics, so there is potential for impact on those fields too, but that is beyond the scope of the present work. 

\appendix

\maketitle

\section{Reader's guide to the appendices}
Appendix \ref{sec:stoch_vs_determ} discusses some of the shortcomings of deterministic compartmental models, expanding on the discussion in section \ref{subsec:determ_and_stoch_models}. Appendix \ref{sec:filt_and_smooth} contains proofs of lemmas \ref{lem:prediction} and \ref{lem:update},  plus corresponding results and proofs for the observation model from section \ref{sec:obs_model_z}, and derivation of smoothing algorithms. Appendix \ref{sec:ebola_supp} provides additional details and numerical results for the Ebola example in section \ref{sec:ebola}. Additional details and numerical results for the COVID-19 example in section \ref{sec:covid} are given in appendix \ref{sec:covid_sup}.

\section{Stochastic vs. deterministic SEIR models}\label{sec:stoch_vs_determ}
Perhaps the most widely applied formulation of a compartmental model is as a system of ordinary differential equations. 

\paragraph{SEIR example.} For a population of size $n$, the SEIR ODE model is:
\begin{equation}
\frac{\mathrm{d}S}{\mathrm{d} t} = -  \frac{\beta S I}{n},\qquad \frac{\mathrm{d}E}{\mathrm{d} t} =  \frac{\beta S I}{n} - \rho E ,\qquad \frac{\mathrm{d}I}{\mathrm{d} t} = \rho E -\gamma I , \qquad \frac{\mathrm{d}R}{\mathrm{d} t} = \gamma I,\label{eq:SEIR_ode}
\end{equation}
initialized with nonnegative integers in each of the compartments $(S_0, E_0, I_0, R_0)$ such that $S_0+E_0+I_0+R_0=n$.

The most obvious drawback of ODE models is that, once model parameters and the initial population are fixed, any discrepancy between observed data and the solution of the ODE has to be explained as observation error, which is a serious restriction from a modelling point of view. In practice one can try to estimate unknown parameters and/or the initial condition by numerically minimizing this discrepancy, e.g. under squared error loss. Standard errors for parameter estimates can be derived using asymptotic theory for nonlinear least squares, but calculation of them in practice involves numerical differentiation of the ODE solution flow w.r.t. parameters  \citep{chowell2004basic}. When a probabilistic observation model is specified, Bayesian approaches allow for uncertainty quantification over parameters via posterior distributions, but evaluating the likelihood function for model parameters still involves numerical solution of the ODE.

Figure \ref{fig:seir_sim} shows simulation output for the proportion of infective individuals in the discrete-time SEIR model with $n=10^2,10^3,10^5, 10^7$ and initial conditions $(S_0, E_0, I_0, R_0)=(n-1,1,0,0)$, with $\beta=0.8$, $\rho=1/9$ and $\gamma=0.2$. It is evident that the sample paths become smoother as $n$ grows, but there is still substantial variability across sample paths even with $n=10^7$. This can be explained by the fact that since $(E_0, I_0)=(1,0)$ independently of $n$, the numbers of exposed and infective individuals in the  first few time periods of the epidemic are typically very small, despite the fact that the overall population size may be large, and the statistical variability associated with these small numbers has a lasting effect on the overall timing of the outbreak.

To explain how this relates to ODE limits, for $n\geq1$ and initial proportions of the population $(s_0, e_0, i_0, r_0)$, i.e., $s_0+ e_0+ i_0 + r_0=1$, let us write $\mathcal{D}_n(s_0, e_0, i_0, r_0)$ for the collection of ODEs in (\ref{eq:SEIR_ode}) together with the initial condition  $(n s_0, n e_0, n i_0, n r_0)$. It can be checked by substitution that $t\mapsto (S_t, E_t, I_t, R_t)$ is a solution of $\mathcal{D}_1(s_0, e_0, i_0, r_0)$ if and only if $t\mapsto (n S_t, n E_t, n I_t, n R_t)$ is a solution of $\mathcal{D}_n(s_0, e_0, i_0, r_0)$. Thus we see that $n$ plays a trivial role in the ODE model: it is just a scaling factor for the solution.

The limit theorems of \cite{kurtz1970solutions, kurtz1971limit} applied in this situation pertain to the probabilistic convergence on a finite time-window as $n\to\infty$ of the path of the continuous-time SEIR Markov process with initial condition $(S_0, E_0, I_0, R_0)=(n-1,1,0,0)$ and compartment counts normalized by $n$, to the solution of $\mathcal{D}_1(s_0, e_0, i_0, r_0)$, where $(s_0, e_0, i_0, r_0) = \lim_{n\to\infty}(n-1,1,0,0)/n=(1,0,0,0)$. However  in the solution of $\mathcal{D}_1(1,0,0,0)$, the exposed and infective compartments are always empty, i.e. an epidemic never occurs. This can be reconciled with figure \ref{fig:seir_sim} by observing that there the peak in the number of infectives typically occurs later as $n$ grows: the limiting $n\to\infty$ case is that in which a peak \emph{never} occurs.

This illustrates that sequences of well-behaved stochastic models can have ODE limits which are unrealistic to the point of being pathological, and therefore these limits are not always a sensible justification for using ODE models.

\begin{figure}[httb!]
	\centering
	\includegraphics[width=\textwidth]{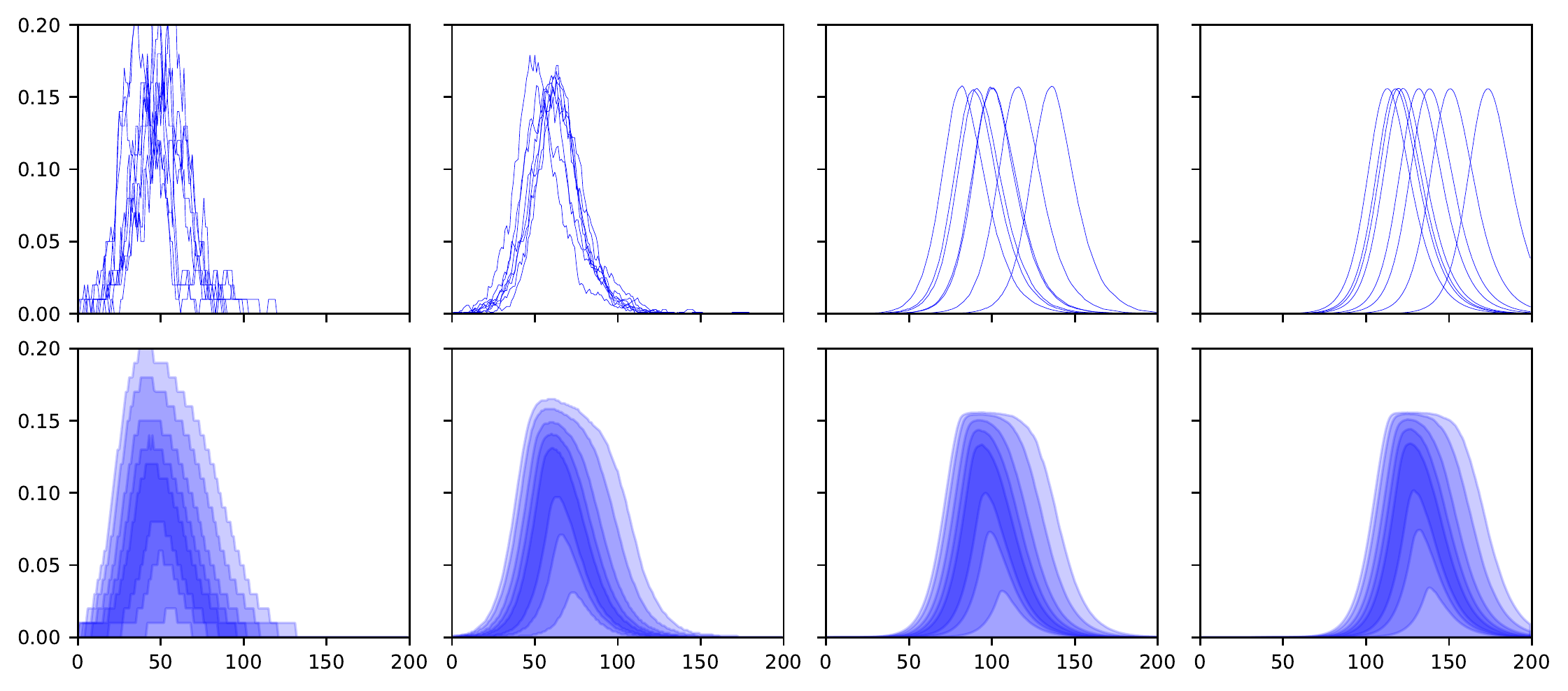}
	\caption{Proportion of infective individuals against time in simulations from the discrete-time stochastic SEIR model with $\beta=0.8$, $\rho=1/5$, $\gamma=1/9$ and $h=1$ over $200$ time steps. Columns left to right: $n=10^2,10^3,10^5, 10^7$ and initial conditions $(S_0, E_0, I_0, R_0)=(n-1,1,0,0)$. Top row: 10 realizations from the model. Bottom row: at each time step shaded regions indicate percentile intervals of the form $[\alpha,1-\alpha]\times 100\%$, for $\alpha=0.05,0.1,0.2,0.3,0.4$ estimated from $10^4$ realizations from the model. }\label{fig:seir_sim}
\end{figure}

\section{Supplementary information about multinomial filtering and smoothing}\label{sec:filt_and_smooth}

\subsection{Proofs of Lemmas \ref{lem:prediction} and \ref{lem:update}}
\proof[Proof of Lemma \ref{lem:prediction}]{
	Since $\mu$ is the probability mass function associated with $\mathrm{Mult}(n,\bm{\pi})$, $\mathbb{E}_{\mu}[\bm{\eta}(\mathbf{x})]=\bm{\pi}$, so we need to prove that $\sum_{\mathbf{x} \in \mathcal{S}_{m,n}} \mu(\mathbf{x}) M_t(\mathbf{x},\bm{\pi},\cdot)$ is the probability mass function associated with $\mathrm{Mult}(n,\bm{\pi}^{\mathrm{T}}\mathbf{K}_{t,\bm{\pi}})$. We shall achieve this using the unique characterization of the probability mass function by its moment generating function.

	With $\mathbf{x}\sim \mu$, let $\widetilde{\mathbf{x}}\sim M_t(\mathbf{x},\bm{\pi},\cdot)$, so by construction $\sum_{\mathbf{x} \in \mathcal{S}_{m,n}} \mu(\mathbf{x}) M_t(\mathbf{x},\bm{\pi},\cdot)$ is the marginal  probability mass function of $\widetilde{\mathbf{x}}$. Therefore by the definition of $M_t$ in section \ref{sec:approx_prediction},  $\widetilde{\mathbf{x}} = (\mathbf{1}_m^{\mathrm{T}} \mathbf{Z} ) ^{\mathrm{T}}$, where the rows of $\mathbf{Z}$ are conditionally  independent given $\mathbf{x}$, and the conditional distribution of  the  $i$th row of $\mathbf{Z}$ given $\mathbf{x}$ is $\mathrm{Mult}(x^{(i)},\mathbf{K}_{t,\bm{\pi}}^{(i,\cdot)})$. Using these facts, the moment generating function of $\widetilde{\mathbf{x}}$ can be written:
	\begin{align}
	\mathbb{E}\left[ \exp(\widetilde{\mathbf{x}}^{\mathrm{T}}\mathbf{b})\right] &=\mathbb{E}\left[\exp(\mathbf{1}_m^{\mathrm{T}}\mathbf{Z}\mathbf{b})\right]\nonumber\\
	&=\mathbb{E}\left[\exp\left(\sum_{i,j=1}^{m}Z^{(i,j)} b^{(j)}\right)\right]\nonumber\\
	&=\mathbb{E}\left[\mathbb{E}\left[\left.\exp\left(\sum_{i,j=1}^{m}Z^{(i,j)}b^{(j)}\right)\right|\mathbf{x}\right]\right]\nonumber \\
	&=\mathbb{E}\left[\prod_{i=1}^{m}\mathbb{E}\left[\left.\exp\left(\sum_{j=1}^mZ^{(i,j)}b^{(j)}\right)\right|x^{(i)}\right]\right].\label{eq:mgf_proof}
	\end{align}
	Now we use the fact that, again by definition of $M_{t}$, $\mathbb{E}\left[\left.\exp\left(\sum_{j=1}^{m}Z^{(i,j)}b^{(j)}\right)\right|x^{(i)}\right]$
	is the m.g.f. of $\text{Mult}(x^{(i)},\mathbf{K}_{t,\bm{\pi}}^{(i,\cdot)})$,
	where $\mathbf{K}_{t,\bm{\pi}}$ has elements $k_{t,\bm{\pi}}^{(i,j)}$,
	i.e. :
	\begin{equation*}
	\mathbb{E}\left[\left.\exp\left(\sum_{j=1}^{m}Z^{(i,j)}b^{(j)}\right)\right|x^{(i)}\right]=\left(\sum_{j=1}^{m}k_{t,\bm{\pi}}^{(i,j)}e^{b^{(j)}}\right)^{x^{(i)}}.
	\end{equation*}
	Substituting this into (\ref{eq:mgf_proof}) and then using
	$\mathbf{x}\sim\mu=\text{Mult}(n,\bm{\mathbf{\pi}})$, 
	\begin{align*}
	\mathbb{E}\left[ \exp(\widetilde{\mathbf{x}}^{\mathrm{T}}\mathbf{b})\right]&= \mathbb{E}\left[\prod_{i=1}^{m}\mathbb{E}\left[\left.\exp\left(\sum_{j=1}^{m}Z^{(i,j)}b^{(j)}\right)\right|x^{(i)}\right]\right] =\mathbb{E}\left[\prod_{i=1}^{m}\left(\sum_{j=1}^{m}k_{t,\bm{\pi}}^{(i,j)}e^{b^{(j)}}\right)^{x^{(i)}}\right]\\
	& =\sum_{(x^{(1)},\ldots,x^{(m)})\in\mathcal{S}_{m,n}}n!\prod_{i=1}^{m}\frac{(\pi^{(i)})^{x^{(i)}}}{x^{(i)}!}\left(\sum_{j=1}^{m}k_{t,\bm{\pi}}^{(i,j)}e^{b^{(j)}}\right)^{x^{(i)}}\\
	& =\sum_{(x^{(1)},\ldots,x^{(m)})\in\mathcal{S}_{m,n}}n!\prod_{i=1}^{m}\frac{1}{x^{(i)}!}\left(\sum_{j=1}^{m}\pi^{(i)}k_{t,\bm{\pi}}^{(i,j)}e^{b^{(j)}}\right)^{x^{(i)}}\\
	& =\left(\sum_{i=1}^{m}\sum_{j=1}^{m}\pi^{(i)}k_{t,\bm{\pi}}^{(i,j)}e^{b^{(j)}}\right)^{n}= \left(\sum_{j=1}^m (\bm{\pi}^{\mathrm{T}}\mathbf{K}_{t,\bm{\pi}})^{(j)}e^{b^{(j)}}\right)^{n},
	\end{align*}
	where the penultimate equality holds by the multinomial theorem. The proof is completed by noticing that $\left(\sum_{j=1}^m (\bm{\pi}^{\mathrm{T}}\mathbf{K}_{t,\bm{\pi}})^{(j)}e^{b^{(j)}}\right)^{n}$
	is the moment generating function of $\text{Mult}(n,\bm{\pi}^{\mathrm{T}}\mathbf{K}_{t,\bm{\pi}})$.
	\qed}

\proof[Proof of Lemma \ref{lem:update}]{
	Under the distributional assumptions in the statement of the lemma, for $\mathbf{x}\in\mathcal{S}_{m,n}$,
	\begin{equation*}
	p(\mathbf{x})=\frac{n!}{\prod_{j=1}^{m}x^{(j)}!}\prod_{j=1}^{m}(\pi^{(j)})^{x^{(j)}},
	\end{equation*}
	and for $0\leq y^{(j)}\leq x^{(j)},\,j=1,\ldots,m$,
	\begin{equation*}
	p(\mathbf{y}|\mathbf{x})=\prod_{j=1}^{m}\frac{x^{(j)}!}{y^{(j)}!(x^{(j)}-y^{(j)})!}(q^{(j)})^{y^{(j)}}(1-q^{(j)})^{x^{(j)}-y^{(j)}}.
	\end{equation*}
	Therefore
	\begin{equation}
	p(\mathbf{x},\mathbf{y})=n!\prod_{j=1}^{m}\frac{(\pi^{(j)})^{x^{(j)}}(q^{(j)})^{y^{(j)}}(1-q^{(j)})^{x^{(j)}-y^{(j)}}}{y^{(j)}!(x^{(j)}-y^{(j)})!},\label{eq:p(x,y)}
	\end{equation}
	and
	\begin{align}
	p(\mathbf{y}) & =\sum_{\{x^{(j)}:x^{(j)}\geq y^{(j)},\sum_{j}x^{(j)}=n\}}n!\prod_{j=1}^{m}\frac{(\pi^{(j)})^{x^{(j)}}(q^{(j)})^{y^{(j)}}(1-q^{(j)})^{x^{(j)}-y^{(j)}}}{y^{(j)}!(x^{(j)}-y^{(j)})!}\nonumber \\
	& =n!\left(\prod_{j=1}^{m}\frac{(q^{(j)})^{y^{(j)}}(\pi^{(j)})^{y^{(j)}}}{y^{(j)}!}\right)\nonumber\\
	&\qquad \times \sum_{\{x^{(j)}:x^{(j)}\geq y^{(j)},\sum_{j}x^{(j)}=n\}}\left(\prod_{j=1}^{m}\frac{(\pi^{(j)})^{x^{(j)}-y^{(j)}}(1-q^{(j)})^{x^{(j)}-y^{(j)}}}{(x^{(j)}-y^{(j)})!}\right)\nonumber \\
	& =n!\left(\prod_{j=1}^{m}\frac{(q^{(j)})^{y^{(j)}}(\pi^{(j)})^{y^{(j)}}}{y^{(j)}!}\right)\frac{\left(\sum_{j=1}^{m}\pi^{(j)}(1-q^{(j)})\right)^{n-\sum_{j=1}^{m}y^{(j)}}}{(n-\sum_{j=1}^{m}y^{(j)})!},\label{eq:p(y)}
	\end{align}
	where the final equality holds by the multinomial theorem. 
	Dividing
	(\ref{eq:p(x,y)}) by (\ref{eq:p(y)}) gives
	\begin{align*}
	p(\mathbf{x}|\mathbf{y}) & =\left(n-\sum_{j=1}^{m}y^{(j)}\right)!\prod_{j=1}^{m}\frac{1}{(x^{(j)}-y^{(j)})!}\left(\frac{\pi^{(j)}(1-q^{(j)})}{\sum_{i=1}^{m}\pi^{(i)}(1-q^{(i)})}\right)^{x^{(j)}-y^{(j)}},
	\end{align*}
	which is the probability mass function of $\mathbf{y}+\mathbf{x}^{\star}$
	given in the statement of the lemma.
	\qed}

\begin{remark}
	The probability mass function in (\ref{eq:p(y)}) has the interpretation of being a multinomial distribution over $n+1$ compartments, where the count variable associated with  the $(m+1)$th compartment is $n-\sum_{j=1}^{m}y^{(j)}$.
\end{remark}

\subsection{Approximating the prediction and update operations for $(\mathbf{Z}_t,\mathbf{Y}_t)_{t\geq 1}$}\label{sec:Z_approximations}
Algorithm \ref{alg:filtering_inc} is derived from lemma \ref{lem:Z_predict} and \ref{lem:Z_update}. Given $\mathbf{Z}$ and $\bm{\eta}$, let $\overline{M}_t(\mathbf{Z},\bm{\eta},\cdot)$ be the probability mass function of a random $m
\times m $ matrix, say $\widetilde{\mathbf{Z}}$, such that $\mathbf{1}_m^\mathrm{T} \mathbf{Z} = (\widetilde{\mathbf{Z}}\mathbf{1}_m)^\mathrm{T}$ with probability $1$, and such that given the row-sums $\widetilde{\mathbf{Z}}\mathbf{1}_m$, rows of $\widetilde{\mathbf{Z}}$ are independent and the conditional distribution of the $i$th row is $\mathrm{Mult}(x^{(i)}, \mathbf{K}_{t,\bm{\eta}}^{(i,\cdot)}  )$ where $\mathbf{x}^{\mathrm{T}}=\mathbf{1}_m^{\mathrm{T}} \mathbf{Z}$. So by construction $\overline{M}_t(\mathbf{Z}_{t-1},\bm{\eta}(\mathbf{1}_{m}^{\mathrm{T}}\mathbf{Z}_{t-1}),\mathbf{Z}_{t})$ gives the transition probabilities of the Markov chain $(\mathbf{Z}_{t})_{t\geq1}$ defined in section \ref{sec:individ}.
\begin{lemma}\label{lem:Z_predict}
	If for a given $m\times m$ matrix $\mathbf{P}$, $\overline{\mu}$
	is the probability mass function associated with $\mathrm{Mult}(n,\mathbf{P})$
	and $\mathbb{E}_{\overline{\mu}}[\bm{\eta}(\mathbf{1}_{m}^{\mathrm{T}}\mathbf{Z})]$
	is the expected value of $\mathbf{1}_{m}^{\mathrm{T}}\mathbf{Z}$
	when $\mathbf{Z}\sim\mathrm{Mult}(n,\mathbf{P})$, then $\sum_{\mathbf{Z}}\overline{\mu}(\mathbf{Z})\overline{M}_t(\mathbf{Z},\mathbb{E}_{\overline{\mu}}[\bm{\eta}(\mathbf{1}_{m}^{\mathrm{T}}\mathbf{Z})],\cdot)$
	is the probability mass function associated with $\mathrm{Mult}(n,(\bm{\pi}\otimes\mathbf{1}_{m})\circ\mathbf{K}_{t,\bm{\pi}})$
	where $\bm{\pi}^{\mathrm{T}}=\mathbf{1}_{m}^{\mathrm{T}}\mathbf{P}$.
\end{lemma}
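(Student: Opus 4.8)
The plan is to reuse the moment generating function (m.g.f.) argument from the proof of Lemma~\ref{lem:prediction}, but now tracking the full $m\times m$ matrix $\widetilde{\mathbf{Z}}$ produced by $\overline{M}_t$ rather than only its column sums. First I would record the elementary fact that for $\mathbf{Z}\sim\mathrm{Mult}(n,\mathbf{P})$ one has $\mathbb{E}[\mathbf{Z}]=n\mathbf{P}$, hence $\mathbb{E}[\mathbf{1}_m^{\mathrm{T}}\mathbf{Z}]=n\,\mathbf{1}_m^{\mathrm{T}}\mathbf{P}=n\bm{\pi}^{\mathrm{T}}$ and therefore $\mathbb{E}_{\overline{\mu}}[\bm{\eta}(\mathbf{1}_m^{\mathrm{T}}\mathbf{Z})]=\bm{\pi}$. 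This reduces the claim to showing that $\sum_{\mathbf{Z}}\overline{\mu}(\mathbf{Z})\,\overline{M}_t(\mathbf{Z},\bm{\pi},\cdot)$ is the probability mass function of $\mathrm{Mult}\bigl(n,(\bm{\pi}\otimes\mathbf{1}_m)\circ\mathbf{K}_{t,\bm{\pi}}\bigr)$. I would also note that, by its definition, $\overline{M}_t(\mathbf{Z},\bm{\pi},\cdot)$ depends on $\mathbf{Z}$ only through the column-sum vector $\mathbf{x}^{\mathrm{T}}=\mathbf{1}_m^{\mathrm{T}}\mathbf{Z}$, which is what makes the mixture well defined.

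Next, let $\mathbf{Z}\sim\overline{\mu}$ and $\widetilde{\mathbf{Z}}\sim\overline{M}_t(\mathbf{Z},\bm{\pi},\cdot)$, so that the displayed mixture is the marginal law of $\widetilde{\mathbf{Z}}$, and compute its m.g.f.\ $\mathbb{E}\bigl[\exp\bigl(\sum_{i,j}\widetilde{Z}^{(i,j)}b^{(i,j)}\bigr)\bigr]$ for an arbitrary $m\times m$ array of arguments. Conditioning on $\mathbf{x}^{\mathrm{T}}=\mathbf{1}_m^{\mathrm{T}}\mathbf{Z}$ and using that the rows of $\widetilde{\mathbf{Z}}$ are then independent with the $i$th row distributed $\mathrm{Mult}(x^{(i)},\mathbf{K}_{t,\bm{\pi}}^{(i,\cdot)})$, the inner conditional expectation factorizes into $\prod_{i=1}^{m}c_i^{\,x^{(i)}}$ with $c_i\coloneqq\sum_{j=1}^{m}k_{t,\bm{\pi}}^{(i,j)}e^{b^{(i,j)}}$, exactly as in Lemma~\ref{lem:prediction}. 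The one genuinely new step is then to write $x^{(i)}=\sum_{k}Z^{(k,i)}$ and rearrange $\prod_{i}c_i^{\sum_k Z^{(k,i)}}=\prod_{k,i}c_i^{Z^{(k,i)}}$; taking the expectation over $\mathbf{Z}\sim\mathrm{Mult}(n,\mathbf{P})$ is then a single application of the multinomial m.g.f., giving $\bigl(\sum_{k,i}P^{(k,i)}c_i\bigr)^{n}$. Because $\sum_{k}P^{(k,i)}=\pi^{(i)}$, this equals $\bigl(\sum_{i}\pi^{(i)}c_i\bigr)^{n}=\bigl(\sum_{i,j}\pi^{(i)}k_{t,\bm{\pi}}^{(i,j)}e^{b^{(i,j)}}\bigr)^{n}$, which is precisely the m.g.f.\ of $\mathrm{Mult}\bigl(n,(\bm{\pi}\otimes\mathbf{1}_m)\circ\mathbf{K}_{t,\bm{\pi}}\bigr)$, since that distribution has cell-probability matrix with $(i,j)$ entry $\pi^{(i)}k_{t,\bm{\pi}}^{(i,j)}$. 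Uniqueness of a probability mass function given its m.g.f.\ then closes the argument.

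I do not expect a serious obstacle: the proof is essentially a decorated version of that of Lemma~\ref{lem:prediction}, and the only points needing care are the re-indexing $\prod_{i}c_i^{\sum_k Z^{(k,i)}}=\prod_{k,i}c_i^{Z^{(k,i)}}$ (which converts the column-sum exponents coming from $\overline{M}_t$ into the shape to which the multinomial m.g.f.\ of $\mathbf{Z}$ applies) and the observation that $\overline{M}_t(\mathbf{Z},\bm{\pi},\cdot)$ depends on $\mathbf{Z}$ only via $\mathbf{1}_m^{\mathrm{T}}\mathbf{Z}$. A shorter, purely probabilistic alternative would avoid m.g.f.s altogether: realize $\mathbf{Z}\sim\mathrm{Mult}(n,\mathbf{P})$ as $n$ i.i.d.\ assignments of individuals to cells $(k,i)$, note that each individual's second coordinate is then marginally $i$ with probability $\pi^{(i)}$, and observe that $\overline{M}_t$ independently re-draws a third coordinate $j$ for each individual from $\mathbf{K}_{t,\bm{\pi}}^{(i,\cdot)}$; composing these, each individual independently lands in cell $(i,j)$ with probability $\pi^{(i)}k_{t,\bm{\pi}}^{(i,j)}$, so $\widetilde{\mathbf{Z}}$ counts i.i.d.\ multinomial trials with parameter matrix $(\bm{\pi}\otimes\mathbf{1}_m)\circ\mathbf{K}_{t,\bm{\pi}}$. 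For consistency with the rest of the appendix I would present the m.g.f.\ version.
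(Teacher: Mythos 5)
Your proposal is correct and follows essentially the same route as the paper: reduce to $\mathbb{E}_{\overline{\mu}}[\bm{\eta}(\mathbf{1}_m^{\mathrm{T}}\mathbf{Z})]=\bm{\pi}$, condition on the column sums of $\mathbf{Z}$ so the m.g.f.\ of $\widetilde{\mathbf{Z}}$ factorizes into $\prod_i\bigl(\sum_j k_{t,\bm{\pi}}^{(i,j)}e^{b^{(i,j)}}\bigr)^{x^{(i)}}$, and identify the result as the m.g.f.\ of $\mathrm{Mult}\bigl(n,(\bm{\pi}\otimes\mathbf{1}_m)\circ\mathbf{K}_{t,\bm{\pi}}\bigr)$. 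The only (immaterial) difference is the last step: the paper sums over the column-sum vector, distributed $\mathrm{Mult}(n,\bm{\pi})$, and invokes the multinomial theorem, whereas you re-index $\prod_i c_i^{\sum_k Z^{(k,i)}}=\prod_{k,i}c_i^{Z^{(k,i)}}$ and apply the matrix-multinomial m.g.f.\ directly — both are valid and yield the same closing identity.
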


\begin{proof}
	The proof is similar to the proof of Lemma 1, so some steps and commentary
	are omitted. Note $\mathbb{E}_{\overline{\mu}}[\bm{\eta}((\mathbf{1}_{m}^{\mathrm{T}}\mathbf{Z})^{\mathrm{T}})]=(\mathbf{1}_{m}^{\mathrm{T}}\mathbf{P})^{\mathrm{T}}=\bm{\pi}$,
	and let $\widetilde{\mathbf{Z}}\sim\overline{M}(\mathbf{Z},\bm{\pi},\cdot)$.
	The moment generating function of $\mathbf{\widetilde{Z}}$ is:
	\begin{align*}
	\mathbb{E}[\exp(\mathbf{1}_{m}^{\mathrm{T}}(\mathbf{\widetilde{Z}}\circ\mathbf{B})\mathbf{1}_{m})] & =\mathbb{E}\left[\prod_{i=1}^{m}\exp\left(\sum_{j=1}^{m}\widetilde{z}^{(i,j)}b^{(i,j)}\right)\right]\\
	& =\mathbb{E}\left[\mathbb{E}\left[\left.\prod_{i=1}^{m}\exp\left(\sum_{j=1}^{m}\widetilde{z}^{(i,j)}b^{(i,j)}\right)\right|\mathbf{Z}\right]\right]\\
	& =\mathbb{E}\left[\prod_{i=1}^{m}\mathbb{E}\left[\left.\exp\left(\sum_{j=1}^{m}\widetilde{z}^{(i,j)}b^{(i,j)}\right)\right|(\mathbf{1}_{m}^{\mathrm{T}}\mathbf{Z})^{(i)}\right]\right]\\
	& =\mathbb{E}\left[\prod_{i=1}^{m}\left(\sum_{j=1}^{m}k_{t,\bm{\pi}}^{(i,j)}b^{(i,j)}\right)^{(\mathbf{1}_{m}^{\mathrm{T}}\mathbf{Z})^{(i)}}\right]\\
	& =n!\sum_{(x^{(1)},\ldots,x^{(m)})\in\mathcal{S}_{m,n}}\prod_{i=1}^{m}\frac{(\pi^{(i)})^{x^{(i)}}}{x^{(i)}!}\left(\sum_{j=1}^{m}k_{t,\bm{\pi}}^{(i,j)}e^{b^{(i,j)}}\right)^{x^{(i)}}\\
	& =\left(\sum_{i=1}^{m}\sum_{j=1}^{m}\pi^{(i)}k_{t,\bm{\pi}}^{(i,j)}e^{b^{(i,j)}}\right)^{n}.
	\end{align*}
\end{proof}
\begin{lemma}\label{lem:Z_update}
	Suppose that $\mathbf{Z}\sim\mathrm{Mult}(n,\mathbf{P})$ for a given
	$m\times m$ matrix $\mathbf{P}$, and given $\mathbf{Z}$, $\mathbf{Y}$
	is a matrix with conditionally independent entries distributed: $y^{(i,j)}\sim\mathrm{Bin}(z^{(i,j)},q^{(i,j)})$.
	Then the conditional distribution of $\mathbf{Z}$ given $\mathbf{Y}$
	is equal to that of $\mathbf{Y}+\mathbf{Z}^{\star}$ where
	\begin{equation*}
	\mathbf{Z}^{\star}\sim\mathrm{Mult}\left(n-\mathbf{1}_{m}^{\mathrm{T}}\mathbf{Y}\mathbf{1}_{m},\frac{\mathbf{P}\circ(\mathbf{1}_{m}\otimes\mathbf{1}_{m}-\mathbf{Q})}{1-\mathbf{1}_{m}^{\mathrm{T}}(\mathbf{P}\circ\mathbf{Q})\mathbf{1}_{m}}\right)
	\end{equation*}
	and 
	\begin{equation*}
	\mathbb{E}\left[\mathbf{Z}|\mathbf{Y}\right]= \mathbf{Y} + (n-\mathbf{1}_{m}^{\mathrm{T}}\mathbf{Y}\mathbf{1}_{m})\frac{\mathbf{P}\circ(\mathbf{1}_{m}\otimes\mathbf{1}_{m}-\mathbf{Q})}{1-\mathbf{1}_{m}^{\mathrm{T}}(\mathbf{P}\circ\mathbf{Q})\mathbf{1}_{m}}.
	\end{equation*}
	Moreover
	\begin{eqnarray*}
		\log p(\mathbf{Y})=\log(n!) + \mathbf{1}_m^{\mathrm{T}} (\mathbf{Y} \circ \log \mathbf{P} )\mathbf{1}_m + \mathbf{1}_m^{\mathrm{T}} (\mathbf{Y} \circ \log \mathbf{Q})\mathbf{1}_m -  \mathbf{1}_m^{\mathrm{T}} \log(\mathbf{Y} !)\mathbf{1}_m \\
		\qquad+ (n-\mathbf{1}_m^{\mathrm{T}} \mathbf{Y} \mathbf{1}_m) \log (1 -\mathbf{1}_m^{\mathrm{T}} (\mathbf{P} \circ \mathbf{Q}) \mathbf{1}_m ) - \log ((n-\mathbf{1}_m^{\mathrm{T}} \mathbf{Y} \mathbf{1}_m)!).
	\end{eqnarray*}
\end{lemma}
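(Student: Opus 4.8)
The plan is to obtain Lemma~\ref{lem:Z_update} as a corollary of Lemma~\ref{lem:update} by vectorization, with the direct calculation available as a fall-back. Fix a bijection between the $m^2$ index pairs $(i,j)$ and $\{1,\dots,m^2\}$, and let $\mathbf{z},\mathbf{y}\in\mathbb{Z}_{\geq0}^{m^2}$ and $\bm{\pi},\mathbf{q}\in[0,1]^{m^2}$ be the vectors listing the entries of $\mathbf{Z},\mathbf{Y},\mathbf{P},\mathbf{Q}$ in this order. Since the entries of $\mathbf{P}$ are nonnegative and sum to $1$, $\bm{\pi}$ is a length-$m^2$ probability vector, and the distributional assumptions on $(\mathbf{Z},\mathbf{Y})$ are exactly those of Lemma~\ref{lem:update} with $m$ replaced by $m^2$ and $n$ unchanged. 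Lemma~\ref{lem:update} then yields, conditionally on $\mathbf{y}$, that $\mathbf{z}\stackrel{d}{=}\mathbf{y}+\mathbf{z}^\star$ with $\mathbf{z}^\star\sim\mathrm{Mult}(n-\mathbf{1}_{m^2}^{\mathrm{T}}\mathbf{y},\,(\bm{\pi}\circ(\mathbf{1}_{m^2}-\mathbf{q}))/(1-\bm{\pi}^{\mathrm{T}}\mathbf{q}))$, the matching conditional mean, and the marginal log-pmf given in Lemma~\ref{lem:update}.

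What remains is purely notational: translate each scalar in those formulas into the matrix notation of the statement. Here $\mathbf{1}_{m^2}^{\mathrm{T}}\mathbf{y}=\mathbf{1}_m^{\mathrm{T}}\mathbf{Y}\mathbf{1}_m$ (the sum of all entries of $\mathbf{Y}$), $\bm{\pi}^{\mathrm{T}}\mathbf{q}=\mathbf{1}_m^{\mathrm{T}}(\mathbf{P}\circ\mathbf{Q})\mathbf{1}_m$, $\mathbf{y}^{\mathrm{T}}\log\bm{\pi}=\mathbf{1}_m^{\mathrm{T}}(\mathbf{Y}\circ\log\mathbf{P})\mathbf{1}_m$ and likewise for $\mathbf{y}^{\mathrm{T}}\log\mathbf{q}$, $\mathbf{1}_{m^2}^{\mathrm{T}}\log(\mathbf{y}!)=\mathbf{1}_m^{\mathrm{T}}\log(\mathbf{Y}!)\mathbf{1}_m$, and the length-$m^2$ probability vector $\bm{\pi}\circ(\mathbf{1}_{m^2}-\mathbf{q})$, reshaped back to an array, equals $\mathbf{P}\circ(\mathbf{1}_m\otimes\mathbf{1}_m-\mathbf{Q})$ because $\mathbf{1}_m\otimes\mathbf{1}_m$ is the all-ones $m\times m$ matrix. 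Substituting these identities into the three conclusions of Lemma~\ref{lem:update} reproduces verbatim the three displays in the statement of Lemma~\ref{lem:Z_update}, under the same convention $0\log0\equiv0$ (relevant for entries with $q^{(i,j)}=0$, i.e.\ missing data).

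If instead one wants a self-contained argument, I would mirror the proof of Lemma~\ref{lem:update} with doubled indices, exactly as the paper does for Lemma~\ref{lem:Z_predict}: after cancelling the $z^{(i,j)}!$ one has
\begin{equation*}
p(\mathbf{Z},\mathbf{Y})=n!\prod_{i,j}\frac{(p^{(i,j)})^{z^{(i,j)}}(q^{(i,j)})^{y^{(i,j)}}(1-q^{(i,j)})^{z^{(i,j)}-y^{(i,j)}}}{y^{(i,j)}!\,(z^{(i,j)}-y^{(i,j)})!},
\end{equation*}
and summing this over $\{z^{(i,j)}\geq y^{(i,j)}:\sum_{i,j}z^{(i,j)}=n\}$, pulling out the $\mathbf{Y}$-dependent factors and applying the multinomial theorem in the variables $z^{(i,j)}-y^{(i,j)}$, the residual mass $\big(\sum_{i,j}p^{(i,j)}(1-q^{(i,j)})\big)^{n-\sum_{i,j}y^{(i,j)}}/(n-\sum_{i,j}y^{(i,j)})!$ yields $p(\mathbf{Y})$; dividing recovers the pmf of $\mathbf{Y}+\mathbf{Z}^\star$, and the conditional mean follows by linearity from the known mean of a multinomial. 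Either way there is no real obstacle — the only thing to be careful about is the bookkeeping between ``sum of all matrix entries'' and the bilinear forms $\mathbf{1}_m^{\mathrm{T}}(\cdot)\mathbf{1}_m$, together with the reshaping identity for $\bm{\pi}\circ(\mathbf{1}_{m^2}-\mathbf{q})$; since Lemma~\ref{lem:update} is already proved, I would present the vectorization version and keep the index translation to a single sentence.
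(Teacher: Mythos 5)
Your argument is correct. The paper itself gives no written proof of Lemma~\ref{lem:Z_update}; it simply states that the proof is ``very similar to the proof of Lemma~\ref{lem:update} so is omitted,'' i.e.\ the intended argument is your fall-back: repeat the Lemma~\ref{lem:update} computation with double indices $(i,j)$, factor the joint mass function, and apply the multinomial theorem in the increments $z^{(i,j)}-y^{(i,j)}$. Your primary route is a genuine (and cleaner) alternative: since $\mathrm{Mult}(n,\mathbf{P})$ for a matrix $\mathbf{P}$ is, by the paper's own definition, just the multinomial over the $m^2$ cells, vectorizing $(\mathbf{Z},\mathbf{Y},\mathbf{P},\mathbf{Q})$ puts you exactly in the hypotheses of Lemma~\ref{lem:update} with $m$ replaced by $m^2$, and the matrix statement follows as a corollary once you record the bookkeeping identities $\mathbf{1}_{m^2}^{\mathrm{T}}\mathbf{y}=\mathbf{1}_m^{\mathrm{T}}\mathbf{Y}\mathbf{1}_m$, $\bm{\pi}^{\mathrm{T}}\mathbf{q}=\mathbf{1}_m^{\mathrm{T}}(\mathbf{P}\circ\mathbf{Q})\mathbf{1}_m$, and the reshaping of $\bm{\pi}\circ(\mathbf{1}_{m^2}-\mathbf{q})$ to $\mathbf{P}\circ(\mathbf{1}_m\otimes\mathbf{1}_m-\mathbf{Q})$, all of which you state correctly. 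What the reduction buys is that nothing is reproved; what the direct double-index computation buys is self-containedness and symmetry with the proof of Lemma~\ref{lem:Z_predict}, which the paper writes out explicitly for the prediction step. Either version is acceptable; your vectorization plus a one-sentence index translation is the more economical write-up, and your fall-back matches the paper's intended omitted proof.
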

The proof is very similar to the proof of Lemma 2 so is omitted.

\subsection{Smoothing with the observation model from section \ref{sec:obs_model_x}}\label{sec:smoothing_x}
Assuming that algorithm \ref{alg:filtering_comp} has already been run up to a given time $t$,
our objective in this section is to derive algorithm \ref{alg:smoothing_x}, from which we take the approximations:
\begin{equation*}
p(\mathbf{x}_s|\mathbf{y}_{1:t})\approx \mathrm{Mult}(n,\bm{\pi}_{s|t}),\qquad p(\mathbf{Z}_s|\mathbf{y}_{1:t})\approx \mathrm{Mult}(n,(\mathbf{1}_m\otimes\bm{\pi}_{s|t})\circ \mathbf{L}_{s-1}^\mathrm{T}).
\end{equation*}

\begin{algorithm} 
	\caption{Multinomial smoothing with observations derived from $(\mathbf{x}_t)_{t\geq1}$}\label{alg:smoothing_x}
	\begin{algorithmic}[1]
		%\Procedure{Euclid}{$a,b$}\Comment{The g.c.d. of a and b}
		%\While{$r\not=0$}\Comment{We have the answer if r is 0}
		% \State $a\gets b$
		%\State $b\gets r$
		%\State $r\gets a\bmod b$
		%\EndWhile\label{euclidendwhile}
		\For{$s=t-1,\ldots, 0$}
		\State let $\mathbf{L}_s$ be the matrix with elements $l_s^{(i,j)}\leftarrow\pi_{s|s}^{(j)}\, k_{s+1,\bm{\pi}_{s|s}}^{(j,i)}/(\bm{\pi}_{s|s}^\mathrm{T}\mathbf{K}_{s+1,\bm{\pi}_{s|s}})^{(i)}$
		\State $\bm{\pi}_{s|t} \leftarrow ( \bm{\pi}_{s+1|t}^{\mathrm{T}}\mathbf{L}_s)^{\mathrm{T}}  $
		\EndFor
		%\State \textbf{return} $b$%\Comment{The gcd is b}\
		%\EndProcedure
	\end{algorithmic}
\end{algorithm}

We start by considering the identities:
\begin{align}
& p(\mathbf{x}_{0:t}|\mathbf{y}_{1:t})=p(\mathbf{x}_{t}|\mathbf{y}_{1:t})\prod_{s=0}^{t-1}p(\mathbf{x}_{s}|\mathbf{x}_{s+1},\mathbf{y}_{1:s})\nonumber\\
& p(\mathbf{x}_{s}|\mathbf{x}_{s+1},\mathbf{y}_{1:s})= \frac{p(\mathbf{x}_{s}|\mathbf{y}_{1:s})p(\mathbf{x}_{s+1}|\mathbf{x}_{s})}{p(\mathbf{x}_{s+1}|\mathbf{y}_{1:s})}\label{eq:backward_kernel_x}\\
&p(\mathbf{x}_{s+1}|\mathbf{y}_{1:s}) = \sum_{\mathbf{x}_s\in\mathcal{S}_{m,n}} p(\mathbf{x}_{s}|\mathbf{y}_{1:s})p(\mathbf{x}_{s+1}|\mathbf{x}_{s}),\nonumber
\end{align}
with the conventions $p(\mathbf{x}_{0}|\mathbf{x}_{1},\mathbf{y}_{1:0})\equiv p(\mathbf{x}_{0}|\mathbf{x}_{1})$,
$p(\mathbf{x}_{0}|\mathbf{y}_{1:0})\equiv p(\mathbf{x}_{0})$. The smoothing distributions $p(\mathbf{x}_{s}|\mathbf{y}_{1:t})$, $s=0,\ldots,t$, satisfy the `backward' recursion.
\begin{equation}
\sum_{\mathbf{x}_{s+1}\in\mathcal{S}_{m,n}}p(\mathbf{x}_{s+1}|\mathbf{y}_{1:t}) p(\mathbf{x}_{s}|\mathbf{x}_{s+1},\mathbf{y}_{1:s})= p(\mathbf{x}_{s}|\mathbf{y}_{1:t}). \label{eq:backward_recursion}
\end{equation}
All these formulae are standard identities for hidden Markov models \citep{briers2010smoothing}. 

In order to approximate (\ref{eq:backward_recursion}) we approximate each of the terms $p(\mathbf{x}_{s}|\mathbf{x}_{s+1},\mathbf{y}_{1:s})$. Consider first the numerator in (\ref{eq:backward_kernel_x}). Recall from section \ref{sec:approx_prediction} that the transition probabilities of the $(\mathbf{x}_s)_{s\geq0}$ process can be written in terms of $(M_{s})_{s\geq1}$, so:
\begin{equation}
p(\mathbf{x}_{s}|\mathbf{y}_{1:s})p(\mathbf{x}_{s+1}|\mathbf{x}_{s})=p(\mathbf{x}_{s}|\mathbf{y}_{1:s})M_{s+1}(\mathbf{x}_{s},\bm{\eta}(\mathbf{x}_{s}),\mathbf{x}_{s+1}).\label{eq:backward_kernel_num}
\end{equation}
We replace $p(\mathbf{x}_{s}|\mathbf{y}_{1:s})$ in (\ref{eq:backward_kernel_num}) by its multinomial approximation $\mathrm{Mult}(n,\bm{\pi}_{s|s})$ obtained using algorithm \ref{alg:filtering_comp}, 
and replace $\bm{\eta}(\mathbf{x}_{s})$ in  (\ref{eq:backward_kernel_num}) by its expected value under this multinomial distribution, i.e. $\bm{\pi}_{s|s}$, to give
\begin{equation}
p(\mathbf{x}_{s}|\mathbf{x}_{s+1},\mathbf{y}_{1:s})\approx\frac{\mu_{s|s}(\mathbf{x}_{s})M_{s+1}(\mathbf{x}_{s},\bm{\pi}_{s|s},\mathbf{x}_{s+1})}{\sum_{\mathbf{z}\in\mathcal{S}_{m,n}}\mu_{s|s}(\mathbf{z})M_{s+1}(\mathbf{z},\bm{\pi}_{s|s},\mathbf{x}_{s+1})},\label{eq:backward_appr_x}
\end{equation}
where $\mu_{s|s}(\cdot)$ is the probability mass function associated with $\mathrm{Mult}(n,\bm{\pi}_{s|s})$.

\begin{lemma}\label{lem:smoothing_x_kernel}
	With $\mathbf{x}_{s+1}$ considered fixed, the function which maps $\mathbf{x}_{s}$ to the right hand side of
	(\ref{eq:backward_appr_x}) is the probability mass function associated
	with $\mathbf{1}_{m}^{\mathrm{T}}\widetilde{\mathbf{Z}}$, where $\widetilde{\mathbf{Z}}$ is an $m\times m$ random matrix whose $i$th row
	has distribution $\mathrm{Mult}(x_{s+1}^{(i)},\mathbf{L}_{s}^{(i,\cdot)})$
	and $\mathbf{L}_{s}$ is the row-stochastic matrix with entries
	\begin{equation*}
	l_s^{(i,j)}=\frac{\pi_{s|s}^{(j)} \,k_{s+1,\bm{\pi}_{s|s}}^{(j,i)}}{(\bm{\pi}_{s|s}^\mathrm{T}\mathbf{K}_{s+1,\bm{\pi}_{s|s}})^{(i)}},
	\end{equation*}
	where $k_{s+1,\bm{\pi}_{s|s}}^{(i,j)}$ are the elements of $\mathbf{K}_{s+1,\bm{\pi}_{s|s}}$.
\end{lemma}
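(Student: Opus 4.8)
The plan is a direct combinatorial computation: expand the right-hand side of (\ref{eq:backward_appr_x}) into a sum over $m\times m$ nonnegative integer matrices with prescribed margins, expand the probability mass function of $\mathbf{1}_m^{\mathrm{T}}\widetilde{\mathbf{Z}}$ in the same way, and match the two via the transpose bijection on matrices. Write $\bm{\pi}\coloneqq\bm{\pi}_{s|s}$ and $\mathbf{K}\coloneqq\mathbf{K}_{s+1,\bm{\pi}_{s|s}}$ with entries $k^{(i,j)}$, and regard $\mathbf{x}_{s+1}\in\mathcal{S}_{m,n}$ as fixed. Two preliminaries: (i) $\mathbf{L}_s$ is row-stochastic, since $\sum_j l_s^{(i,j)}=(\bm{\pi}^{\mathrm{T}}\mathbf{K})^{(i)}/(\bm{\pi}^{\mathrm{T}}\mathbf{K})^{(i)}=1$, so each $\mathrm{Mult}(x_{s+1}^{(i)},\mathbf{L}_s^{(i,\cdot)})$ is well defined (with the usual conventions when $x_{s+1}^{(i)}=0$ or a column of $\mathbf{K}$ is null); and (ii) by Lemma \ref{lem:prediction} applied with $t=s+1$ and the probability vector $\bm{\pi}_{s|s}$, the denominator $\sum_{\mathbf{z}}\mu_{s|s}(\mathbf{z})M_{s+1}(\mathbf{z},\bm{\pi},\mathbf{x}_{s+1})$ of (\ref{eq:backward_appr_x}) equals the multinomial mass $\frac{n!}{\prod_i x_{s+1}^{(i)}!}\prod_i\big((\bm{\pi}^{\mathrm{T}}\mathbf{K})^{(i)}\big)^{x_{s+1}^{(i)}}$.

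For the numerator, I would write $M_{s+1}(\mathbf{x}_s,\bm{\pi},\mathbf{x}_{s+1})$ as the sum, over matrices $\mathbf{Z}$ with $\mathbf{Z}\mathbf{1}_m=\mathbf{x}_s$ and $\mathbf{1}_m^{\mathrm{T}}\mathbf{Z}=\mathbf{x}_{s+1}^{\mathrm{T}}$, of $\prod_i\frac{x_s^{(i)}!}{\prod_j z^{(i,j)}!}\prod_j(k^{(i,j)})^{z^{(i,j)}}$; multiplying by $\mu_{s|s}(\mathbf{x}_s)=\frac{n!}{\prod_i x_s^{(i)}!}\prod_i(\pi^{(i)})^{x_s^{(i)}}$ cancels the $x_s^{(i)}!$, and writing $(\pi^{(i)})^{x_s^{(i)}}=\prod_j(\pi^{(i)})^{z^{(i,j)}}$ via $x_s^{(i)}=\sum_j z^{(i,j)}$ gives $\mu_{s|s}(\mathbf{x}_s)M_{s+1}(\mathbf{x}_s,\bm{\pi},\mathbf{x}_{s+1})=n!\sum_{\mathbf{Z}}\prod_{i,j}\frac{(\pi^{(i)}k^{(i,j)})^{z^{(i,j)}}}{z^{(i,j)}!}$ over the same constraint set. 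Dividing by the mass from (ii) then represents the right-hand side of (\ref{eq:backward_appr_x}) as $\big(\prod_i x_{s+1}^{(i)}!\big)\big(\prod_i((\bm{\pi}^{\mathrm{T}}\mathbf{K})^{(i)})^{x_{s+1}^{(i)}}\big)^{-1}\sum_{\mathbf{Z}}\prod_{i,j}\frac{(\pi^{(i)}k^{(i,j)})^{z^{(i,j)}}}{z^{(i,j)}!}$.

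It remains to expand $p(\mathbf{1}_m^{\mathrm{T}}\widetilde{\mathbf{Z}}=\mathbf{x}_s^{\mathrm{T}})$ and check it equals this. By independence of the rows of $\widetilde{\mathbf{Z}}$ this probability is the sum, over matrices $\widetilde{\mathbf{Z}}$ with $\widetilde{\mathbf{Z}}\mathbf{1}_m=\mathbf{x}_{s+1}$ and $\mathbf{1}_m^{\mathrm{T}}\widetilde{\mathbf{Z}}=\mathbf{x}_s^{\mathrm{T}}$, of $\prod_i\frac{x_{s+1}^{(i)}!}{\prod_j\widetilde{z}^{(i,j)}!}\prod_j(l_s^{(i,j)})^{\widetilde{z}^{(i,j)}}$; I would then apply the transpose bijection $\widetilde{z}^{(i,j)}=z^{(j,i)}$, which sends this constraint set onto the one of the previous step, substitute $l_s^{(i,j)}=\pi^{(j)}k^{(j,i)}/(\bm{\pi}^{\mathrm{T}}\mathbf{K})^{(i)}$, and pull the constant factor $((\bm{\pi}^{\mathrm{T}}\mathbf{K})^{(i)})^{-\sum_j\widetilde{z}^{(i,j)}}=((\bm{\pi}^{\mathrm{T}}\mathbf{K})^{(i)})^{-x_{s+1}^{(i)}}$ out of the sum; the result is exactly the expression at the end of the previous paragraph, which completes the proof. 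The only real difficulty here is bookkeeping: keeping the index swap straight through the transpose, and handling the degenerate case $(\bm{\pi}^{\mathrm{T}}\mathbf{K})^{(i)}=0$, which forces $x_{s+1}^{(i)}=0$ and annihilates the $i$th row of $\widetilde{\mathbf{Z}}$, consistent with the convention $0^0\equiv1$. As a sanity check, the identity also has a probabilistic reading: for $n$ i.i.d.\ pairs $(\xi,\xi')$ with $\xi\sim\bm{\pi}$ and $\xi'\mid\xi\sim\mathbf{K}^{(\xi,\cdot)}$, conditioning on the vector of counts of $\xi'$ leaves the vector of counts of $\xi$ distributed as $\mathbf{1}_m^{\mathrm{T}}\widetilde{\mathbf{Z}}$, because given the $\xi'$-values the $\xi$-values are conditionally independent with reversed one-step law $\mathbf{L}_s$.
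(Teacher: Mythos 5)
Your proposal is correct and follows essentially the same route as the paper's proof: expand the numerator $\mu_{s|s}(\mathbf{x}_s)M_{s+1}(\mathbf{x}_s,\bm{\pi}_{s|s},\mathbf{x}_{s+1})$ as $n!\sum_{\mathbf{Z}}\prod_{i,j}(\pi^{(i)}k^{(i,j)})^{z^{(i,j)}}/z^{(i,j)}!$ over matrices with the prescribed margins, identify the denominator as the $\mathrm{Mult}(n,\bm{\pi}_{s|s}^{\mathrm{T}}\mathbf{K}_{s+1,\bm{\pi}_{s|s}})$ mass at $\mathbf{x}_{s+1}$ (you cite Lemma \ref{lem:prediction}, the paper re-derives the same fact by marginalizing the matrix multinomial over its column sums), and match via the transpose change of variables. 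The only differences are cosmetic: you verify the identity by expanding the law of $\mathbf{1}_m^{\mathrm{T}}\widetilde{\mathbf{Z}}$ and comparing, whereas the paper divides and recognizes the resulting expression, and you additionally note the degenerate case $(\bm{\pi}^{\mathrm{T}}\mathbf{K})^{(i)}=0$, which the paper leaves implicit.
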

\begin{proof}
	Recalling the definition of $M_{s+1}$ from section \ref{sec:approx_prediction}, the numerator in (\ref{eq:backward_appr_x}) is:
	\begin{align}
	&\mu_{s|s}(\mathbf{x}_{s})M_{s+1}(\mathbf{x}_{s},\bm{\pi}_{s|s},\mathbf{x}_{s+1})\nonumber\\
	& =\left(n!\prod_{i=1}^{m}\frac{(\pi_{s|s}^{(i)})^{x_{s}^{(i)}}}{x_{s}^{(i)}!}\right)\left(\sum_{\mathbf{Z}\in\mathcal{T}_{m,n}(\mathbf{x}_{s},\mathbf{x}_{s+1})}\prod_{i=1}^{m}x_{s}^{(i)}!\prod_{j=1}^{m}\frac{(k_{s+1,\bm{\pi}_{s|s}}^{(i,j)})^{z^{(i,j)}}}{z^{(i,j)}!}\right)\nonumber\\
	& =n!\left(\prod_{i=1}^m (\pi_{s|s}^{(i)})^{x_{s}^{(i)}}\right)\sum_{\mathbf{Z}\in\mathcal{T}_{m,n}(\mathbf{x}_{s},\mathbf{x}_{s+1})}\prod_{i,j=1}^{m}\frac{(k_{s+1,\bm{\pi}_{s|s}}^{(i,j)})^{z^{(i,j)}}}{z^{(i,j)}!}\nonumber\\
	& =n!\sum_{\mathbf{Z}\in\mathcal{T}_{m,n}(\mathbf{x}_{s},\mathbf{x}_{s+1})}\prod_{i,j=1}^{m}\frac{(\pi_{s|s}^{(i)})^{z^{(i,j)}}(k_{s+1,\bm{\pi}_{s|s}}^{(i,j)})^{z^{(i,j)}}}{z^{(i,j)}!},\label{eq:smooth_proof_num}
	\end{align}
	where $\mathcal{T}_{m,n}(\mathbf{x}_{s},\mathbf{x}_{s+1})$ is the set of $m\times m$ matrices, say $\mathbf{Z}$, with nonnegative entries such that $\mathbf{Z}\mathbf{1}_{m}=\mathbf{x}_{s}$,  $(\mathbf{1}_{m}^{\mathrm{T}}\mathbf{Z})=\mathbf{x}_{s+1}^{\mathrm{T}}$, and $\mathbf{1}_m^{\mathrm{T}}\mathbf{Z}\mathbf{1}_m=n$.
	
	Now in order to derive an expression for the the denominator in (\ref{eq:backward_appr_x}), observe that (\ref{eq:smooth_proof_num}) can be disintegrated to give:
	\begin{equation*}
	n! \prod_{i,j=1}^{m}\frac{(\pi_{s|s}^{(i)})^{z^{(i,j)}}(k_{s+1,\bm{\pi}_{s|s}}^{(i,j)})^{z^{(i,j)}}}{z^{(i,j)}!},
	\end{equation*}
	which is the probability mass function of $\mathbf{Z}\sim\mathrm{Mult}(n,(\bm{\pi}_{s|s}\otimes\mathbf{1}_m)\circ\mathbf{K}_{s+1,\bm{\pi}_{s|s}})$. Therefore using the fact the marginal of this multinomial distribution over $\mathbf{1}_m^\mathrm{T} \mathbf{Z}$ is $\mathrm{Mult}(n,\bm{\pi}_{s|s}^{\mathrm{T}}\mathbf{K}_{s+1,\bm{\pi}_{s|s}})$, the denominator in (\ref{eq:backward_appr_x}) is
	\begin{equation}
	\sum_{\mathbf{x}_s\in\mathcal{S}_{m,n}}\mu_{s|s}(\mathbf{x}_s)M_{s+1}(\mathbf{x}_s,\bm{\pi}_{s|s},\mathbf{x}_{s+1})
	=n! \prod_{j=1}^m \frac{((\bm{\pi}_{s|s}^\mathrm{T}\mathbf{K}_{s+1,\bm{\pi}_{s|s}})^{(j)})^{x_{s+1}^{(j)}}}{x_{s+1}^{(j)}!}.\label{eq:smooth_proof_denom}
	\end{equation}
	Dividing (\ref{eq:smooth_proof_num}) by (\ref{eq:smooth_proof_denom}) gives:
	\begin{equation*}
	\sum_{\mathbf{Z}\in\mathcal{T}_{m,n}(\mathbf{x}_{s},\mathbf{x}_{s+1})}\prod_{i=1}^{m}x_{s+1}^{(j)}!\prod_{j=1}^{m}\left(\frac{\pi_{s|s}^{(i)} \,k_{s+1,\bm{\pi}_{s|s}}^{(i,j)}}{(\bm{\pi}_{s|s}^\mathrm{T}\mathbf{K}_{s+1,\bm{\pi}_{s|s}})^{(j)}}\right)^{z^{(i,j)}}\frac{1}{z^{(i,j)}!}.
	\end{equation*}
	Re-writing this sum with the change of variable $\widetilde{\mathbf{Z}}\coloneqq\mathbf{Z}^{\mathrm{T}}$ and interchanging $i$ and $j$ yields the result.

\end{proof}
Now considering (\ref{eq:backward_recursion}) and the approximation (\ref{eq:backward_appr_x}), define the probability mass functions:
\begin{equation*}
\mu_{s|t}(\mathbf{x}_s) \coloneqq \sum_{\mathbf{x}_{s+1}\in\mathcal{S}_{m,n}} \mu_{s+1|t}(\mathbf{x}_{s+1}) \frac{\mu_{s|s}(\mathbf{x}_{s})M_{s+1}(\mathbf{x}_{s},\bm{\pi}_{s|s},\mathbf{x}_{s+1})}{\sum_{\mathbf{z}\in\mathcal{S}_{m,n}}\mu_{s|s}(\mathbf{z})M_{s+1}(\mathbf{z},\bm{\pi}_{s|s},\mathbf{x}_{s+1})},\quad s<t,
\end{equation*}
recalling from (\ref{eq:backward_appr_x}) that  $\mu_{t|t}(\cdot)$ is the probability mass function associated with $\mathrm{Mult}(n,\bm{\pi}_{t|t})$. 
\begin{lemma}\label{lem:smoothing_recursion} For $0\leq s \leq t$,
	$\mu_{s|t}(\cdot)$ is the probability mass function associated with $\mathrm{Mult}(n,\bm{\pi}_{s|t})$, where $\bm{\pi}_{s|t}$ is computed as per algorithm \ref{alg:smoothing_x}.
\end{lemma}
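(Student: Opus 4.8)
I would prove the claim by backward induction on $s$, from $s=t$ down to $s=0$, using Lemma \ref{lem:smoothing_x_kernel} to identify the (approximate) backward kernel and then Lemma \ref{lem:prediction} to push the multinomial form through it. For the base case $s=t$ there is nothing to do: by construction $\mu_{t|t}$ is the probability mass function of $\mathrm{Mult}(n,\bm{\pi}_{t|t})$, and $\bm{\pi}_{t|t}$ is exactly the filtering output of algorithm \ref{alg:filtering_comp} on which algorithm \ref{alg:smoothing_x} is based.

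For the inductive step, fix $0\le s<t$ and suppose $\mu_{s+1|t}$ is the probability mass function of $\mathrm{Mult}(n,\bm{\pi}_{s+1|t})$. By Lemma \ref{lem:smoothing_x_kernel}, for each fixed $\mathbf{x}_{s+1}\in\mathcal{S}_{m,n}$ the map sending $\mathbf{x}_s$ to the right-hand side of (\ref{eq:backward_appr_x}) is the probability mass function of $\mathbf{1}_m^{\mathrm{T}}\widetilde{\mathbf{Z}}$, where $\widetilde{\mathbf{Z}}$ has independent rows with $i$th row $\mathrm{Mult}(x_{s+1}^{(i)},\mathbf{L}_s^{(i,\cdot)})$ and $\mathbf{L}_s$ row-stochastic with the entries in line 2 of algorithm \ref{alg:smoothing_x}. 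Comparing with the definition of the operator $M$ in section \ref{sec:approx_prediction}, this is precisely $M^{\mathbf{L}_s}(\mathbf{x}_{s+1},\cdot,\mathbf{x}_s)$, where $M^{\mathbf{L}_s}$ denotes the operator obtained by taking the constant family of row-stochastic matrices $\bm{\eta}\mapsto\mathbf{L}_s$ in that definition. Hence
\begin{equation*}
\mu_{s|t}(\mathbf{x}_s)=\sum_{\mathbf{x}_{s+1}\in\mathcal{S}_{m,n}}\mu_{s+1|t}(\mathbf{x}_{s+1})\,M^{\mathbf{L}_s}\!\left(\mathbf{x}_{s+1},\mathbb{E}_{\mu_{s+1|t}}[\bm{\eta}(\mathbf{x}_{s+1})],\mathbf{x}_s\right),
\end{equation*}
where inserting the expectation changes nothing because $M^{\mathbf{L}_s}$ ignores its second argument. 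Applying Lemma \ref{lem:prediction} with $\bm{\pi}=\bm{\pi}_{s+1|t}$ and with this constant kernel family (the proof of that lemma uses only row-stochasticity of the kernel, so it goes through verbatim) yields that $\mu_{s|t}$ is the probability mass function of $\mathrm{Mult}(n,\bm{\pi}_{s+1|t}^{\mathrm{T}}\mathbf{L}_s)$. Setting $\bm{\pi}_{s|t}\coloneqq(\bm{\pi}_{s+1|t}^{\mathrm{T}}\mathbf{L}_s)^{\mathrm{T}}$, which is exactly line 3 of algorithm \ref{alg:smoothing_x}, closes the induction.

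The substantive work has already been carried out in Lemma \ref{lem:smoothing_x_kernel}; what remains is essentially bookkeeping, and the one point I would be careful about is matching orientations when identifying the normalized backward kernel with an instance of $M$ — the rows of $\widetilde{\mathbf{Z}}$ are indexed by the \emph{future} state $\mathbf{x}_{s+1}$, and $\mathbf{1}_m^{\mathrm{T}}\widetilde{\mathbf{Z}}$ sums over that index, mirroring the role of $\mathbf{1}_m^{\mathrm{T}}\mathbf{Z}$ in the definition of $M$ — together with checking that the row-stochasticity of $\mathbf{L}_s$ (asserted in Lemma \ref{lem:smoothing_x_kernel}) is precisely the hypothesis Lemma \ref{lem:prediction} needs. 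No additional computation is required.
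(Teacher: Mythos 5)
Your proof is correct and follows essentially the same route as the paper's (sketched) argument: backward induction initialized at $s=t$, with the inductive step combining Lemma \ref{lem:smoothing_x_kernel} with the moment-generating-function result of Lemma \ref{lem:prediction}. Your observation that the backward kernel is just the operator $M$ built from the constant, row-stochastic family $\bm{\eta}\mapsto\mathbf{L}_s$, so Lemma \ref{lem:prediction} applies directly and yields $\mathrm{Mult}(n,\bm{\pi}_{s+1|t}^{\mathrm{T}}\mathbf{L}_s)$, is precisely the detail the paper omits ``to avoid repetition.''
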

\begin{proof}
	The result can be proved by induction. The induction is initialized using the fact that $\mu_{t|t}(\cdot)$ is by definition the probability mass function associated with $\mathrm{Mult}(n,\bm{\pi}_{t|t})$, and then proceeds by combining the result of lemma \ref{lem:smoothing_x_kernel} with moment generating function techniques similar to those used in the proof of lemma \ref{lem:prediction}. The details are omitted to avoid repetition.
\end{proof}

\subsection{Smoothing with the observation model from section  \ref{sec:obs_model_z}} \label{sec:smoothing_Z}
Assuming algorithm \ref{alg:filtering_inc} has already been run up to a given time $t$,
our objective in this section is to derive algorithm \ref{alg:smoothing_z}, from which we take the approximation:
\begin{equation*}
p(\mathbf{Z}_s|\mathbf{y}_{1:t})\approx \mathrm{Mult}(n,\mathbf{P}_{s|t}).
\end{equation*}

\begin{algorithm}
	\caption{Multinomial smoothing with observations derived from $(\mathbf{Z}_t)_{t\geq1}$}\label{alg:smoothing_z}
	\begin{algorithmic}[1]
		%\Procedure{Euclid}{$a,b$}\Comment{The g.c.d. of a and b}
		%\While{$r\not=0$}\Comment{We have the answer if r is 0}
		% \State $a\gets b$
		%\State $b\gets r$
		%\State $r\gets a\bmod b$
		%\EndWhile\label{euclidendwhile}

		\For{$s=t-1,\ldots, 1$}
		\State $\bm{\pi}_{s|t} \leftarrow \mathbf{P}_{s+1|t} \mathbf{1}_m$
		\State let $\overline{\mathbf{L}}_s$ be the matrix with elements $\overline{l}_s^{(i,j)}\leftarrow p_{s|s}^{(j,i)}/ \pi_{s|s}^{(i)}$
		\State $\mathbf{P}_{s|t} \leftarrow (\mathbf{1}_m  \otimes \bm{\pi}_{s|t})\circ\overline{\mathbf{L}}_s^{\mathrm{T}}  $

		\EndFor
		%\State \textbf{return} $b$%\Comment{The gcd is b}\
		%\EndProcedure
	\end{algorithmic}
\end{algorithm}
In  algorithm \ref{alg:smoothing_z}, $p_{s|s}^{(i,j)}$ are the elements of $\mathbf{P}_{s|s}$ and $\pi_{s|s}^{(i)}$ are the elements of $\bm{\pi}_{s|s}\coloneqq(\mathbf{1}_m^{\mathrm{T}}\mathbf{P}_{s|s})^{\mathrm{T}}$, with $\mathbf{P}_{s|s}$ computed in algorithm \ref{alg:filtering_inc}.

Similarly to section \ref{sec:smoothing_x},  to derive our approximations we start from the fact that $(\mathbf{Z}_t,\mathbf{Y}_t)_{t\geq1}$ is a hidden Markov model, and consider the identities:
\begin{align}
& p(\mathbf{Z}_{1:t}|\mathbf{Y}_{1:t})=p(\mathbf{Z}_{t}|\mathbf{Y}_{1:t})\prod_{s=1}^{t-1}p(\mathbf{Z}_{s}|\mathbf{Z}_{s+1},\mathbf{Y}_{1:s})\nonumber\\
& p(\mathbf{Z}_{s}|\mathbf{Z}_{s+1},\mathbf{Y}_{1:s})=\frac{ p(\mathbf{Z}_{s}|\mathbf{Y}_{1:s})p(\mathbf{Z}_{s+1}|\mathbf{Z}_{s})}{p(\mathbf{Z}_{s+1}|\mathbf{Y}_{1:s})}\label{eq:smoothing_kernel_Z}\\
&p(\mathbf{Z}_{s+1}|\mathbf{Y}_{1:s})=\sum_{\mathbf{Z}_{s}} p(\mathbf{Z}_{s}|\mathbf{Y}_{1:s})p(\mathbf{Z}_{s+1}|\mathbf{Z}_{s}).\nonumber
\end{align}
The backward recursion is in this case:
\begin{equation}
\sum_{\mathbf{Z}_{s+1}}p(\mathbf{Z}_{s+1}|\mathbf{Y}_{1:t}) p(\mathbf{Z}_{s}|\mathbf{Z}_{s+1},\mathbf{Y}_{1:s})= p(\mathbf{Z}_{s}|\mathbf{Y}_{1:t}). \label{eq:backward_recursion_z}
\end{equation}
Writing $\overline{\mu}_{s|s}(\cdot)$ for the probability mass function associated with $\mathrm{Mult}(n,\mathbf{P}_{s|s})$, our approximation to (\ref{eq:smoothing_kernel_Z}) is:
\begin{equation}
p(\mathbf{Z}_{s}|\mathbf{Z}_{s+1},\mathbf{Y}_{1:s}) \approx \frac{\overline{\mu}_{s|s} (\mathbf{Z}_s) \overline{M}_{s+1}(\mathbf{Z}_{s}, \bm{\pi}_{s|s},\mathbf{Z}_{s+1})}{\sum_{\widetilde{\mathbf{Z}}}  
	\overline{\mu}_{s|s} (\widetilde{\mathbf{Z}}) \overline{M}_{s+1}(\widetilde{\mathbf{Z}}, \bm{\pi}_{s|s},\mathbf{Z}_{s+1})}.\label{eq:kernel_approx_Z}
\end{equation}
where $\overline{M}_{s+1}$ was introduced in section \ref{sec:Z_approximations} and in the setting of (\ref{eq:kernel_approx_Z}) has the explicit formula: 
\begin{equation*}
\overline{M}_{s+1}(\mathbf{Z}_{s}, \bm{\pi}_{s|s},\mathbf{Z}_{s+1})=\mathbb{I}[\mathbf{1}_{m}^{\mathrm{T}}\mathbf{Z}_{s}=(\mathbf{Z}_{s+1}\mathbf{1}_{m})^{\mathrm{T}}]\left(\prod_{j=1}^{m}(\mathbf{Z}_{s+1}\mathbf{1}_{m})^{(j)}!\prod_{\ell=1}^{m}\frac{\left(k_{s+1,\bm{\pi}_{s|s}}^{(j,\ell)}\right)^{z_{s+1}^{(j,\ell)}}}{z_{s+1}^{(j,\ell)}!}\right).
\end{equation*}

\begin{lemma}\label{lem:smoothing_Z_kernel}
	With $\mathbf{Z}_{s+1}$ considered fixed, the function which maps $\mathbf{Z}_{s}$ to the right hand side of
	(\ref{eq:kernel_approx_Z}) is the probability mass function such that the columns of $\mathbf{Z}_{s}$ are independent and the distribution of the $i$th column is $\mathrm{Mult}((\mathbf{Z}_{s+1}\mathbf{1}_m)^{(i)},\overline{\mathbf{L}}_{s}^{(i,\cdot)})$, where
	$\overline{\mathbf{L}}_{s}$ is the row-stochastic matrix with entries
	\begin{equation*}
	\overline{l}_s^{(i,j)} =  p_{s|s}^{(j,i)}/ \pi_{s|s}^{(i)}
	\end{equation*}
	where $p_{s|s}^{(i,j)}$ are the elements of $\mathbf{P}_{s|s}$, and $\pi_{s|s}^{(i)}$ are the elements of $\bm{\pi}_{s|s}\coloneqq(\mathbf{1}_m^{\mathrm{T}}\mathbf{P}_{s|s})^{\mathrm{T}}$ with $\mathbf{P}_{s|s}$ computed in algorithm \ref{alg:filtering_inc}.
\end{lemma}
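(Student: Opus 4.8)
The plan is to recognize the right-hand side of (\ref{eq:kernel_approx_Z}) as the conditional probability mass function of $\mathbf{Z}_s$ given $\mathbf{Z}_{s+1}$ under the joint law in which $\mathbf{Z}_s\sim\mathrm{Mult}(n,\mathbf{P}_{s|s})$ and, given $\mathbf{Z}_s$, $\mathbf{Z}_{s+1}\sim\overline{M}_{s+1}(\mathbf{Z}_s,\bm{\pi}_{s|s},\cdot)$: the numerator of (\ref{eq:kernel_approx_Z}) is exactly this joint pmf and the denominator is the implied marginal of $\mathbf{Z}_{s+1}$, so their ratio is $p(\mathbf{Z}_s\mid\mathbf{Z}_{s+1})$. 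The key structural observation is that, by the definition of $\overline{M}_{s+1}$ in section \ref{sec:Z_approximations}, the conditional law of $\mathbf{Z}_{s+1}$ given $\mathbf{Z}_s$ depends on $\mathbf{Z}_s$ only through its vector of column sums $\mathbf{x}_s\coloneqq(\mathbf{1}_m^{\mathrm{T}}\mathbf{Z}_s)^{\mathrm{T}}$. Hence $\mathbf{Z}_{s+1}$ and $\mathbf{Z}_s$ are conditionally independent given $\mathbf{x}_s$; and since the support of $\overline{M}_{s+1}$ forces $\mathbf{Z}_{s+1}\mathbf{1}_m=\mathbf{x}_s$ with probability one, $\mathbf{x}_s$ is a deterministic function of $\mathbf{Z}_{s+1}$. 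Therefore $p(\mathbf{Z}_s\mid\mathbf{Z}_{s+1})=p(\mathbf{Z}_s\mid\mathbf{x}_s)$ evaluated at $\mathbf{x}_s=\mathbf{Z}_{s+1}\mathbf{1}_m$.

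It then remains to identify $p(\mathbf{Z}_s\mid\mathbf{x}_s)$ under $\mathbf{Z}_s\sim\mathrm{Mult}(n,\mathbf{P}_{s|s})$. This is the standard fact that a multinomial array conditioned on one of its marginals splits into independent sub-multinomials: since $\mathbf{x}_s$ collects the column sums of $\mathbf{Z}_s$, conditionally on $\mathbf{x}_s$ the columns of $\mathbf{Z}_s$ are independent and the $i$th column is $\mathrm{Mult}(x_s^{(i)},\mathbf{P}_{s|s}^{(\cdot,i)}/\pi_{s|s}^{(i)})$, where $\pi_{s|s}^{(i)}=\sum_{j}p_{s|s}^{(j,i)}=(\mathbf{1}_m^{\mathrm{T}}\mathbf{P}_{s|s})^{(i)}$. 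The $j$th entry of the normalized $i$th column is $p_{s|s}^{(j,i)}/\pi_{s|s}^{(i)}=\overline{l}_s^{(i,j)}$, i.e. the normalized $i$th column of $\mathbf{P}_{s|s}$ is exactly the $i$th row of $\overline{\mathbf{L}}_s$; since $\mathbf{x}_s=\mathbf{Z}_{s+1}\mathbf{1}_m$ this is the asserted distribution, and $\sum_j\overline{l}_s^{(i,j)}=\pi_{s|s}^{(i)}/\pi_{s|s}^{(i)}=1$ shows $\overline{\mathbf{L}}_s$ is row-stochastic.

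As a cross-check, and to stay closer to the style of the proof of lemma \ref{lem:smoothing_x_kernel}, one can argue directly instead: substitute $\overline{\mu}_{s|s}(\mathbf{Z}_s)=n!\prod_{i,j}(p_{s|s}^{(i,j)})^{z_s^{(i,j)}}/z_s^{(i,j)}!$ and the explicit formula for $\overline{M}_{s+1}$ displayed just above into (\ref{eq:kernel_approx_Z}); the denominator is a sum over $\mathbf{Z}_s$ with column sums fixed at $\mathbf{x}_s=\mathbf{Z}_{s+1}\mathbf{1}_m$, which factorizes column-by-column and collapses by the multinomial theorem, and after cancelling the common constant-in-$\mathbf{Z}_s$ factors ($n!$, the $k_{s+1,\bm{\pi}_{s|s}}$ product, the $x_s^{(i)}!$) the ratio reduces to $\prod_i\big(x_s^{(i)}!\prod_j(\overline{l}_s^{(i,j)})^{z_s^{(j,i)}}/z_s^{(j,i)}!\big)$, the product-of-multinomials pmf claimed. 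The only real pitfall I anticipate is the index bookkeeping: keeping straight that $\overline{M}_{s+1}$ acts on the column sums of its first argument while producing them as row sums, and that the transpose in $\overline{l}_s^{(i,j)}=p_{s|s}^{(j,i)}/\pi_{s|s}^{(i)}$ is precisely what converts ``normalized columns of $\mathbf{P}_{s|s}$'' into ``rows of $\overline{\mathbf{L}}_s$''; the probabilistic argument above is designed to isolate exactly this point and is otherwise immediate.
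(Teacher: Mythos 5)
Your proposal is correct, and your main argument takes a genuinely more probabilistic route than the paper, while your ``cross-check'' paragraph is essentially the paper's own proof. The paper proceeds by direct computation: it writes out the numerator $\overline{\mu}_{s|s}(\mathbf{Z}_s)\,\overline{M}_{s+1}(\mathbf{Z}_s,\bm{\pi}_{s|s},\mathbf{Z}_{s+1})$ explicitly, evaluates the denominator using the fact that the marginal of $\mathrm{Mult}(n,\mathbf{P}_{s|s})$ over the column sums $\mathbf{1}_m^{\mathrm{T}}\mathbf{Z}_s$ is $\mathrm{Mult}(n,\bm{\pi}_{s|s})$, and divides, reading off the product-of-column-multinomials pmf. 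You instead identify the ratio in (\ref{eq:kernel_approx_Z}) as $p(\mathbf{Z}_s\mid\mathbf{Z}_{s+1})$ under the joint law in which $\mathbf{Z}_s\sim\mathrm{Mult}(n,\mathbf{P}_{s|s})$ and $\mathbf{Z}_{s+1}\mid\mathbf{Z}_s\sim\overline{M}_{s+1}(\mathbf{Z}_s,\bm{\pi}_{s|s},\cdot)$, note that $\overline{M}_{s+1}$ depends on $\mathbf{Z}_s$ only through the column sums $\mathbf{x}_s=(\mathbf{1}_m^{\mathrm{T}}\mathbf{Z}_s)^{\mathrm{T}}$ and forces $\mathbf{x}_s=\mathbf{Z}_{s+1}\mathbf{1}_m$ on its support, so that $p(\mathbf{Z}_s\mid\mathbf{Z}_{s+1})=p(\mathbf{Z}_s\mid\mathbf{x}_s)$, and then invoke the standard splitting of a multinomial array conditioned on its column marginals into independent column multinomials with normalized column probabilities; this isolates cleanly where the transpose $\overline{l}_s^{(i,j)}=p_{s|s}^{(j,i)}/\pi_{s|s}^{(i)}$ comes from. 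What the conceptual route buys is brevity and transparency; its only cost is that the ``standard fact'' must either be cited or verified, and verifying it is precisely the two-line pmf division the paper performs, so the proofs coincide at that level. If you write it up, state the conditional-independence step on the event $\{\mathbf{1}_m^{\mathrm{T}}\mathbf{Z}_s=(\mathbf{Z}_{s+1}\mathbf{1}_m)^{\mathrm{T}}\}$ (off it the ratio vanishes, matching the indicator $\mathbb{I}[\mathbf{1}_m^{\mathrm{T}}\mathbf{Z}_s=(\mathbf{Z}_{s+1}\mathbf{1}_m)^{\mathrm{T}}]$ appearing in the paper's expression (\ref{eq:backward_kernel_approx_z})), and note that the claimed product-of-multinomials pmf indeed has exactly that support, so the two descriptions agree everywhere.
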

\begin{proof}
	For the numerator in (\ref{eq:kernel_approx_Z}) is
	\begin{align}
	&\overline{\mu}_{s|s} (\mathbf{Z}_s) \overline{M}_{s+1}(\mathbf{Z}_{s}, \bm{\pi}_{s|s},\mathbf{Z}_{s+1}) \nonumber\\
	&= \left(n! \prod_{i,j=1}^m \frac{(p_{s|s}^{(i,j)})^{z_s^{(i,j)}}} {z_s^{(i,j)}!}\right)\mathbb{I}[\mathbf{1}_{m}^{\mathrm{T}}\mathbf{Z}_{s}=(\mathbf{Z}_{s+1}\mathbf{1}_{m})^{\mathrm{T}}]\nonumber\\
	&\quad\times\left(\prod_{j=1}^{m}(\mathbf{Z}_{s+1}\mathbf{1}_{m})^{(j)}!\prod_{\ell=1}^{m}\frac{\left(k_{s+1,\bm{\pi}_{s|s}}^{(j,\ell)}\right)^{z_{s+1}^{(j,\ell)}}}{z_{s+1}^{(j,\ell)}!}\right),\label{eq:smooth_z_num}
	\end{align}
	and for the denominator in  (\ref{eq:kernel_approx_Z}),
	\begin{align}
	&\sum_{\mathbf{Z}_s}\overline{\mu}_{s|s} (\mathbf{Z}_s) \overline{M}_{s+1}(\mathbf{Z}_{s},\bm{\pi}_{s|s},\mathbf{Z}_{s+1})\nonumber \\
	&= \left(n! \prod_{j=1}^m \frac{(\pi_{s|s}^{(j)})^{(\mathbf{Z}_{s+1}\mathbf{1}_{m})^{(j)}}} {(\mathbf{Z}_{s+1}\mathbf{1}_{m})^{(j)}!}\right)\left(\prod_{j=1}^{m}(\mathbf{Z}_{s+1}\mathbf{1}_{m})^{(j)}!\prod_{\ell=1}^{m}\frac{\left(k_{s+1,\bm{\pi}_{s|s}}^{(j,\ell)}\right)^{z_{s+1}^{(j,\ell)}}}{z_{s+1}^{(j,\ell)}!}\right),\label{eq:smooth_z_den}
	\end{align}
	where the equality  in (\ref{eq:smooth_z_den}) holds by combining (\ref{eq:smooth_z_num}) with the fact that the marginal of $\overline{\mu}_{s|s}$ over $\mathbf{1}_{m}^{\mathrm{T}}\mathbf{Z}_{s}$ is $\mathrm{Mult}(n,\mathbf{1}_{m}^{\mathrm{T}}\mathbf{P}_{s|s})=\mathrm{Mult}(n,\bm{\pi}_{s|s})$.
	
	Dividing (\ref{eq:smooth_z_num}) by (\ref{eq:smooth_z_den}) results in
	\begin{equation}
	\mathbb{I}[\mathbf{1}_{m}^{\mathrm{T}}\mathbf{Z}_{s}=(\mathbf{Z}_{s+1}\mathbf{1}_{m})^{\mathrm{T}}]\prod_{j=1}^m (\mathbf{Z}_{s+1}\mathbf{1}_{m})^{(j)} !  \prod_{i=1}^m \left(\frac{p_{s|s}^{(i,j)}}{\pi_{s|s}^{(j)}}\right)^{z_s^{(i,j)}!} \frac{1}{z_s^{(i,j)}!},\label{eq:backward_kernel_approx_z}
	%&\equiv\mathbb{I}[\mathbf{1}_{m}^{\mathrm{T}}\mathbf{Z}_{s}=(\mathbf{Z}_{s+1}\mathbf{1}_{m})^{\mathrm{T}}]\prod_{i=1}^m \mathrm{Mult}\left((\mathbf{Z}_{s+1}\mathbf{1}_{m})^{(i)},  \overline{\mathbf{L}}_s^{(i,\cdot)}\right),
	\end{equation}
	where $p_{s|s}^{(i,j)}$ are the elements of $\mathbf{P}_{s|s}$, $\overline{\mathbf{L}}_s$ is the matrix with elements $\overline{l}_{s}^{(i,j)}=p_{s|s}^{(j,i)}/\pi_{s|s}^{(i)}$.%, and $\mathrm{Mult}\left((\mathbf{Z}_{s+1}\mathbf{1}_{m})^{(i)},  \overline{\mathbf{L}}_s^{(i,\cdot)}\right)$ is the distribution of the $i$th column of $\mathbf{Z}_t$. %This approximation has the interpretation as a product (over $j$) of multinomial distributions.
\end{proof}

Now considering (\ref{eq:backward_recursion_z}) and the approximation (\ref{eq:kernel_approx_Z}), define the probability mass functions:
\begin{equation}
\overline{\mu}_{s|t}(\mathbf{Z}_s) \coloneqq \sum_{\mathbf{Z}_{s+1}} \overline{\mu}_{s+1|t}(\mathbf{Z}_{s+1}) \frac{\overline{\mu}_{s|s}(\mathbf{Z}_{s})\overline{M}_{s+1}(\mathbf{Z}_{s},\bm{\pi}_{s|s},\mathbf{Z}_{s+1})}{\sum_{\widetilde{\mathbf{Z}}}\overline{\mu}_{s|s}(\widetilde{\mathbf{Z}})\overline{M}_{s+1}(\widetilde{\mathbf{Z}},\bm{\pi}_{s|s},\mathbf{Z}_{s+1})},\quad s<t,\label{eq:mu_bar_defn}
\end{equation}
where $\overline{\mu}_{t|t}(\cdot)$ is defined to be the probability mass function associated with $\mathrm{Mult}(n,\mathbf{P}_{t|t})$. 
\begin{lemma} For $0\leq s \leq t$,
	$\overline{\mu}_{s|t}(\cdot)$ is the probability mass function associated with $\mathrm{Mult}(n,\mathbf{P}_{s|t})$, where $\mathbf{P}_{s|t}$ is computed as per algorithm \ref{alg:smoothing_z}.
\end{lemma}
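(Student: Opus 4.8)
The plan is to follow almost verbatim the proof of Lemma~\ref{lem:smoothing_recursion}, replacing vectors by matrices: I would argue by backward induction on $s$, the inductive step being carried out via the moment generating function. The base case $s=t$ requires nothing, since $\overline{\mu}_{t|t}(\cdot)$ is by definition the probability mass function of $\mathrm{Mult}(n,\mathbf{P}_{t|t})$ with $\mathbf{P}_{t|t}$ as returned by algorithm~\ref{alg:filtering_inc}.

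For the inductive step I would assume $\overline{\mu}_{s+1|t}$ is the probability mass function of $\mathrm{Mult}(n,\mathbf{P}_{s+1|t})$, substitute this into the definition~(\ref{eq:mu_bar_defn}) of $\overline{\mu}_{s|t}$, and use Lemma~\ref{lem:smoothing_Z_kernel} to make the backward kernel explicit. This gives a hierarchical description of a matrix $\mathbf{Z}_s\sim\overline{\mu}_{s|t}$: first draw $\mathbf{Z}_{s+1}\sim\mathrm{Mult}(n,\mathbf{P}_{s+1|t})$, then conditionally on its row sums $\mathbf{a}\coloneqq\mathbf{Z}_{s+1}\mathbf{1}_m$ draw the columns of $\mathbf{Z}_s$ independently, the $i$th column being $\mathrm{Mult}(a^{(i)},\overline{\mathbf{L}}_s^{(i,\cdot)})$. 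The key simplification is that the backward kernel of Lemma~\ref{lem:smoothing_Z_kernel} depends on $\mathbf{Z}_{s+1}$ only through $\mathbf{a}$, and $\mathbf{a}=\mathbf{Z}_{s+1}\mathbf{1}_m\sim\mathrm{Mult}(n,\mathbf{P}_{s+1|t}\mathbf{1}_m)=\mathrm{Mult}(n,\bm{\pi}_{s|t})$ by the standard category-merging property of multinomials, with $\bm{\pi}_{s|t}$ exactly as set in algorithm~\ref{alg:smoothing_z}; so it suffices to identify the law of $\mathbf{Z}_s$ obtained from $\mathbf{a}\sim\mathrm{Mult}(n,\bm{\pi}_{s|t})$ together with the column-wise conditional law above.

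Viewing $\mathbf{Z}_s$ as a vector of $m^2$ nonnegative integers summing to $n$, its law is determined by its moment generating function, and for an arbitrary $m\times m$ matrix $\mathbf{B}=(b^{(i,j)})$ I would compute, conditioning first on $\mathbf{a}$ and using column independence and the multinomial m.g.f., then taking the outer expectation over $\mathbf{a}$ via the multinomial theorem exactly as in the proof of Lemma~\ref{lem:prediction},
\begin{equation*}
\mathbb{E}\!\left[\exp\!\big(\mathbf{1}_m^{\mathrm{T}}(\mathbf{Z}_s\circ\mathbf{B})\mathbf{1}_m\big)\right]
=\mathbb{E}\!\left[\prod_{i=1}^m\Big(\sum_{j=1}^m \overline{l}_s^{(i,j)}e^{b^{(j,i)}}\Big)^{a^{(i)}}\right]
=\Big(\sum_{i,j=1}^m \pi_{s|t}^{(i)}\,\overline{l}_s^{(i,j)}\,e^{b^{(j,i)}}\Big)^{n}.
\end{equation*}
Relabelling the summation indices, this equals $\big(\sum_{i,j}\pi_{s|t}^{(j)}\overline{l}_s^{(j,i)}e^{b^{(i,j)}}\big)^n=\big(\sum_{i,j}p_{s|t}^{(i,j)}e^{b^{(i,j)}}\big)^n$, where $p_{s|t}^{(i,j)}$ is the $(i,j)$ entry of $(\mathbf{1}_m\otimes\bm{\pi}_{s|t})\circ\overline{\mathbf{L}}_s^{\mathrm{T}}=\mathbf{P}_{s|t}$; this is precisely the m.g.f.\ of $\mathrm{Mult}(n,\mathbf{P}_{s|t})$, which closes the induction.

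I do not expect a genuine obstacle: the argument is a near-mechanical matrix analogue of Lemma~\ref{lem:smoothing_recursion}, and all the ingredients are already in place (Lemma~\ref{lem:smoothing_Z_kernel} for the backward kernel, the m.g.f.\ technique from Lemma~\ref{lem:prediction}, and the inductive hypothesis supplying the law of $\mathbf{Z}_{s+1}\mathbf{1}_m$). The one thing to watch is the bookkeeping with transposes and with which index is summed versus free --- in particular matching the entries $\overline{l}_s^{(i,j)}=p_{s|s}^{(j,i)}/\pi_{s|s}^{(i)}$ and the \emph{column}-wise (rather than row-wise) description in Lemma~\ref{lem:smoothing_Z_kernel} against the outer-product-and-Hadamard form $\mathbf{P}_{s|t}=(\mathbf{1}_m\otimes\bm{\pi}_{s|t})\circ\overline{\mathbf{L}}_s^{\mathrm{T}}$ appearing in algorithm~\ref{alg:smoothing_z}. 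A minor additional point worth stating is that the mixture in~(\ref{eq:mu_bar_defn}) is internally consistent --- the backward kernel charges only matrices $\mathbf{Z}_s$ with $\mathbf{1}_m^{\mathrm{T}}\mathbf{Z}_s=(\mathbf{Z}_{s+1}\mathbf{1}_m)^{\mathrm{T}}$ --- which is automatically enforced by the hierarchical sampling description above.
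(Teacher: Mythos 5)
Your proposal is correct and follows essentially the same route as the paper: backward induction initialized at $s=t$, substitution of the explicit backward kernel from Lemma \ref{lem:smoothing_Z_kernel} into (\ref{eq:mu_bar_defn}), use of the fact that the marginal of $\overline{\mu}_{s+1|t}$ over $\mathbf{Z}_{s+1}\mathbf{1}_m$ is $\mathrm{Mult}(n,\mathbf{P}_{s+1|t}\mathbf{1}_m)=\mathrm{Mult}(n,\bm{\pi}_{s|t})$, and identification of the resulting mixture by the moment generating function technique of Lemma \ref{lem:prediction}. The paper only sketches this in one sentence, so your m.g.f.\ computation (with the index bookkeeping matching $\mathbf{P}_{s|t}=(\mathbf{1}_m\otimes\bm{\pi}_{s|t})\circ\overline{\mathbf{L}}_s^{\mathrm{T}}$, which checks out) simply fills in the details the authors omit.
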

\begin{proof} The proof is by induction for $s=t,t-1,\ldots$, initialized using the fact that $\overline{\mu}_{t|t}(\cdot)$  is defined to be the probability mass function associated with $\mathrm{Mult}(n,\mathbf{P}_{t|t})$, and for the induction step plugging (\ref{eq:backward_kernel_approx_z}), which is an explicit expression for the right hand side of (\ref{eq:kernel_approx_Z}), into (\ref{eq:mu_bar_defn}), and using the fact the marginal of $\overline{\mu}_{s+1|t}(\mathbf{Z}_{s+1})$ over $\mathbf{Z}_{s+1} \mathbf{1}_m$ is $\mathrm{Mult}(n,\mathbf{P}_{s+1|t}\mathbf{1}_m)$.
\end{proof}
%\begin{proof}
%The result can be proved by induction. The induction is initialized using the fact that $\mu_{t|t}(\cdot)$ is by definition the probability mass function associated with $\mathrm{Mult}(n,\bm{\pi}_{t|t})$, and then proceeds bys combining the result of lemma \ref{lem:smoothing_x_kernel} with moment generating function techniques similar to those used in the proof of lemma \ref{lem:prediction}. The details are omitted to avoid repetition.
%\end{proof}

\section{Ebola example: further details and numerical results}\label{sec:ebola_supp}
In this section we provide more information about the numerical results from section \ref{sec:ebola} in the main part of the paper.

%In this section we explain the experiments run. The notation is the same as in section \ref{subsec:notation} and we refer with $\tau$ to the final time of the epidemic as for sections \ref{sec:smoothing_x} and \ref{sec:smoothing_Z}.

\subsection{Model}
The model from \cite{lekone2006statistical} is a discrete-time stochastic SEIR model with time varying $\beta$. $\mathbf{K}_{t,\bm{\eta}}$ is as in (\ref{eq:SEIR_as_ind}) with $h=1$ and with $\beta$ replaced by
\begin{equation}
\beta_{t} = 
\begin{cases}
\beta                     & t< t_\star\\
\beta e^{-\lambda(t-t_\star)} & t \geq t_\star
\end{cases},
\end{equation}
where $t_\star$ has the interpretation as the day on which control measures were introduced.
Following \cite{lekone2006statistical}, the initial distribution was fixed to $\bm{\pi}_0 = [1-1/n,1/n,0,0]^\mathrm{T}$ with $n=5,364,501$. The observation model from section \ref{sec:obs_model_z} was used with $q_t^{(i,j)}=0$ for all $t$ and all $(i,j)$ except $(2,3)$ and $(3,4)$, and where $q_t^{(2,3)}\equiv q^{(2,3)}$ and $q_t^{(3,4)}\equiv q^{(3,4)}$ are treated as constant-in-$t$ and to be estimated.
The parameters of the model are thus:
\begin{equation*}
\Theta = (\beta, \lambda, \rho, \gamma, q^{(2,3)}, q^{(3,4)}).
\end{equation*}

\subsection{Details of the EM algorithm}
In numerical experiments we found that a robust approach to approximate maximum likelihood estimation of $\Theta$ was to take a profile-likelihood approach using an EM algorithm: 1) choose a grid of values for $(\beta,\lambda)$; 2) for each point on this grid, say $(\hat{\beta},\hat{\lambda})$, run an EM algorithm to approximately maximize $p(\mathbf{Y}_{1:t}|\hat{\beta},\hat{\lambda},\rho,\gamma,q^{(2,3)}, q^{(3,4)})$ with respect to $(\rho,\gamma,q^{(2,3)}, q^{(3,4)})$ then evaluate the marginal likelihood at the resulting parameter values using algorithm \ref{alg:filtering_inc}; 3) maximize over the grid.

The EM component of this procedure follows the usual steps for a hidden Markov model \citep{cappe2006inference}, so we just provide an outline.  One step of the EM procedure is as follows: given $\Theta$ one performs forward filtering using algorithm \ref{alg:filtering_inc} then backward smoothing using algorithm \ref{alg:smoothing_z} resulting in $(\mathbf{P}_{s|t})_{s\leq t}$. The expected complete data log-likelihood is then maximized with respect to the parameters of interest. It turns out that for the Ebola model, the maximization steps for $(\rho,\gamma,q^{(2,3)}, q^{(3,4)})$ have closed-form solutions, leading to the update equations:
\begin{align*}
\rho \leftarrow \log\left(1+\frac{\sum_{s=1}^t p_{s|t}^{(2,3)}}{\sum_{s=1}^t p_{s|t}^{(2,2)}}\right),
\qquad
\gamma  \leftarrow \log\left(1+\frac{\sum_{s=1}^t p_{s|t}^{(3,4)}}{\sum_{s=1}^t p_{s|t}^{(3,4)}}\right), \\
q^{(2,3)} \leftarrow 1 \wedge \frac{\sum_{s=1}^t y_s^{(2,3)}/n}{p_{s|t}^{(2,3)}},
\qquad
q^{(3,4)} \leftarrow 1 \wedge \frac{\sum_{s=1}^t y_s^{(3,4)}/n}{p_{s|t}^{(3,4)}},
\end{align*}
where $p_{s|t}^{(i,j)}$ are the elements of $\mathbf{P}_{s|t}$.

\subsection{Details of the MCMC algorithm}
We  implemented a Metropolis-within-Gibbs MCMC algorithm targeting the approximate posterior distribution $\widehat{p}(\Theta|\mathbf{Y}_{1:t})\propto \widehat{p}(\mathbf{Y}_{1:t}|\Theta)p(\Theta)$, where $\widehat{p}(\mathbf{Y}_{1:t}|\Theta)$ is the approximate marginal likelihood computed using algorithm \ref{alg:filtering_inc}, and 
\begin{equation*}
p(\Theta) = p(\beta)p(\lambda)p(\rho)p(\gamma)p(q^{(2,3)})p(q^{(3,4)}).
\end{equation*}
We considered the three sets of Gamma prior distributions over $\beta,\lambda,\rho,\gamma$ specified in section 3.3 of \cite{lekone2006statistical} and referred to as `vague', `informative' and `non-centered'. The priors $p(q^{(2,3)}),p(q^{(3,4)}))$ were taken to be uniform densities on $[0,1]$.

Gaussian random walk proposals were applied to each parameter, with variances manually tuned to give acceptance rates between $20\%$ and $40\%$ \citep{roberts2001optimal}.

\subsection{Synthetic data: supplementary plots for our MCMC method}
To generate the data we used the following parameter values from \cite{lekone2006statistical}:
\begin{equation}
\beta = 0.2, \quad \lambda = 0.2, \quad \rho = 0.2, \quad \gamma = 0.143,\quad t_\star=130,\label{eq:true_params}
\end{equation}
together with  $q^{(2,3)}=291/316$ and $q^{(3,4)}=236/316$. The data are shown in figure \ref{fig:part_obs_epidemic}.
\begin{figure}[httb!]
	\centering
	\resizebox{\columnwidth}{!}{
		\includegraphics[scale = 1]{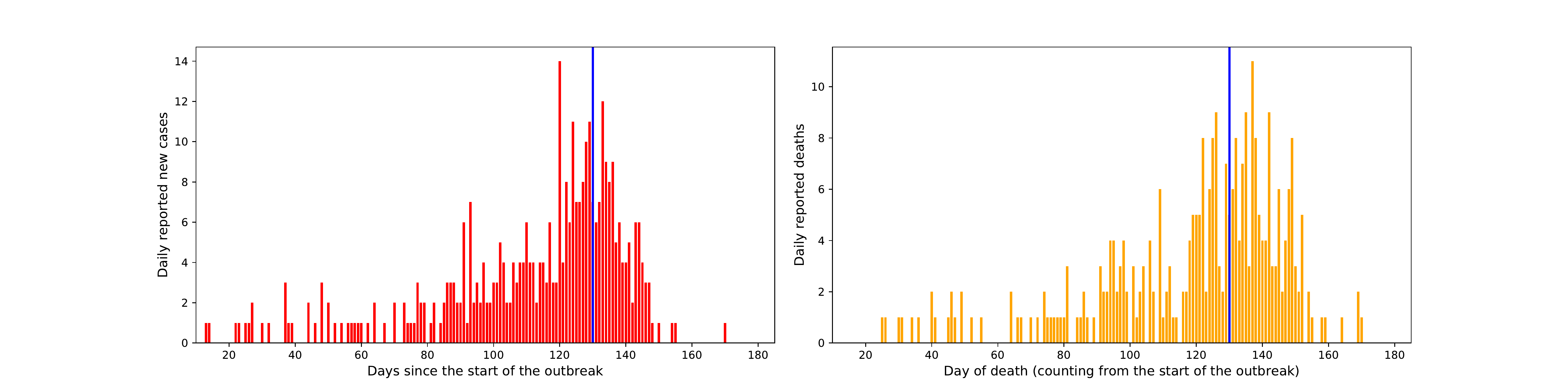}
	}
	\caption{Synthetic data $(y_t^{(2,3)})_{t\geq 1}$, which are daily numbers of reported new cases, and $(y_t^{(3,4)})_{t\geq 1}$, which are the daily numbers of reported new deaths, simulated from the Ebola model of  \cite{lekone2006statistical}. Blue lines indicate the day $t_\star=130$ at which control measures were introduced. } \label{fig:part_obs_epidemic}
\end{figure}
Figures \ref{fig:traceplot_q_less_1_learnt-Vague} and \ref{fig:traceplot_q_less_1_learnt-noncentered} show traceplots and histograms of the MCMC output from which the point estimates and posterior standard deviations in table \ref{tab:ebola_synthetic} were calculated. The MCMC chain  was run for $5\times10^5$ iterations, the first $10^5$ iterations were discarded for burn-in, and the remaining samples thinned to result in a sample size of $10^4$.
\begin{figure}[httb!]
	\centering
	\resizebox{\columnwidth}{!}{
		\includegraphics[]{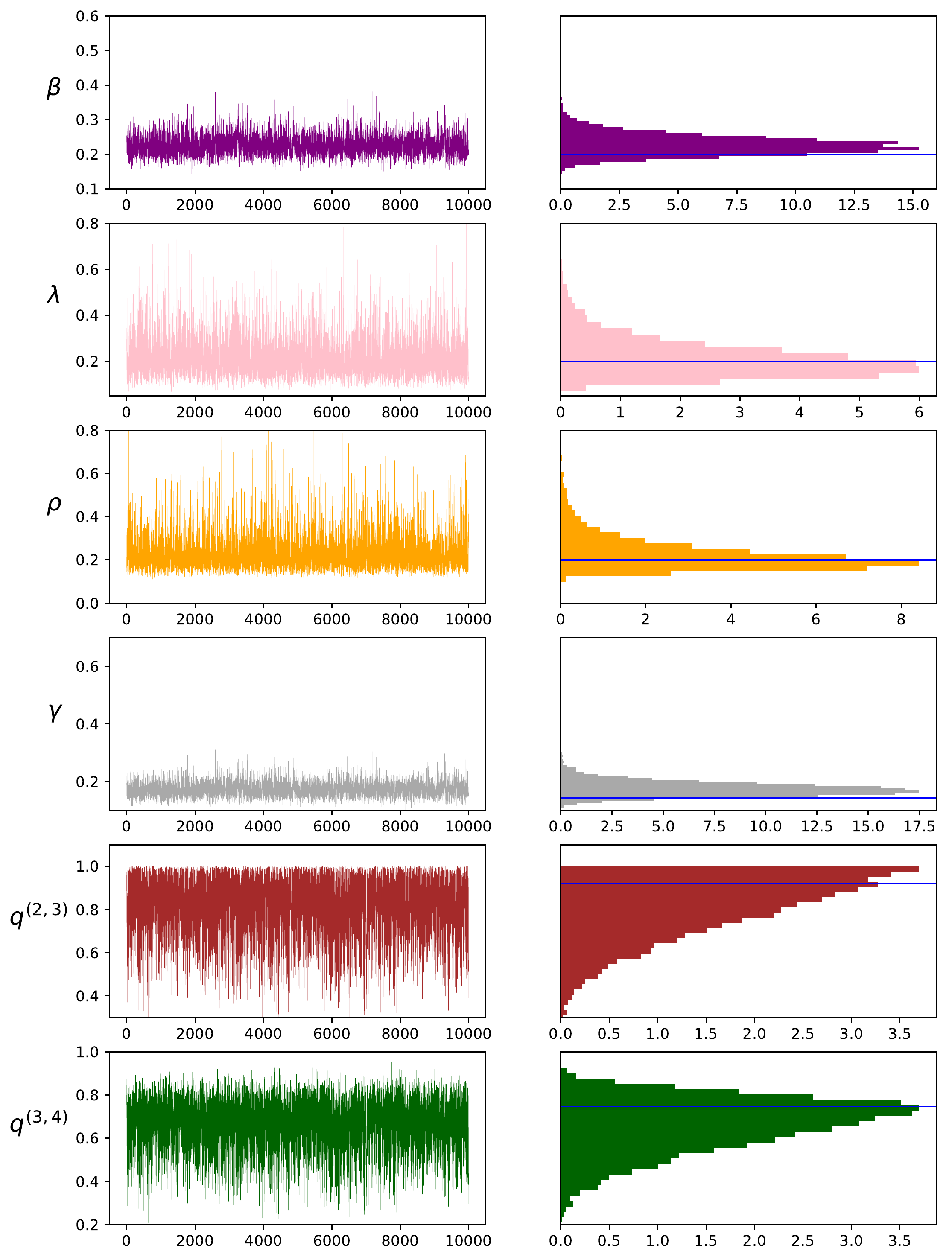}
		\includegraphics[]{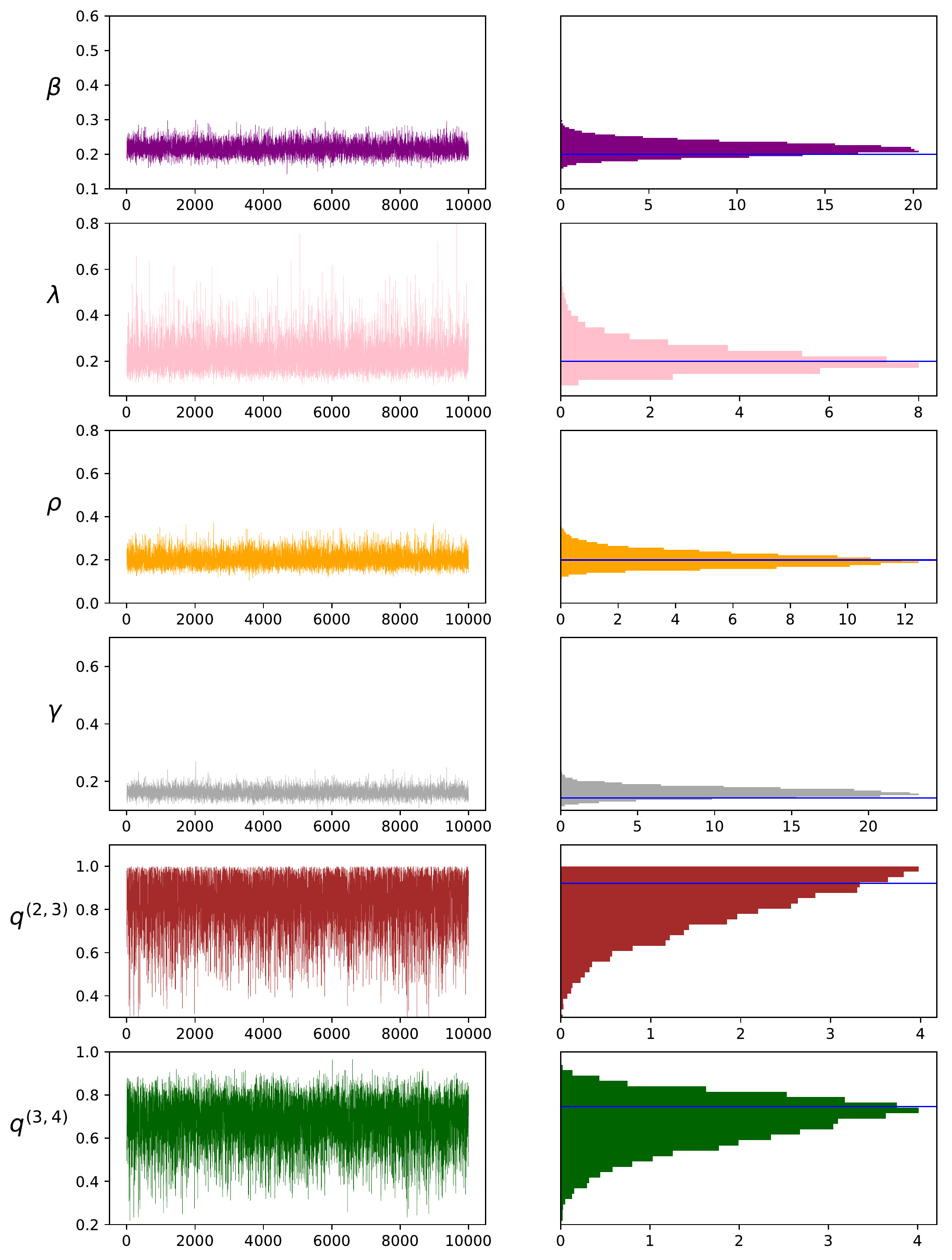}
	}
	\caption{Traceplots and histograms for our MCMC method applied to synthetic data from the Ebola model, with the `vague' set of priors (left) and the `informative' set of priors (right) specified by  \cite{lekone2006statistical}. Blue lines show true parameter values. } \label{fig:traceplot_q_less_1_learnt-Vague}
\end{figure}
\begin{figure}[httb!]
	\centering
	\resizebox{0.45\columnwidth}{!}{
		\includegraphics[]{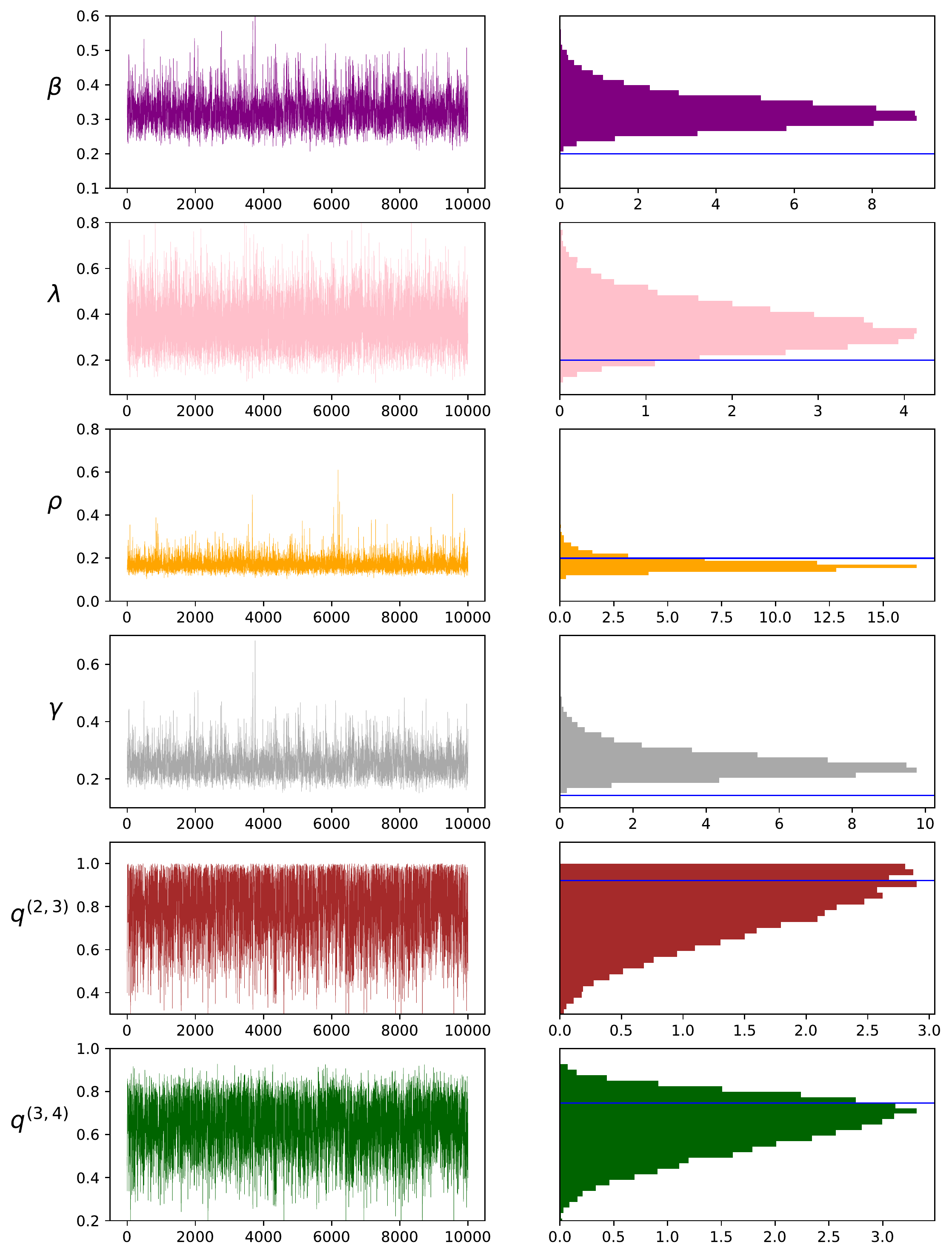}
	}
	\caption{Traceplots and histograms for our MCMC method applied to synthetic data from the Ebola model, with the `noncentered' set of priors specified by  \cite{lekone2006statistical}. Blue lines show true parameter values. } \label{fig:traceplot_q_less_1_learnt-noncentered}
\end{figure}

\newpage
\subsection{Real data: supplementary plots and further details for our MCMC method}

%We considered the `vague' and `informative' sets of priors over $\beta,\lambda,\rho,\gamma$ specified in section 3.3 of \cite{lekone2006statistical}  together with uniform priors over $q^{(2,3)}$, $q^{(3,4)}$.

%We estimate the parameters of the epidemic with a Metropolis-within-Gibbs designed as for the synthetic data. As for \cite{lekone2006statistical}, the considered prior settings are vague and informative, i.e. we set Gamma priors with parameters as in  (varying according to vague or informative) on $\beta, \lambda, \rho, \gamma$, with the addition of Uniform $[0,1]$ priors on $q^{E,I}, q^{I,R}$. The MCMC is run until convergence is reached on each parameters ($5*10^5$ iterations) and then a $10000$ sample is extracted with thinning after a burn-in period ($10^5$). 

The data are shown in figure \ref{fig:Cases-ebola}. Traceplots and histograms for our MCMC method are displayed in figure \ref{fig:traceplot_hist_ebola}, we considered only the `vague' and `uninformative' sets of priors. 

\begin{figure}[httb!]
	\centering
	\resizebox{\columnwidth}{!}{
		\includegraphics[scale = 1]{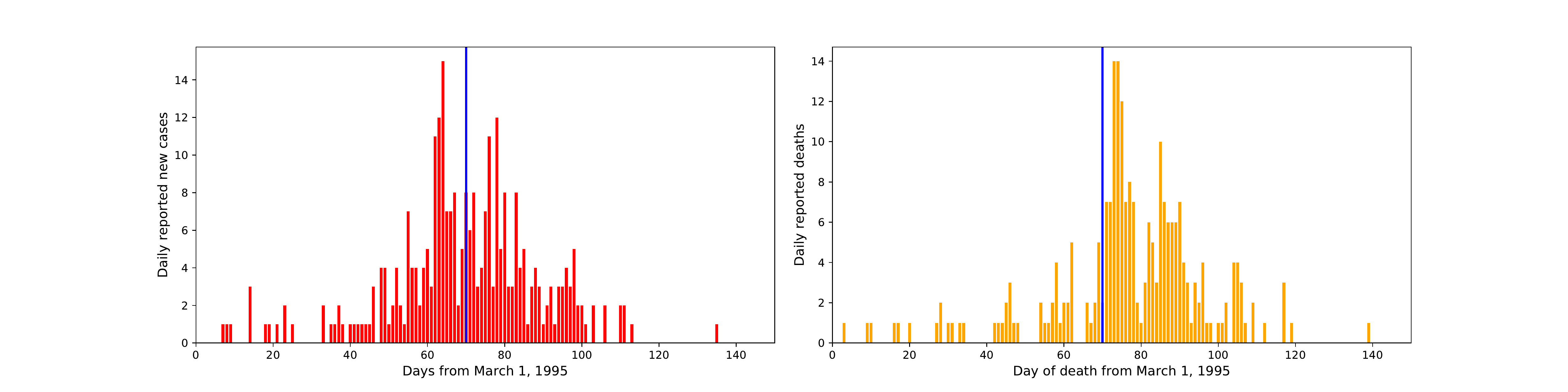}
	}
	\caption{Data from the 1995 Ebola outbreak in the Democratic Republic of Congo per day from March 1, 1995 to July 16, 1995. Blue lines indicate the 9th of May when control measurements were introduced.} \label{fig:Cases-ebola}
\end{figure}

It can be observed that the posteriors over $\beta, \lambda, \rho$ appear to be bimodal under the `vague' prior. According to this observation, we separated the MCMC samples out  by applying the $k$-means algorithm to the marginal samples for $\rho$. Qualitatively, this identifies:
\begin{enumerate}
	\item a mode with big $\beta$, big $\lambda$ and small $\rho$; 
	\item a mode with small $\beta$, small $\lambda$ and big $\rho$.
\end{enumerate}

\begin{figure}[httb!]
	\centering
	\resizebox{0.8\columnwidth}{!}{
		\includegraphics[scale = 1]{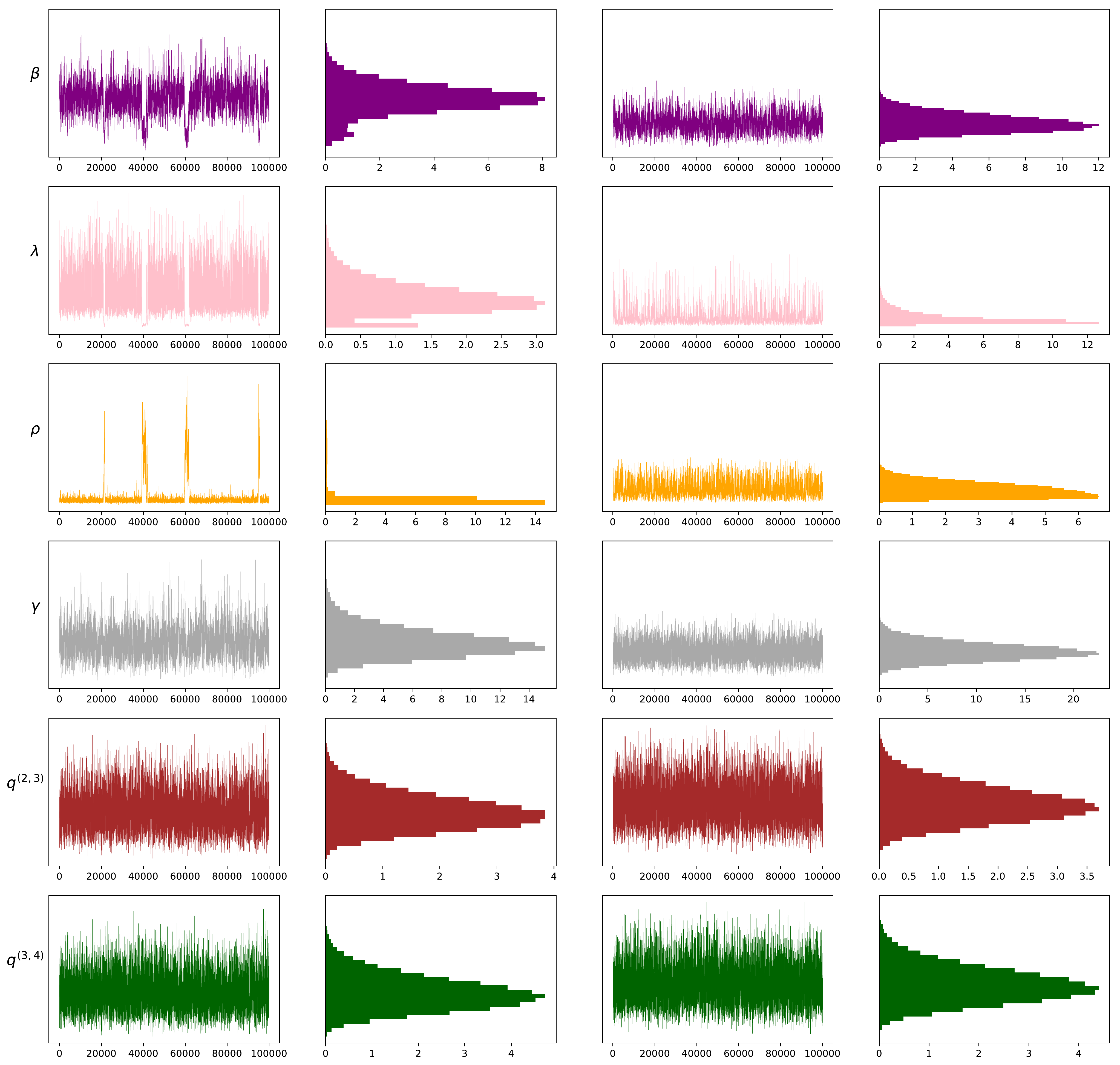}
	}
	\caption{Traceplots and histogram for the MCMC on the Ebola data.} \label{fig:traceplot_hist_ebola}
\end{figure}

This can be interpreted in terms of two alternative explanations of the observed data: the first one consists of a big initial growth of the epidemic (big $\beta$) with a slow transition to the infective state (small $\rho$), followed by an effective intervention (big $\lambda$) that slows down the spread of the virus; the second one is essentially the opposite, a small initial growth of the epidemic (small $\beta$) with a fast transition to the infective state (big $\rho$), followed by mild control measures (small $\lambda$). Posterior means and standard deviations are reported in Table \ref{tab:ebola}, along with the values for the Data Augmentation MCMC method of \cite{chowell2004basic}, and the least-squares method of \cite{lekone2006statistical}. For the ABSEIR ABC method of \cite{brown2016empirically} we obtained the estimates ourselves by running the code available at \url{http://grantbrown.github.io/ABSEIR/vignettes/Kikwit.html}.

Observe that the estimate for $\beta$ from \cite{lekone2006statistical} is close to our first mode while that from \cite{chowell2004basic} is closer to our second mode. We conjecture that the analyses in these works were each exploring only a single combination of parameters and excluding the other, and  we conjecture that our MCMC method mixes more quickly than that of \cite{lekone2006statistical}.  % $q^{(2,3)}, q^{(3,4)}$ plays a pivotal role for the full exploration of the parameters space. %In \cite{brown2016empirically} the authors do not report any estimates for the 1995 Ebola outbreak, hence our comparison with this method is related to the code reported on the webpage: 

\section{COVID-19 example: further details and numerical results}\label{sec:covid_sup}

\subsection{Model}
We consider the following discrete-time, stochastic version of the ODE model from  \cite{kucharski2020early}.  
\begin{itemize}
	\item $S_t$ is the number of susceptible individuals in Wuhan (compartment 1); 
	\item $E^{(1,W)}_{t},E^{(2,W)}_{t}$ are the numbers of exposed individuals in Wuhan in the first and second stage of the incubation period (compartments 2\&3); 
	\item $I^{(1,W)}_{t}, I^{(2,W)}_{t}$ are the numbers of infective individuals in Wuhan in the first and second stage of the disease (compartments 3\&4);
	\item $E^{(1,T)}_{t},E^{(2,T)}_{t}$ are the numbers of individuals who were initially exposed whilst in Wuhan and subsequently travelled to other countries, in their first and second stage of the incubation period (compartments 5\&6); 
	
	\item $I^{(1,T)}_{t}, I^{(2,T)}_{t}$ are the numbers of infective individuals who were initially exposed whilst in Wuhan and subsequently travelled to other countries, in their first and second stage of the disease (compartments 7\&8);
	\item $R_{t}$ is the number of removed individuals (compartment 9). 
\end{itemize}
The evolution of the compartments is as follows:
\begin{align}
&S_{t+1}       = S_{t}       - B^{(1,W)}_{t}       - B^{(1,T)}_{t},     
\label{eq:compartments_evolutions_1}\\
&E^{(1,W)}_{t+1} = E^{(1,W)}_{t} + B^{(1,W)}_{t}       - B^{(2,W)}_{t}, \\
&E^{(2,W)}_{t+1} = E^{(2,W)}_{t} + B^{(2,W)}_{t} - C^{(1,W)}_{t}, \\
&I^{(1,W)}_{t+1} = I^{(1,W)}_{t} + C^{(1,W)}_{t} - C^{(2,W)}_{t}, \\
&I^{(2,W)}_{t+1} = I^{(2,W)}_{t} + C^{(2,W)}_{t} - D^{W}_{t},     \\
&E^{(1,T)}_{t+1} = E^{(1,T)}_{t} + B^{(1,T)}_{t}       - B^{(2,T)}_{t}, \\
&E^{(2,T)}_{t+1} = E^{(2,T)}_{t} + B^{(2,T)}_{t} - C^{(1,T)}_{t}, \\
&I^{(1,T)}_{t+1} = I^{(1,T)}_{t} + C^{(1,T)}_{t} - C^{(2,T)}_{t}, \\
&I^{(2,T)}_{t+1} = I^{(2,T)}_{t} + C^{(2,T)}_{t} - D^{T}_{t},     \\
\label{eq:compartments_evolutions_10}
&R_{t+1}       = R_{t}       + D^{(W)}_{t}       + D^{(T)}_{t},
\end{align}
where:
\begin{equation*}
\left[\begin{array}{c}
B_{t}^{(1,W)}\\
B_{t}^{(1,T)}\\
S_{t}-B_{t}^{(1,W)}-B_{t}^{(1,T)}
\end{array}\right]\sim\mathrm{Mult}\left(S_{t},\left[\begin{array}{c}
(1-f_{t})p_{t}\\
f_{t}p_{t}\\
1-p_{t}
\end{array}\right]\right),
\end{equation*}
\begin{align}
%&\left [ B^{(1,W)}_{t}\, B^{(1,T)}_{t}\,S_t - (B^{(1,W)}_{t}+ B^{(1,T)}_{t})\right] \sim 
%\mathrm{Mult} \left ( S_{t}, [ (1-f_t) p_{t}\, f_t p_{t}\,(1-p_t) ] \right ), \\ 
&B^{(2,W)}_{t}  \sim \mathrm{Bin} \left ( E^{(1,W)}_{t}, p_C \right ), \quad B^{(2,T)}_{t}  \sim \mathrm{Bin} \left ( E^{(1,T)}_{t}, p_C \right ), \\	
&C^{(1,W)}_{t} \sim \mathrm{Bin} \left ( E^{(2,W)}_{t}, p_C \right ), \quad C^{(1,T)}_{t} \sim \mathrm{Bin} \left ( E^{(2,T)}_{t}, p_C \right ), \\
&C^{(2,W)}_{t}  \sim \mathrm{Bin} \left ( I^{(1,W)}_{t}, p_R \right ), \quad C^{(2,T)}_{t}  \sim \mathrm{Bin} \left ( I^{(1,T)}_{t}, p_R \right ),\\
&D^{(W)}_{t}        \sim \mathrm{Bin} \left ( I^{(2,W)}_{t}, p_R \right ), \quad D^{(T)}_{t}        \sim \mathrm{Bin} \left ( I^{(2,T)}_{t}, p_R \right ),\label{eq:expanded_compartments_evolutions_10}
\end{align}
with
\begin{equation}
p_{t} = 1 - e^{ - h \beta_{t}({I^{(1,W)}_{t}+I^{(2,W)}_{t}}) \slash {n} }, \quad
p_C  = 1 -e^{-h 2\rho}, \quad
p_R  = 1 - e^{-h 2\gamma},
\end{equation}
where $f_t$ is the fraction of cases that depart from Wuhan to other countries at time $t$.  The ODE model of \cite{kucharski2020early} incorporates a a number of other compartments which are used to accumulate the numbers of individuals which have passed through certain states, but which otherwise do not play an active role in the model, hence we do not specify them here.

The time-varying transmission rate $(\beta_t)_{t\geq 0}$  follows a log-normal random walk:
\begin{equation*}
\beta_{t+1} = \beta_t \exp(V_t),\qquad V_t \sim\mathcal{N}(0,\sigma_V^2).
\end{equation*} 

The observations consist of the new infectives in Wuhan, $y_{t}^{(3,4)}$, and internationally, $y_{t}^{(7,8)}$, at each time step, subject to random under-reporting:
\begin{equation}\label{eq:Y_covid}
y_{t}^{(3,4)} \sim \mathrm{Bin} \left ( C_{t}^{(1,W)}, q^{(W)} \right ), \quad
y_{t}^{(7,8)} \sim \mathrm{Bin} \left ( C_{t}^{(1,T)}, q^{(T)} \right ). 
\end{equation}

\subsection{Data and Parameter settings}
The series $(y_{t}^{(3,4)})_{t\geq0}$ and $(y_{t}^{(7,8)})_{t\geq0}$, i.e. the reported numbers of new infectives in Wuhan and internationally, constitute two of the three data sets considered for inference by \cite{kucharski2020early}. They additionally considered a third data set consisting of information about prevalence of infections on evacuation flights. We exclude this prevalence data from our analysis, since the structure of the observation model required fall outside the class of models we consider in this paper.

We set $f_t, n, \rho, \gamma$ to the same values used in \cite{kucharski2020early}, including the fact that $f_t$  is set to zero after the date when travel restrictions were introduced. We estimated $q^{(W)}=0.00175,q^{(T)}=0.8$ via approximate maximum likelihood over a grid. We set $h=1$.

\subsection{Implementation}
We based our implementation directly on the R code accompanying \cite{kucharski2020early}, which is available at \url{https://github.com/ adamkucharski/2020-ncov/}.

We also re-ran the experiments reported in \cite{kucharski2020early} using their method, but excluding the evacuation flight data mentioned above. This allows for like-for-like comparisons of our results with theirs - see below.

\subsection{Inference}

\begin{algorithm}[h]
	\caption{Particle filter and backward sampler for COVID-19 application}\label{alg:particle_filter_new}
	\begin{algorithmic}[1]
		%\Require{initial $\beta$: $\beta_0$; initial distribution: $\bm{\pi}_0$; number of particles: $n_{\mathrm{part}}$; epidemic length: $t$; geometric Brownian motion parameter: $\sigma_V$} 
		\State \textbf{initialize} $\bm{\pi}^{(i)}_{0|0}\leftarrow \bm{\pi}_0$, \quad $\beta_0^{(i)} = \beta_0,  \quad \text{for } i = 1,\dots, n_{\mathrm{part}}$
		\For{$s= 1$ to $t$}
		\For{$i= 1$ to $n_{\mathrm{part}}$}
		\vspace{0.1cm}
		\State{ $\beta_s^{(i)} \leftarrow \beta_{s-1}^{(i)} \exp(V^{(i)}), \quad V^{(i)} \sim \mathrm{N}(0,\sigma_V^2)$}
		\State $\mathbf{P}_{s|s-1}^{(i)} \leftarrow (\bm{\pi}^{(i)}_{s-1|s-1} \otimes \mathbf{1}_m ) \circ \mathbf{K}_{t,\bm{\pi}^{(i)}_{s-1|s-1}, \beta_{t}^{(i)}}$
		\State $\mathbf{P}^{(i)}_{s|s} \leftarrow  \dfrac{\mathbf{Y}_{s}}{n}+\left(1-\dfrac{\mathbf{1}_m^{\mathrm{T}}\mathbf{Y}_{s}\mathbf{1}_m}{n}\right)\dfrac{\mathbf{P}^{(i)}_{s|s-1}\circ(\mathbf{1}_m\otimes \mathbf{1}_m-\mathbf{Q}_{s})}{1 - \mathbf{1}_m^{\mathrm{T}}(\mathbf{P}^{(i)}_{s|s-1}\circ \mathbf{Q}_{s})\mathbf{1}_m}$
		\State $\log w_s^{(i)} \leftarrow \log(n!) + \mathbf{1}_m^{\mathrm{T}} (\mathbf{Y}_{s} \circ \log \mathbf{P}^{(i)}_{s|s-1} )\mathbf{1}_m + \mathbf{1}_m^{\mathrm{T}} (\mathbf{Y}_{s} \circ \log \mathbf{Q}_{s} )\mathbf{1}_m -  \mathbf{1}_m^{\mathrm{T}} \log(\mathbf{Y}_{s} !)\mathbf{1}_m$ \\
		\qquad\qquad\quad\;\; $+ (n-\mathbf{1}_m^{\mathrm{T}} \mathbf{Y}_{s} \mathbf{1}_m) \log (1 -\mathbf{1}_m^{\mathrm{T}} (\mathbf{P}^{(i)}_{s|s-1} \circ \mathbf{Q}_{s}) \mathbf{1}_m ) - \log ((n-\mathbf{1}_m^{\mathrm{T}} \mathbf{Y}_{s} \mathbf{1}_m)!)$
		\State $\bm{\pi}^{(p)}_{s|s}\leftarrow(\mathbf{1}_m^{\mathrm{T}}\mathbf{P}^{(i)}_{s|s})^{\mathrm{T}}$
		\EndFor
		\State $\bar{w}_{s}^{(i)} \leftarrow w^{(s)}_{i} \slash \sum_{j} w^{(j)}_{s}, \quad i =i,\dots, n_{\mathrm{part}} $
		\State \textbf{resample } $\{\beta_{s}^{(i)}, \bm{\pi}^{(i)}_{s|s}\}_{i=1}^{n_{\mathrm{part}}}$ according to $\{\bar{w}^{(i)}_{s}\}_{i=1}^{n_{\mathrm{part}}}$ and keep track of ancestors in $\bm{a}_{s}=[a_s^{(1)}\,\cdots\,a_s^{(n_{\mathrm{parts}})}]^{\mathrm{T}}$
		\EndFor
		\State
		\State \textbf{sample } $\zeta$	according to $\{\bar{w}^{(i)}_{t}\}_{i=1}^{n_{\mathrm{part}}}$
		\State $\tilde{\bm{\pi}}_{{t}|{t}} \leftarrow \bm{\pi}_{{t}|{t}}^{(\zeta)}$, \quad  $\tilde{\mathbf{P}}_{{t}|{t}} \leftarrow \mathbf{P}_{{t}|{t}}^{(\zeta)}$, \quad $\tilde{\beta}_t \leftarrow \beta_t^{(\zeta)}$
		\State \textbf{sample } $\tilde{\mathbf{Z}}_{t}$ from $\mathrm{Mult}(n,\tilde{\mathbf{P}}_{{t}|{t}})$
		
		\For{$s=t-1,\ldots, 1$}
		
		\State $\tilde{\bm{\pi}}_{s|{t}} \leftarrow \tilde{\mathbf{P}}_{s+1|{t}} \mathbf{1}_m$
		\State $\zeta \leftarrow a_s^{(\zeta)}$, \quad  $\tilde{\bm{\pi}}_{s|s} \leftarrow \bm{\pi}_{s|s}^{(\zeta)}$, \quad $\tilde{\mathbf{P}}_{s|s} \leftarrow \mathbf{P}_{s|s}^{(\zeta)}$, \quad $\tilde{\beta}_s \leftarrow \beta_s^{(\zeta)}$
		\State Let $\overline{\mathbf{L}}_s$ be the matrix with elements $\overline{l}_s^{(i,j)}\leftarrow \tilde{p}_{s|s}^{(j,i)}/ \tilde{\pi}_{s|s}^{(i)}$
		\For{$i=1,\ldots,m$}
		
		\State \textbf{sample } $\tilde{\mathbf{Z}}^{(\cdot, i)}_{s}$ from $\mathrm{Mult} ((\tilde{\mathbf{Z}}_{s+1}\mathbf{1}_{m})^{(i)}, \overline{\mathbf{L}}_{s}^{(i,\cdot)})$
		
		\EndFor
		\State $\tilde{\mathbf{P}}_{s|t} \leftarrow (\mathbf{1}_m  \otimes \tilde{\bm{\pi}}_{s|t})\circ\overline{\mathbf{L}}_s^{\mathrm{T}}  $

		\EndFor
		%\State \textbf{return} $b$%\Comment{The gcd is b}
		%\EndProcedure
		
		\Return $\{\tilde{\beta}_s, \tilde{\mathbf{P}}_{s|t}, \tilde{\mathbf{Z}}_s\}_{s=1}^t$
	\end{algorithmic}
\end{algorithm} 

Algorithm \ref{alg:particle_filter_new} is based directly on the Sequential Monte Carlo algorithm of \cite{kucharski2020early}, incorporates our discrete-time stochastic model instead of their ODE model. The first stage consists of a particle filter where for time step $s$, the $i$th of $n_{\mathrm{part}}$ particles consists of $\beta_s^{(i)}$ and  $\mathbf{P}_{s|s}^{(i)}$, and the unnormalized importance weight $w_s^{(i)}$ is computed similarly to algorithm \ref{alg:filtering_inc}. The second stage samples from the smoothing distribution of $\beta_{s}$ by tracing back the ancestors of a selected particle - see \citep{andrieu2010particle} for details of the role of ancestors in resampling. In addition backward steps as in algorithm \ref{alg:smoothing_z} compute the corresponding smoothing distribution over $\mathbf{Z}_{s}$, for $s=t,\ldots,1$. 

%To estimate the missing compartments, we have to include, in this second stage, a sample from the joint smoothing distribution, i.e. from the posterior over $\mathbf{Z}_{1:\tau}$. In algorithm \ref{alg:particle_filter}, we slightly modify the notation to include the particles dimension: $(\bm{\pi}_{t|t})_{t}$ are all multidimensional array in $t$ with $\bm{\pi}_{t|t}^{(p)}$ vector; $(\bm{P}_{t|t})_{t}$ are all multidimensional array in $t$ with $\bm{P}_{t|t}^{(p)}$ matrix; $(\bm{\beta}_{t})_{t}$ are all vector in $t$ with $\bm{\beta}_{t}^{(p)}$ scalar. 

\subsection{Results}
%We consider the particle filter in \cite{kucharski2020early} as baseline for our model. We do not include rescue flights data in our analysis and so we do with the algorithm in \cite{kucharski2020early}. Results including rescue flights data for \cite{kucharski2020early} are available at the end of this section. 

The  effective sample size for algorithm \ref{alg:particle_filter_new} and the SMC method of \cite{kucharski2020early} applied to the same data are reported in figure \ref{fig:covid_ESS}. 

\begin{figure}[httb!]
	\begin{subfigure}{.45\textwidth}
		\includegraphics[scale = 0.7]{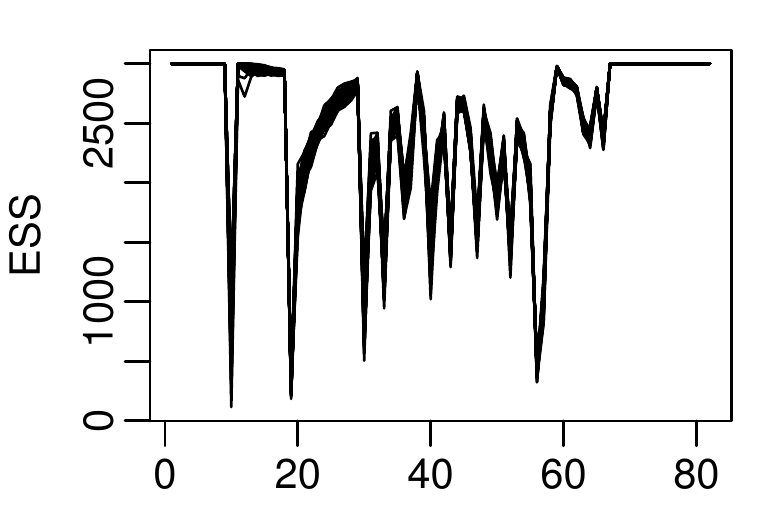}
	\end{subfigure}
	\hspace{0.1\textwidth}
	\begin{subfigure}{.45\textwidth}
		\includegraphics[scale = 0.7]{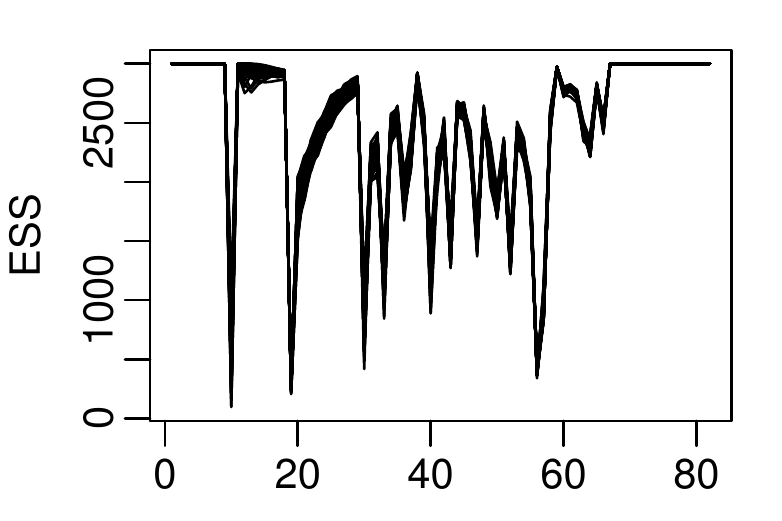}
	\end{subfigure}
	\caption{Effective sample size for algorithm \ref{alg:particle_filter_new} on the left, and for the the particle filter of \cite{kucharski2020early} on the right. Remark that we exclude the evacuation flights data from \cite{kucharski2020early} from the latter to make a fair comparison with our model.} \label{fig:covid_ESS}
\end{figure}

We followed the procedure of \cite{kucharski2020early} in producing the remainder of our results: we ran the algorithm $100$ times with $n_{\mathrm{part}}=3\times10^3$, resulting in 100 samples of $\{\tilde{\beta}_{s}, \tilde{\mathbf{P}}_{s|t}, \tilde{\mathbf{Z}}_{s}\}_{s=1}^t$ (we note that more sophisticated approaches to particle smoothing are available in the literature, but we did not use them in order to make fair comparisons with results from \cite{kucharski2020early} ). Based on these sample we then report in figure \ref{fig:covid}, in the main paper, the mean and credible intervals associated with the following 5 quantities (from \cite{kucharski2020early}). Below $\kappa$ is the rate of reporting, we used the same numerical value as in  \cite{kucharski2020early}, and $F^{(W)}$ and $F^{(T)}$ are auxiliary compartments. 
\begin{enumerate}
	\item The time-varying reproduction number $R_{s}=\tilde{\beta}_{s} \slash \gamma$ for $s =1,\dots,t$;
	%\begin{equation}
	%	\tilde{R}_{t}=\tilde{\beta}_{t} \slash \gamma.
	%\end{equation}
	
	\item The new confirmed cases by date of onset in Wuhan and China $\tilde{y}^{(3,4)}_{s}$ for $s =1,\dots,t$:
	\begin{equation}
	\hat{z}_{s} \sim \mathrm{Bin} \left (n,  \tilde{p}^{(3,4)}_{s|t} \right ) \quad \text{and} \quad \tilde{y}^{(3,4)}_{s} \sim \mathrm{Bin} \left (\hat{z}_{s},  q^{(W)} \right ).
	\end{equation} 
	\item The new confirmed cases  by date of onset internationally $\tilde{y}^{(7,8)}_{s}$ for $s =1,\dots,t$:
	\begin{equation}
	\hat{z}_{s} \sim \mathrm{Bin} \left (n,  \tilde{p}^{(7,8)}_{s|t} \right ) \quad \text{and} \quad \tilde{y}^{(7,8)}_{s} \sim \mathrm{Bin} \left (\hat{z}_{s},  q^{(T)} \right ).
	\end{equation}
	\item The new confirmed cases by date in Wuhan $\Delta\widetilde{ Conf}^{(W)}_{s}$ for $s =1,\dots,t$:
	\begin{align}
	& \tilde{F}^{(W)}_0 = 0, \\
	& \Delta\tilde{F}^{(W)}_{s} \sim \mathrm{Bin} \left (\tilde{z}_{s}^{(3,4)}, 1-e^{-e^{-\gamma \kappa}} \right ), \quad \Delta\widetilde{ Conf}^{(W)}_{s} \sim \mathrm{Bin} \left (\tilde{F}^{(W)}_{s}, 1-e^{- \kappa} \right ), \\
	& \tilde{F}^{(W)}_{s+1} = \tilde{F}^{(W)}_{s} + \Delta\tilde{F}^{(W)}_{s} - \Delta\widetilde{ Conf}^{(W)}_{s}.
	\end{align}
	\item The new confirmed cases by date internationally $\Delta\widetilde{ Conf}^{(T)}_{s}$ for $s =1,\dots,t$:
	\begin{align}
	& \tilde{F}^{(T)}_0 = 0, \\
	& \Delta\tilde{F}^{(T)}_{s} \sim \mathrm{Bin} \left (\tilde{z}_{s}^{(7,8)}, 1-e^{-e^{-\gamma \kappa}} \right ), \quad \Delta\widetilde{ Conf}^{(T)}_{s} \sim \mathrm{Bin} \left (\tilde{F}^{(T)}_{s}, 1-e^{- \kappa} \right ), \\
	& \tilde{F}^{(T)}_{s+1} = \tilde{F}^{(T)}_{s} + \Delta\tilde{F}^{(T)}_{s} - \Delta\widetilde{ Conf}^{(T)}_{s}.
	\end{align}
	
\end{enumerate}

Figure \ref{fig:covid_kucharski-noflights} shows the results for the SMC algorithm \cite{kucharski2020early} applied to the same data as our method, i.e. with the evacuation flights data left out of the analysis. Therefore figure \ref{fig:covid_kucharski-noflights} can be compared directly against our figure  \ref{fig:covid} - see discussion in the main part of the paper.

%without the rescue flights data.
%The main difference between our approach and the one in \cite{kucharski2020early} relies on the use of a Binomial distribution for the former and a Poisson for the latter. This might be the reason why our estimates for the middle and lower plots are generally lower compared to the ones in \cite{kucharski2020early}. Indeed, the Poisson distribution has no upper limit and it can compensate low sampled values. Another difference can be found in the way in which we compute $\tilde{y}^{(7,8)}, \Delta\widetilde{ Conf}^T_s$, which are reported in the right hand side of the middle and lower plots. Indeed, in \cite{kucharski2020early} $\tilde{y}^{(7,8)}, \Delta\widetilde{ Conf}^T_s$ are scaled by a coefficient that is the sum of the relatively probability of export to different countries. 

\begin{figure}[httb!]
	\centering
	\resizebox{\columnwidth}{!}{
		\includegraphics[scale = 1]{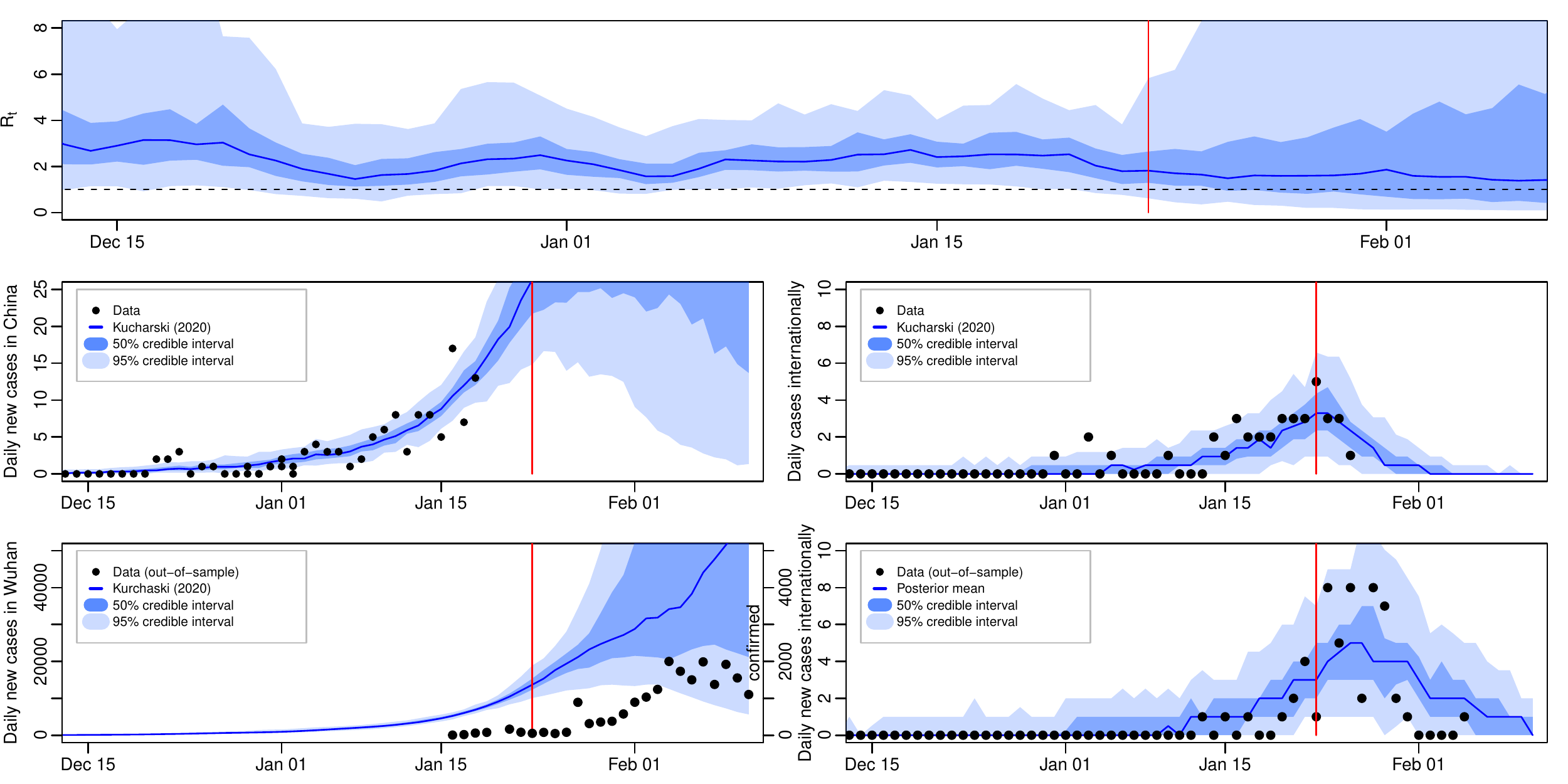}
	}
	\caption{Results for the COVID-19 model using \cite{kucharski2020early} methods without rescue flights data. Red line is date at which travel restrictions were introduced. Top: estimated reproduction number. Middle row:  estimated daily new confirmed cases in Wuhan (left) and internationally (right), both with in-sample data by date of symptom onset. Bottom row, left: estimated new symptomatic but possibly unconfirmed cases (left axis) and out-of-sample new confirmed cases data (right axis); right: estimated confirmed international cases by date of confirmation, and out-of-sample data.} \label{fig:covid_kucharski-noflights}
\end{figure}

%\bibliography{SEIRbiblio}
%\bibliographystyle{plainnat}

\end{document}